\documentclass[11pt,a4paper]{article}
\usepackage{float}
\makeatletter
\gdef\@fpheader{Prepared for submission to JHEP}
\gdef\@journal{jhep}

\newif\ifnotoc\notocfalse
\newif\ifemailadd\emailaddfalse
\newif\iftoccontinuous\toccontinuousfalse
\newif\ifnatbibsort\natbibsorttrue

\RequirePackage{amsmath}
\RequirePackage{amssymb}
\RequirePackage{epsfig}
\RequirePackage{graphicx}
\ifnatbibsort\RequirePackage[numbers,sort&compress]{natbib}\else\RequirePackage[numbers,compress]{natbib}\fi
\RequirePackage{color}
\RequirePackage[colorlinks=true
,urlcolor=blue
,anchorcolor=blue
,citecolor=blue
,filecolor=blue
,linkcolor=blue
,menucolor=blue
,pagecolor=blue
,linktocpage=true
,pdfproducer=medialab
,pdfa=true
]{hyperref}

\def\@subheader{\@empty}
\def\@keywords{\@empty}
\def\@abstract{\@empty}
\def\@xtum{\@empty}
\def\@dedicated{\@empty}
\def\@arxivnumber{\@empty}
\def\@collaboration{\@empty}
\def\@collaborationImg{\@empty}
\def\@proceeding{\@empty}
\def\@preprint{\@empty}

\newcommand{\subheader}[1]{\gdef\@subheader{#1}}
\newcommand{\keywords}[1]{\if!\@keywords!\gdef\@keywords{#1}\else%
\PackageWarningNoLine{\jname}{Keywords already defined.\MessageBreak Ignoring last definition.}\fi}
\renewcommand{\abstract}[1]{\gdef\@abstract{#1}}
\newcommand{\dedicated}[1]{\gdef\@dedicated{#1}}
\newcommand{\arxivnumber}[1]{\gdef\@arxivnumber{#1}}
\newcommand{\proceeding}[1]{\gdef\@proceeding{#1}}
\newcommand{\xtumfont}[1]{\textsc{#1}}
\newcommand{\correctionref}[3]{\gdef\@xtum{\xtumfont{#1} \href{#2}{#3}}}
\newcommand\jname{JHEP}

\newcommand\preprint[1]{\gdef\@preprint{\hfill #1}}

\newcommand\note[2][]{%
\if!#1!%
\stepcounter{footnote}\footnotetext{#2}%
\else%
{\renewcommand\thefootnote{#1}%
\footnotetext{#2}}%
\fi}

\newtoks\auth@toks
\renewcommand{\author}[2][]{%
  \if!#1!%
    \auth@toks=\expandafter{\the\auth@toks#2\ }%
  \else
    \auth@toks=\expandafter{\the\auth@toks#2$^{#1}$\ }%
  \fi
}

\newtoks\affil@toks\newif\ifaffil\affilfalse
\newcommand{\affiliation}[2][]{%
\affiltrue
  \if!#1!%
    \affil@toks=\expandafter{\the\affil@toks{\item[]#2}}%
  \else
    \affil@toks=\expandafter{\the\affil@toks{\item[$^{#1}$]#2}}%
  \fi
}

\newtoks\email@toks\newcounter{email@counter}%
\newcommand{\emailAdd}[1]{%
\emailaddtrue%
\ifnum\theemail@counter>0\email@toks=\expandafter{\the\email@toks, \@email{#1}}%
\else\email@toks=\expandafter{\the\email@toks\@email{#1}}%
\fi\stepcounter{email@counter}}
\newcommand{\@email}[1]{\href{mailto:#1}{\tt #1}}

\newcommand*\collaboration[1]{\gdef\@collaboration{#1}}
\newcommand*\collaborationImg[2][]{\gdef\@collaborationImg{#2}}

\newcommand\afterLogoSpace{\smallskip}
\newcommand\afterSubheaderSpace{\vskip3pt plus 2pt minus 1pt}
\newcommand\afterProceedingsSpace{\vskip21pt plus0.4fil minus15pt}
\newcommand\afterTitleSpace{\vskip23pt plus0.06fil minus13pt}
\newcommand\afterRuleSpace{\vskip23pt plus0.06fil minus13pt}
\newcommand\afterCollaborationSpace{\vskip3pt plus 2pt minus 1pt}
\newcommand\afterCollaborationImgSpace{\vskip3pt plus 2pt minus 1pt}
\newcommand\afterAuthorSpace{\vskip5pt plus4pt minus4pt}
\newcommand\afterAffiliationSpace{\vskip3pt plus3pt}
\newcommand\afterEmailSpace{\vskip16pt plus9pt minus10pt\filbreak}
\newcommand\afterXtumSpace{\par\bigskip}
\newcommand\afterAbstractSpace{\vskip16pt plus9pt minus13pt}
\newcommand\afterKeywordsSpace{\vskip16pt plus9pt minus13pt}
\newcommand\afterArxivSpace{\vskip3pt plus0.01fil minus10pt}
\newcommand\afterDedicatedSpace{\vskip0pt plus0.01fil}
\newcommand\afterTocSpace{\bigskip\medskip}
\newcommand\afterTocRuleSpace{\bigskip\bigskip}
\newlength{\affiliationsSep}
\newcommand\beforetochook{\pagestyle{myplain}\pagenumbering{roman}}

\DeclareFixedFont\trfont{OT1}{phv}{b}{sc}{11}

\renewcommand\maketitle{
\pagestyle{empty}
\thispagestyle{titlepage}
\setcounter{page}{0}
\noindent{\small\scshape\@fpheader}\@preprint\par
\afterLogoSpace
\if!\@subheader!\else\noindent{\trfont{\@subheader}}\fi
\afterSubheaderSpace
\if!\@proceeding!\else\noindent{\sc\@proceeding}\fi
\afterProceedingsSpace
{\LARGE\flushleft\sffamily\bfseries\@title\par}
\afterTitleSpace
\hrule height 1.5\p@%
\afterRuleSpace
\if!\@collaboration!\else
{\Large\bfseries\sffamily\raggedright\@collaboration}\par
\afterCollaborationSpace
\fi
\if!\@collaborationImg!\else
{\normalsize\bfseries\sffamily\raggedright\@collaborationImg}\par
\afterCollaborationImgSpace
\fi
{\bfseries\raggedright\sffamily\the\auth@toks\par}
\afterAuthorSpace
\ifaffil\begin{list}{}{%
\setlength{\leftmargin}{0.28cm}%
\setlength{\labelsep}{0pt}%
\setlength{\itemsep}{\affiliationsSep}%
\setlength{\topsep}{-\parskip}}
\itshape\small%
\the\affil@toks
\end{list}\fi
\afterAffiliationSpace
\ifemailadd 
\noindent\hspace{0.28cm}\begin{minipage}[l]{.9\textwidth}
\begin{flushleft}
\textit{E-mail:} \the\email@toks
\end{flushleft}
\end{minipage}
\else 
\PackageWarningNoLine{\jname}{E-mails are missing.\MessageBreak Plese use \protect\emailAdd\space macro to provide e-mails.}
\fi
\afterEmailSpace
\if!\@xtum!\else\noindent{\@xtum}\afterXtumSpace\fi
\if!\@abstract!\else\noindent{\renewcommand\baselinestretch{.9}\textsc{Abstract:}}\ \@abstract\afterAbstractSpace\fi
\if!\@keywords!\else\noindent{\textsc{Keywords:}} \@keywords\afterKeywordsSpace\fi
\if!\@arxivnumber!\else\noindent{\textsc{ArXiv ePrint:}} \href{http://arxiv.org/abs/\@arxivnumber}{\@arxivnumber}\afterArxivSpace\fi
\if!\@dedicated!\else\vbox{\small\it\raggedleft\@dedicated}\afterDedicatedSpace\fi
\ifnotoc\else
\iftoccontinuous\else\newpage\fi
\beforetochook\hrule
\tableofcontents
\afterTocSpace
\hrule
\afterTocRuleSpace
\fi
\setcounter{footnote}{0}
\pagestyle{myplain}\pagenumbering{arabic}
} 

\renewcommand{\baselinestretch}{1.1}\normalsize
\divide\@tempdima\baselineskip
\@tempcnta=\@tempdima
\skip\@mpfootins = \skip\footins
\renewcommand{\@dotsep}{10000}

\newcommand\ps@myplain{
\pagenumbering{arabic}
\renewcommand\@oddfoot{\hfill-- \thepage\ --\hfill}
\renewcommand\@oddhead{}}
\let\ps@plain=\ps@myplain

\newcommand\ps@titlepage{\renewcommand\@oddfoot{}\renewcommand\@oddhead{}}

\numberwithin{equation}{section}

\renewcommand\section{\@startsection{section}{1}{\z@}%
                                   {-3.5ex \@plus -1.3ex \@minus -.7ex}%
                                   {2.3ex \@plus.4ex \@minus .4ex}%
                                   {\normalfont\large\bfseries}}
\renewcommand\subsection{\@startsection{subsection}{2}{\z@}%
                                   {-2.3ex\@plus -1ex \@minus -.5ex}%
                                   {1.2ex \@plus .3ex \@minus .3ex}%
                                   {\normalfont\normalsize\bfseries}}
\renewcommand\subsubsection{\@startsection{subsubsection}{3}{\z@}%
                                   {-2.3ex\@plus -1ex \@minus -.5ex}%
                                   {1ex \@plus .2ex \@minus .2ex}%
                                   {\normalfont\normalsize\bfseries}}
\renewcommand\paragraph{\@startsection{paragraph}{4}{\z@}%
                                   {1.75ex \@plus1ex \@minus.2ex}%
                                   {-1em}%
                                   {\normalfont\normalsize\bfseries}}
\renewcommand\subparagraph{\@startsection{subparagraph}{5}{\parindent}%
                                   {1.75ex \@plus1ex \@minus .2ex}%
                                   {-1em}%
                                   {\normalfont\normalsize\bfseries}}

\def\fnum@figure{\textbf{\figurename\nobreakspace\thefigure}}
\def\fnum@table{\textbf{\tablename\nobreakspace\thetable}}

\long\def\@makecaption#1#2{%
  \vskip\abovecaptionskip
  \sbox\@tempboxa{\small #1. #2}%
  \ifdim \wd\@tempboxa >\hsize
    \small #1. #2\par
  \else
    \global \@minipagefalse
    \hb@xt@\hsize{\hfil\box\@tempboxa\hfil}%
  \fi
  \vskip\belowcaptionskip}

\renewenvironment{thebibliography}[1]{%
\begin{oldthebibliography}{#1}%
\small%
\raggedright%
\setlength{\itemsep}{5pt plus 0.2ex minus 0.05ex}%
}%
{%
\end{oldthebibliography}%
}
\makeatother

\usepackage{comment}
\usepackage{amsthm,booktabs,mathtools,tikz,array,longtable}
\usetikzlibrary{arrows.meta,positioning,calc}
\newtheorem{theorem}{Theorem}[section]

\theoremstyle{definition}

\newcommand{\Tri}{\operatorname{Tri}}
\newcommand{\Span}{\operatorname{span}}

\newcommand{\Rz}{\mathcal R}
\newcommand{\Pz}{\mathcal P}
\newcommand{\Cz}{\mathcal C}
\newcommand{\Xz}{\mathcal X}
\newcommand{\Dz}{\mathcal D}
\newcommand{\K}{\mathcal K}

\let\originaltableofcontents\tableofcontents
\renewcommand{\tableofcontents}{\begingroup\small\originaltableofcontents\endgroup}
\renewcommand{\afterTocRuleSpace}{\clearpage}
\title{From Hidden Zeros to Exact Splittings and New Zeros}
\author[a,b]{Yang Li}
\author[c]{Laurentiu Rodina}
\affiliation[a]{Van Swinderen Institute for Particle Physics and Gravity, University of Groningen, 9747 AG Groningen, The Netherlands}
\affiliation[b]{Department of Physics, Technion -- Israel Institute of Technology, Haifa 3200003, Israel}
\affiliation[c]{Beijing Institute of Mathematical Sciences and Applications (BIMSA), Beijing 101408, China}

\emailAdd{leey0342@gmail.com}
\emailAdd{laurentiu.rodina@gmail.com}

\abstract{
Hidden zeros impose constraints on scattering amplitudes
beyond their familiar factorization properties.
We explore their structure in two related directions.
We first promote hidden zeros to explicit splitting formulas
that organize full amplitudes at generic kinematics.
Ordered $\operatorname{Tr}(\phi^3)$ amplitudes, nonlinear sigma model (NLSM) amplitudes,
and cosmological wavefunctions share
the same defining exact splitting structure
$\mathcal A=\sum_\alpha R_\alpha\,
 \mathcal J_{\alpha,L}\mathcal J_{\alpha,R}
 \mathcal J_{\alpha,\mathrm{out}}$, adapted to any hidden zero. Each prefactor $R_\alpha$ is a condition of the hidden zero locus, so this decomposition manifests the hidden vanishing term by term. The remaining factors are lower point amplitudes or currents. Using these representations,
we can systematically find new zero loci beyond the usual ones. In particular, we obtain five new all-multiplicity zero families. We also find continuous families of  zeros:
the coefficients of their defining equations can vary
continuously while the amplitude remains zero.
}
\keywords{Scattering Amplitudes, Scalar Field Theories, Amplitude zeros, New representations of amplitudes}
\hypersetup{pdftitle={From Hidden Zeros to Exact Splittings and New Zeros},pdfsubject={Hidden zeros as generic splitting formulas and new maximal linear zero families in cubic theory, the nonlinear sigma model, and tree wavefunctions}}
\begin{document}
\maketitle
\raggedbottom

\section{Introduction}

Locality and unitarity constrain scattering amplitudes through their poles:
when an intermediate particle goes on shell, the residue factorizes into
lower-point amplitudes. Hidden zeros provide complementary information.
They specify kinematic loci on which the full amplitude vanishes even
though every propagator remains finite. Their discovery in particle and
string amplitudes \cite{HiddenZeros} has prompted a broader study of
kinematic cancellations and the structures they reveal.

Positive geometry and related algebraic constructions provide a
geometric setting for studying amplitude poles, zeros and factorization
\cite{Arkani-Hamed:2013jha,Arkani-Hamed:2017tmz,ABHY,Gao:2017dek,Arkani-Hamed:2019vag,Arkani-Hamed:2019mrd,Arkani-Hamed:2019plo,CosmologicalPolytopes,Arkani-Hamed:2024jbp,Glew:2025otn,Ardila-Mantilla:2026cbo,Bartsch:2025mvy,Telen:2025ajb,Bruser:2025ezv}.
Complementary explanations and constructions of hidden zeros use
diagrammatic methods, amplitude relations and recursion
\cite{Zhou:2024ddy,PionFactorization,Zhou:2026fkg,Bartsch:2024amu,Li:2024qfp,Huang:2025blb,Feng:2025ofq,NLSMZeroRecursion,Zhou:2026isc,Gomez:2025tqx}.
Extensions to other theories and observables, together with related
surface constructions and relations between theories, have broadened
these questions across field theory, string theory and cosmology
\cite{NLSMShift,Arkani-Hamed:2024yvu,Paranjape:2025wjk,Dong:2024klq,Brauner:2025rzv,Cao:2026bog,Bartsch:2025loa,Bartsch:2026mnl,SurfaceCounting,SurfaceMultiplicity,Arkani-Hamed:2024vna,De:2024wsy,Arkani-Hamed:2024pzc,Arkani-Hamed:2023jry,Arkani-Hamed:2024tzl,Backus:2025orx,Backus:2025njt,Dong:2025lrf,Carrolo:2025jca,Laddha:2024qtn,Zhou:2025tvq,Zhang:2026dcm,CarrilloGonzalez:2026lnu,Zhang:2026wye,Cao:2025hio,Zhang:2025zjx,Chang:2025cqe,Cao:2025lzv,Arkani-Hamed:2024nzc,Figueiredo:2025fnr,Gomez:2026exq,HiddenCosmologicalZeros,Cao:2024bky,He:2026dql}. S-matrix bootstrap studies use zeros, splitting and
complementary constraints to restrict spectra and effective couplings
\cite{Cheung:2024uhn,Berman:2025owb,Cheung:2025tbr,Saha:2026ftv,Wan:2026pjq,Shao:2026akl,Basile:2026gnd,Cheung:2025nhw,Cheung:2026lpv,Elvang:2026pmc,Berman:2026ezk,Sung:2026euf}.
Hidden zeros were also found to strongly constrain amplitudes, leading to uniqueness theorems and the emergence of both locality and unitarity
\cite{RodinaUV,Rodina:2026yaq,Backus:2025hpn}.
For local amplitudes and cosmological wavefunction, they are equivalent
to dual shuffle factorization: a decomposition into order-preserving
shuffles of lower-point local building blocks
\cite{CosmologicalWavefunctions}. These results suggest that hidden zeros are much more fundamental and powerful than naicely would expect.

Of particular relevance for this current work are splitting formulas 
\cite{SmoothSplitting,TwoSplit,UniversalSplittings,Guevara:2024nxd,Zhang:2024efe,Feng:2025dci,Azevedo:2025vxo,AllOrderSplits,Zhou:2025xly,Bartsch:2024ofb,Zhou:2026ukg,GimenezUmbert:2025ech,Early:2025ivr}.
These formulas usually describe amplitudes on special kinematic
loci, including partially relaxed hidden zeros, or their behavior
in  soft limits. On the splitting loci, the
product identities are exact, although kinematic conditions remain
imposed. A notable exception is the exact splitting
construction \cite{JonesParanjape}, which
systematically relaxes hidden-zero conditions through on-shell
recursion. Their cubic boundary-row formula already holds at
generic kinematics for skinny rectangles; for wider rectangles,
it retains the conditions in the other rows.

In this work, we promote hidden zeros to a universal exact
splitting structure shared by ordered $\operatorname{Tr}(\phi^3)$
amplitudes, nonlinear sigma model (NLSM) amplitudes and cosmological wavefunctions. In all three settings, the
full function takes the form
\begin{equation}\label{eq:intro-universal-hosf}
 \boxed{\mathcal F=\sum_\alpha R_\alpha\,
 \mathcal J_{\alpha,L}\mathcal J_{\alpha,R}
 \mathcal J_{\alpha,\mathrm{out}}.}
\end{equation}
The $R_\alpha$ are linear conditions defining the chosen zero, and
each current contains a complete lower-point amplitude or lower tree
function, together with its attaching propagator or its NLSM
projection. These identities hold at generic kinematics, with every zero condition relaxed simultaneously.  We therefore obtain a new definition for amplitudes in these theories, built around manifesting each hidden zero. This turns the hidden zero from a simple yet mysterious property of amplitudes, to a universal constructive principle.

For cubic amplitudes, this result generalizes the formula found in \cite{JonesParanjape}, which applies only to ''skinny zeros". Writing
$X_{ij}=(p_i+\cdots+p_{j-1})^2$, define
\begin{equation}\label{eq:intro-cubic-current}
 \mathcal J^{\phi^3}[S]
 :=\frac{A^{\phi^3}[S]}{X_{s_1s_m}},\qquad
 S=(s_1,\ldots,s_m),\quad m\geq3,
 \qquad \mathcal J^{\phi^3}[n,1]:=1.
\end{equation}
For every $3\leq j\leq n-1$, the exact splitting formula
reads
\begin{equation}\label{eq:intro-hosf}
 A_n^{\phi^3}
 =\sum_{a=1}^{j-2}\sum_{k=j+1}^{n}R_{ak}\,
 \mathcal J^{\phi^3}[a,\ldots,j]\,
 \mathcal J^{\phi^3}[j-1,\ldots,k]\,
 \mathcal J^{\phi^3}[k,\ldots,n,1,\ldots,a],
\end{equation}
where $R_{ak}=X_{aj}+X_{j-1,k}-X_{a,k}$ and $X_{1n}=0$.
Its $(j-2)(n-j)$ numerators form a basis of the chosen rectangular
zero conditions: each independent condition multiplies one product
of complete lower amplitudes.

For even $n$, the formula gives $A_n^{\mathrm{NLSM}}$ by replacing
$\mathcal J^{\phi^3}$ with $\mathcal J^{\mathrm{NLSM}}$:
\begin{equation}\label{eq:intro-nlsm-projection}
 \mathcal J^{\phi^3}[S]
 \longmapsto\mathcal J^{\mathrm{NLSM}}[S]=
 \begin{cases}
 \displaystyle\frac{A^{\mathrm{NLSM}}[S]}{X_{s_1s_m}},
       &m\ \text{even},\\[6pt]
 (-1)^{s_1}A^{\mathrm{ext}}[S],&m\ \text{odd}.
 \end{cases}
\end{equation}
 The mixed amplitude $A^{\mathrm{ext}}$ has three scalars
and otherwise pions \cite{ExtendedNLSM}. Thus the cubic and NLSM formulas
have identical sums and zero numerators, with the theory dependence
carried entirely by the currents.

For cosmological wavefunctions, we derive
an exact splitting formula of the form~\eqref{eq:intro-universal-hosf} for rooted
blob zeros on arbitrary trees. Its three currents contain complete
functions on two subtrees meeting at the root and on the contracted
outside tree. The chain specialization recovers
eq.~\eqref{eq:intro-hosf}, extending the same construction from
ordered amplitudes to more general graphs.

The lower factors in each term allow provide a systematic way to discover new hidden zero loci. Setting every $R_\alpha$ to zero recovers the
original zero. Alternatively, a zero of a lower function removes
every term containing it, leaving fewer prefactors to constrain.
We use this mechanism to obtain five new all-multiplicity zero families
in cubic theory and determine their NLSM
counterparts. We also find non-blob zeros on branched trees. Finite brute force searches for zeros provide further examples whose general patterns remain to be understood.

We also uncover a drastically different kind of zeros: the coefficients of the defining linear equations
can vary continuously while the amplitude remains zero. For example, we find
$A_5^{\phi^3}$ vanishes on
\begin{equation}\label{eq:intro-continuous-five}
 \boxed{\begin{gathered}
 X_{24}=-\lambda X_{13},\qquad X_{25}=X_{14},\\
 (2-\lambda)X_{35}=\lambda(X_{13}+X_{14}).
 \end{gathered}}
\end{equation}
The exact splitting formula carries then this continuous family to higher points. 

Hidden zeros thus organize the full amplitude at generic kinematics, extending their significance beyond the special loci where the amplitude vanishes. The resulting splitting formulas reveal further families of zeros, showing that the zero structure of amplitudes is far richer than the familiar rectangular patterns suggest.

Section~\ref{sec:warmup} reviews the kinematics, hidden zeros and
earlier splitting formulas. Sections~\ref{sec:seed},
\ref{sec:wavefunctions} and \ref{sec:nlsm} develop the cubic, tree
and NLSM exact splitting formulas. Sections~\ref{sec:newzeros} and
\ref{sec:continuouszeros} present the new zero families and the
continuous examples. Section~\ref{sec: conclusion} discusses open
questions. Proofs and splitting formulas adapted to some of the
new zeros appear in appendix~\ref{app:proofs}. Further checks of
regularity and the distinction from ordinary rectangular zeros
appear in appendix~\ref{app:cubic-regularity}.

\section{Kinematics, hidden zeros and splitting formulas: a review}
\label{sec:warmup}

We review the kinematic conventions, known hidden zeros and the
on-shell splitting construction of ref.~\cite{JonesParanjape}.

\subsection{Hidden zeros of cubic amplitudes, the NLSM and tree wavefunctions}
\label{subsec:warmup-zeros}

\paragraph{Ordered cubic amplitudes.}
For massless momenta with $\sum_{i=1}^n p_i=0$, define
\begin{equation}
 X_{ij}=(p_i+\cdots+p_{j-1})^2,\qquad
 c_{ij}=-2p_i\cdot p_j.
 \label{eq:warmup-kinematics}
\end{equation}
The indices are cyclic,  and $X_{i,i}=X_{i,i+1}=0$.
 The remaining $d_n=n(n-3)/2$
planar channels are independent coordinates. 

For a cyclically ordered polygon $P$, the cubic amplitude is
\begin{equation}\label{eq:amplitude}
 A^{\phi^3}[P]=\sum_{T\in\Tri(P)}\prod_{e\in T}\frac1{X_e},
 \qquad A_3^{\phi^3}=1.
\end{equation}
Here $\Tri(P)$ is the set of all triangulations of $P$, and
each internal diagonal $e$ supplies one propagator. We abbreviate
$A^{\phi^3}[P]$ as $A[P]$ in the cubic sections and write
$A_n=A[1,\ldots,n]$.

Every lower factor is the complete amplitude of its indicated
ordered subpolygon. For $P=(i_1,\ldots,i_m)$, an internal
diagonal joining $i_a$ to $i_b$ carries the parent channel
$X_{i_a i_b}$. A boundary edge of $P$ supplies no propagator,
even when it is a diagonal of the parent polygon.

The elementary rectangle relation is
\begin{equation}
 c_{ij}=X_{ij}+X_{i+1,j+1}-X_{i,j+1}-X_{i+1,j}.
 \label{eq:warmup-rectangle}
\end{equation}
In the kinematic mesh, $X$'s label vertices and $c$'s label elementary
plaquettes \cite{ABHY,HiddenZeros}. Summing
eq.~\eqref{eq:warmup-rectangle} over a rectangle
cancels the interior $X$'s and leaves its four corners.
For the maximal rectangle based at $X_{1j}$, the two side corners
are boundary variables and vanish. Consequently,
\begin{equation}
 \sum_{u=1}^{j-2}\sum_{v=j}^{n-1}c_{uv}
 =X_{1j}+X_{j-1,n}.
 \label{eq:warmup-full-rectangle}
\end{equation}
The associated hidden zero is
\begin{equation}
 \left.A_n^{\phi^3}\right|_{\substack{c_{uv}=0\\
  1\leq u\leq j-2,\ j\leq v\leq n-1}}=0,
 \qquad 3\leq j\leq n-1.
 \label{eq:warmup-cubic-zero}
\end{equation}
These are the ordinary rectangular hidden zeros
\cite{HiddenZeros}. 

Equivalently, write the cyclic ordering as $(A,j-1,B,n)$, with
$A=\{1,\ldots,j-2\}$ and $B=\{j,\ldots,n-1\}$.
The distinguished legs are $j-1$ and $n$, and the zero sets
all cross contractions $p_a\cdot p_b=0$, $a\in A$, $b\in B$.
A \emph{skinny} rectangle has one row or one column.
After cyclic relabelling it can be represented by $j=3$.
A non-skinny rectangle has both $j-2>1$ and $n-j>1$.
For example, the six-point representatives are
\begin{equation}
 \begin{array}{ll}
 \text{skinny:}&c_{13}=c_{14}=c_{15}=0,\\[2pt]
 \text{non-skinny:}&c_{14}=c_{15}=c_{24}=c_{25}=0.
 \end{array}
 \label{eq:warmup-six-rectangles}
\end{equation}
These zeros arise from cancellations away from factorization poles.
The simplest example appears for the four-point amplitude
\begin{equation}
 A_4^{\phi^3}
 =\frac1{X_{13}}+\frac1{X_{24}}
 =\frac{c_{13}}{X_{13}X_{24}},
 \qquad c_{13}=X_{13}+X_{24}.
 \label{eq:warmup-four-cubic}
\end{equation}
so that setting $c_{13}=0$ forces $A_4^{\phi^3}=0$.

\paragraph{The NLSM from the $\delta$-shift.}
At even multiplicity the NLSM is obtained from the cubic amplitude
by the large-$\delta$ shift \cite{NLSMShift}. Shifting every $X_{ij}$ by
\begin{equation}
 \begin{aligned}
 X_{ij}^{(\delta)}&=X_{ij}+\delta\eta_{ij},\qquad
 \eta_{ij}=
 \begin{cases}
  +1,&i,j\ \text{even},\\
  -1,&i,j\ \text{odd},\\
  0,&i,j\ \text{of opposite parity},
 \end{cases}
 \end{aligned}
 \label{eq:warmup-delta}
\end{equation}
we obtain the NLSM amplitude as the leading order 
\begin{equation}
 A_n^{\mathrm{NLSM}}(X)=\lim_{\delta\to\infty}\delta^{n-2}
     A_n^{\phi^3}(X^{(\delta)}).
 \end{equation}
We write $A^{\mathrm{NLSM}}[P]$ for the complete ordered pion
amplitude, using the same inherited-channel convention. Our
normalization for cyclically ordered labels is
\begin{equation}\label{eq:nlsmfour}
 A_4^{\mathrm{NLSM}}[a,b,c,d]=-(X_{ac}+X_{bd}),
\end{equation}
where channel endpoints are unordered. In particular, the shift gives
\begin{equation}
 A_4^{\mathrm{NLSM}}
 =\lim_{\delta\to\infty}\delta^2
 \left(\frac1{X_{13}-\delta}+\frac1{X_{24}+\delta}\right)
 =-(X_{13}+X_{24})=-c_{13}.
 \label{eq:warmup-four-nlsm}
\end{equation}
Every $c_{uv}$ inside the rectangle in
eq.~\eqref{eq:warmup-cubic-zero} is invariant under this deformation:
the shifts of its four corner variables cancel in
eq.~\eqref{eq:warmup-rectangle}. Thus, if those $c$'s vanish before
the shift, they vanish for every $\delta$. The shifted cubic
amplitude is identically zero, and its leading large-$\delta$
coefficient gives
\begin{equation}
 \left.A_n^{\mathrm{NLSM}}\right|_{\substack{c_{uv}=0\\
  1\leq u\leq j-2,\ j\leq v\leq n-1}}=0,
 \qquad n\ \text{even}.
 \label{eq:warmup-nlsm-zero}
\end{equation}
The projection of a splitting formula also depends on its lower
factors. Even-point factors give NLSM amplitudes. The odd-point
factors used here give amplitudes $A^{\mathrm{ext}}$ with three
biadjoint scalars and the remaining particles pions
\cite{ExtendedNLSM,NLSMShift}. The normalization is
$A_3^{\mathrm{ext}}=1$.

\paragraph{Cosmological tree wavefunctions.}
Cosmological tree wavefunctions provide a natural extension of
our question: do hidden zeros and their splitting formulas persist
when a chain is replaced by a branched tree? For our purposes,
the scalar tree functions entering these wavefunctions generalize
ordered $\operatorname{Tr}(\phi^3)$ amplitudes through the kinematic
map described below. Detailed accounts of the wavefunctions,
their graph representations and their zeros can be found in
refs.~\cite{CosmologicalPolytopes,Arkani-Hamed:2024jbp,HiddenCosmologicalZeros,CosmologicalWavefunctions}.
We recall only the ingredients needed here.

Start with an arbitrary tree $G$ representing a cosmological
Feynman diagram, with its external legs suppressed. Its vertices
are interactions and its edges are internal lines. Write $V$ for
its vertex set and $N=|V|$.  The graph's contribution is organized by its \emph{maximal
tubings}. A proper tube is a connected set $S\subsetneq V$ with
$|S|\geq2$. Tubes are compatible when they are nested or disjoint. A tubing
is a collection of compatible tubes, and it is maximal when no
further proper tube can be added. After removing the universal
single-vertex and total-energy factors, the scalar tree coefficient
is a sum over maximal tubings, with each tube supplying a factor $1/E_S$.

The kinematic map of ref.~\cite{CosmologicalWavefunctions}
replaces these tube energies by Mandelstam invariants. Assign
a massless momentum to each vertex and add an auxiliary massless
momentum $p_q$, with $q\notin V$, imposing
\begin{equation}
 s_S=\left(\sum_{a\in S}p_a\right)^2,\qquad
 \sum_{a\in V}p_a+p_q=0,\qquad
 s_{\{a\}}=s_V=0,\qquad c_{ab}=-2p_a\cdot p_b.
 \label{eq:warmup-tree-map}
\end{equation}
With $\operatorname{MT}(G)$ denoting the maximal tubings, the
resulting on-shell object is
\begin{equation}
 \psi_G(s)=
 \sum_{\tau\in\operatorname{MT}(G)}
       \prod_{S\in\tau}\frac1{s_S},
 \qquad \psi_G=1\quad\text{for }|V|\leq2.
 \label{eq:warmup-tree-sum}
\end{equation}
Singletons and the full vertex set supply no propagators.
Figure~\ref{fig:warmup-tubings} shows individual terms of this sum.

\begin{figure}[htbp]
\centering
\begin{tikzpicture}[x=1cm,y=1cm,
 vertex/.style={circle,draw=black,fill=white,minimum size=4.4mm,
                inner sep=0pt,font=\small},
 outer tube/.style={draw=blue!65!black,fill=blue!5,
                    rounded corners=5pt,thick},
 inner tube/.style={draw=red!70!black,fill=red!5,
                    rounded corners=5pt,thick},
 every node/.style={font=\small}]
 \begin{scope}
  \node at (1.35,1.85) {(a) Nested tubes};
  \path[outer tube] (-0.45,-0.1) rectangle (2.25,0.9);
  \path[inner tube] (-0.28,0.12) rectangle (1.18,0.68);
  \draw[thick] (0,0.4)--(2.7,0.4);
  \foreach \v/\x in {1/0,2/0.9,3/1.8,4/2.7}
   \node[vertex] at (\x,0.4) {$\v$};
  \node at (1.35,-0.9) {$\displaystyle\frac{1}{s_{12}s_{123}}$};
 \end{scope}
 \begin{scope}[xshift=4.5cm]
  \node at (1.35,1.85) {(b) Disjoint tubes};
  \path[outer tube] (-0.28,0.12) rectangle (1.18,0.68);
  \path[inner tube] (1.52,0.12) rectangle (2.98,0.68);
  \draw[thick] (0,0.4)--(2.7,0.4);
  \foreach \v/\x in {1/0,2/0.9,3/1.8,4/2.7}
   \node[vertex] at (\x,0.4) {$\v$};
  \node at (1.35,-0.9) {$\displaystyle\frac{1}{s_{12}s_{34}}$};
 \end{scope}
 \begin{scope}[xshift=10.4cm]
  \node at (-0.1,1.85) {(c) A branched tree};
  \path[outer tube] (-1.35,-0.48) rectangle (0.48,1.35);
  \path[inner tube] (-1.18,-0.28) rectangle (0.28,0.28);
  \draw[thick] (-0.9,0)--(0.9,0) (0,0)--(0,0.95);
  \node[vertex] at (-0.9,0) {$1$};
  \node[vertex] at (0,0.95) {$2$};
  \node[vertex] at (0,0) {$3$};
  \node[vertex] at (0.9,0) {$4$};
  \node at (-0.1,-0.9) {$\displaystyle\frac{1}{s_{13}s_{123}}$};
 \end{scope}
\end{tikzpicture}
\caption{Examples of maximal tubings and their contributions to
 $\psi_G(s)$. Each outline encloses one tube. Panels (a) and (b)
 show two of the five maximal tubings of the chain $1-2-3-4$;
 panel (c) shows one of the six for the star with edges $13,23,34$.
 Each tubing has two proper tubes. The singleton and full-graph
 factors have been stripped.}
\label{fig:warmup-tubings}
\end{figure}

As shown in ref.~\cite{CosmologicalWavefunctions}, this is precisely
the Cachazo--He--Yuan (CHY) integral whose integrand is the square of the Cayley
function associated with $G$ \cite{Gao:2017dek}. Thus the map
places the tree wavefunction in an on-shell construction that
generalizes ordered cubic amplitudes. The simplest graph, a chain
$P_m:1-2-\cdots-m$, gives the ordinary amplitude:
\begin{equation}\label{eq:pathisamplitude}
 \psi_{P_m}=A_{m+1},\qquad
 s_{\{i,\ldots,j\}}=X_{i,j+1},\qquad q=m+1.
\end{equation}
Throughout, $\psi_G$ denotes this stripped, kinematically mapped
function. The map is distinct from
taking a total-energy residue.

The corresponding \emph{blob zeros} were introduced in
ref.~\cite{CosmologicalWavefunctions}, extending the cosmological
zeros of ref.~\cite{HiddenCosmologicalZeros}. For the rooted
version used here, choose a vertex $r$ and partition the connected
components of $G-r$ into two nonempty collections with vertex
unions $A$ and $B$. Then
\begin{equation}
 \left.\psi_G(s)\right|_{c_{ab}=0}=0, \quad \ a\in A,\ b\in B
 \label{eq:warmup-blob}
\end{equation}
 The conditions set every
momentum contraction between the two sides to zero; the two
distinguished labels are $r$ and the auxiliary label $q$.
For the chain $P_{n-1}$, choosing $r=j-1$ recovers the ordinary
rectangular zero~\eqref{eq:warmup-cubic-zero}. In particular,
\begin{equation}
 \psi_{1-2-3}=\frac1{s_{12}}+\frac1{s_{23}}
 =\frac{c_{13}}{s_{12}s_{23}},
 \label{eq:warmup-path-seed}
\end{equation}
which is the four-point cancellation~\eqref{eq:warmup-four-cubic}.
Branching replaces the two intervals on either side of the root
by collections of branches, giving the zeros whose generic
splittings we study below.

For later use, the rectangle relation in these variables is
\begin{equation}
 s_I+s_J-s_{I\cup J}-s_{I\cap J}
 =\sum_{a\in I\setminus J}\sum_{b\in J\setminus I}c_{ab},
 \label{eq:warmup-tree-rectangle}
\end{equation}
with $s_\varnothing=0$. For connected rooted pieces
$\{r\}\cup I$ and $\{r\}\cup J$, with $I\subseteq A$ and
$J\subseteq B$, this reduces to
\begin{equation}
 s_{\{r\}\cup I}+s_{\{r\}\cup J}
 -s_{\{r\}\cup I\cup J}
 =\sum_{a\in I}\sum_{b\in J}c_{ab}.
 \label{eq:warmup-blob-rectangle}
\end{equation}
For $I=A$, $J=B$, the union is $V$ and the last $s$ vanishes.
This is the tree counterpart of eq.~\eqref{eq:warmup-full-rectangle}.

In graph subscripts a dash denotes an edge, parentheses indicate
branches, and $[U]$ denotes a contracted vertex; for example,
$\psi_{2-3-(4,6)}$ has edges $23,34,36$. Concatenated labels denote
sets, so $s_{134}=s_{\{1,3,4\}}$. The digits in $222$, $2222$ and
$223$ count the vertices on each arm, including the common centre.
Thus $222$ has four vertices, while $2222$ and $223$ each have five;
the auxiliary momentum adds one kinematic point.

\subsection{Linear, regular, and maximal zeros}
\label{subsec:zero-splitting-conventions}

We study zeros of these amplitudes on loci defined by homogeneous
linear relations among the planar variables $X_{ij}$. Such a locus
$Z$ is a \emph{regular linear zero} if $A_n|_Z=0$ as a rational
identity and no physical channel vanishes identically on $Z$.
The latter condition ensures that the restriction is well defined
at generic points of $Z$, so the vanishing describes a cancellation
away from factorization poles. We characterize $Z$ by its full space of defining linear
equations, $W_Z$, and its rank $r(Z)=\dim W_Z$, which counts the number of independent
kinematic constraints.

We count a regular linear zero as \emph{maximal} when it is not properly contained in any other regular linear zero at the same multiplicity. In other words, an additional condition imposed on a maximal zero defines a restricted locus, which we generally ignore. 

\subsection{On-shell construction and the scope of row relaxation}
\label{subsec:warmup-recursion}

\paragraph{A zero-preserving on-shell deformation.}
The splitting relations in ref.~\cite{JonesParanjape}
are derived by a BCFW-style contour argument \cite{BCFW}. For the corner choice $(X_{1j},c_{1,n-1})$, introduce a complex
parameter $z$ and set
\begin{equation}
 \widehat X_{ab}(z)=
 \begin{cases}
  X_{1b}-z,&a=1,\quad j\leq b\leq n-1,\\
  X_{an}+z,&b=n,\quad 2\leq a\leq j-1,\\
  X_{ab},&\text{otherwise}.
 \end{cases}
 \label{eq:warmup-bcfw-shift}
\end{equation}
Substitution in eq.~\eqref{eq:warmup-rectangle} verifies that every
$c_{uv}$ in the chosen rectangle is unchanged.
Define $\widehat A_n^{\phi^3}(z)
=A_n^{\phi^3}(\widehat X(z))$.
For this cubic deformation the large-$z$ behavior is
$O(z^{-2})$. Consequently Cauchy's theorem and the additional residue relation give
\begin{equation}
 A_n^{\phi^3}
 =-\sum_{z_\alpha}
   \frac{\operatorname{Res}_{z=z_\alpha}\widehat A_n^{\phi^3}(z)}
        {z_\alpha},
 \qquad
 \sum_{z_\alpha}
   \operatorname{Res}_{z=z_\alpha}\widehat A_n^{\phi^3}(z)=0.
 \label{eq:warmup-residue-theorems}
\end{equation}
The sum runs over the nonzero finite pole positions of the shifted
amplitude. Each residue is fixed by ordinary tree factorization.
The second equality, often called a bonus relation, permits one
residue to be eliminated from the first equality.

First keep only $c_{1,n-1}$ relaxed inside the rectangle.
All residues except those at $z=X_{1j}$ and $z=-X_{j-1,n}$
then contain a lower amplitude on a hidden zero.
Using the bonus relation and the surviving factorization gives
\begin{align}
 \left.A_n^{\phi^3}\right|_{\substack{
 c_{uv}=0\ {\rm in\ the\ rectangle}\\(u,v)\ne(1,n-1)}}
 &=
 \left(\frac1{X_{1j}}+\frac1{X_{j-1,n}}\right)
 A^{\phi^3}[1,\ldots,j]A^{\phi^3}[j-1,\ldots,n]
 \notag\\
 &=\frac{c_{1,n-1}}{X_{1j}X_{j-1,n}}\,
 A^{\phi^3}[1,\ldots,j]A^{\phi^3}[j-1,\ldots,n].
 \label{eq:warmup-corner-splitting}
\end{align}
The second line uses eq.~\eqref{eq:warmup-full-rectangle} on this
slice. Setting the last relaxed variable to zero recovers the
hidden zero itself. 

\paragraph{Relaxing a boundary row.}
Restoring further $c_{1v}$ introduces additional residues.
Iterating lower-point splittings and combining these residues gives
the boundary-row exact splitting formula of ref.~\cite{JonesParanjape}.
In our notation its domain is
\begin{equation}
 c_{uv}=0,\qquad 2\leq u\leq j-2,\quad j\leq v\leq n-1;
 \qquad c_{1v}\ \text{arbitrary}.
 \label{eq:warmup-row-slice}
\end{equation}
On precisely this slice the formula reads
\begin{align}
 \left.A_n^{\phi^3}\right|_{\eqref{eq:warmup-row-slice}}
 ={}&
 \frac{\displaystyle\sum_{v=j}^{n-1}c_{1v}}
      {X_{1j}X_{j-1,n}}\,
 A^{\phi^3}[1,\ldots,j]A^{\phi^3}[j-1,\ldots,n]
 \notag\\
 &+\sum_{k=j+1}^{n-1}
 \frac{\displaystyle\sum_{v=j}^{k-1}c_{1v}}
      {X_{1j}X_{1,k}X_{j-1,k}}
 \notag\\[-1mm]
 &\qquad\times
 A^{\phi^3}[1,\ldots,j]\,
 A^{\phi^3}[j-1,\ldots,k]\,
 A^{\phi^3}[k,\ldots,n,1].
 \label{eq:warmup-jp-row}
\end{align}

Related cubic recursions use one-parameter kinematic deformations
and admit associahedral triangulation interpretations
\cite{HeYang,YangTriangulations}.

\paragraph{NLSM.}
For the NLSM, ref.~\cite{JonesParanjape} derives an
all-multiplicity formula expressing the amplitude as a sum
of products of lower-point pion amplitudes. It allows some
conditions defining a hidden zero to be relaxed, while
retaining others and imposing additional kinematic relations.
The formula therefore holds on a restricted locus, rather
than at generic kinematics.

\section{A fully relaxed splitting formula for \texorpdfstring{$\operatorname{Tr}(\phi^3)$}{Tr(phi3)}}\label{sec:seed}
We now extend the boundary-row identity reviewed in section~\ref{sec:warmup}
to fully relaxed kinematics. If a zero is defined by $r$ independent
linear equations, a splitting whose numerators span its equation
space requires at least $r$ factorized terms. We call a formula with
exactly $r$ terms \emph{strict}. The ${\rm Tr}(\phi^3)$ splitting formula below attains
this bound at every multiplicity. After deriving the corresponding
wavefunction formulas (section \ref{sec:wavefunctions}) and NLSM formulas (section \ref{sec:nlsm}), we use their lower factors to
construct further zeros in section~\ref{sec:newzeros}.

\subsection{The general exact splitting formula for ${\rm Tr}(\phi^3)$}

For the rectangle in eq.~\eqref{eq:warmup-cubic-zero},
introduce the cumulative relations
\begin{equation}\label{eq:cumulative}
 R_{ak}=X_{aj}+X_{j-1,k}-X_{a,k}
 =\sum_{u=a}^{j-2}\sum_{v=j}^{k-1}c_{uv}.
\end{equation}
These relations span the equation space of the rectangular zero.
Combine each complete lower amplitude with its attaching
propagator into a cubic current,
\begin{equation}\label{eq:cubic-dressed-current}
 \mathcal J^{\phi^3}[p_1,\ldots,p_m]
 :=\frac{A[p_1,\ldots,p_m]}{X_{p_1p_m}},
 \qquad m\geq3,
 \qquad \mathcal J^{\phi^3}[n,1]:=1.
\end{equation}
The unit convention accounts for the full cell
$(a,k)=(1,n)$, where there is no outside amplitude or attaching
propagator. In particular, the vanishing boundary channel
$X_{1n}$ never appears in a denominator.

\begin{theorem}[Generic ${\rm Tr}(\phi^3)$ exact splitting formula]\label{thm:rooted}
For $n\geq4$ and $3\leq j\leq n-1$, the ordered cubic amplitude
at generic kinematics obeys
\begin{equation}\label{eq:rooted}\boxed{
 A_n=\sum_{a=1}^{j-2}\sum_{k=j+1}^{n}R_{ak}\,
 \mathcal J^{\phi^3}[a,\ldots,j]\,
 \mathcal J^{\phi^3}[j-1,\ldots,k]\,
 \mathcal J^{\phi^3}[k,\ldots,n,1,\ldots,a]}.
\end{equation}
The identity has $(j-2)(n-j)$ complete-product terms. The
numerators $R_{ak}$ form a basis of the rectangular zero
conditions, and all denominators are physical channels.
\end{theorem}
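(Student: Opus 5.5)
Before the identity itself, two bookkeeping claims in the theorem need checking. The $R_{ak}$ form a basis of the rectangular zero conditions. By eq.~\eqref{eq:cumulative}, $R_{ak}$ is the sum of $c_{uv}$ over the subrectangle $u\ge a$, $v\le k-1$. Taking the mixed finite difference in $a$ and $k$ inverts this: $c_{a,k-1}=R_{ak}-R_{a+1,k}-R_{a,k-1}+R_{a+1,k-1}$, with the convention $R=0$ when $a=j-1$ or $k=j$. The map between the $R$'s and the $c$'s is therefore unitriangular, and there are $(j-2)(n-j)$ of each. The denominators are physical channels: the three currents carry $X_{aj}$, $X_{j-1,k}$ and $X_{ka}$. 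Each of these is a nonboundary diagonal, except in the case $(a,k)=(1,n)$, which is handled by the unit convention. The inherited channels inside each lower amplitude are physical as well.

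For the identity I would proceed by induction on $n$, using the zero-preserving shift~\eqref{eq:warmup-bcfw-shift} together with the residue formula~\eqref{eq:warmup-residue-theorems}. The base case $n=4$, $j=3$ is eq.~\eqref{eq:warmup-four-cubic}, since $R_{14}=X_{13}+X_{24}=c_{13}$. For the inductive step I apply the corner shift and write $A_n$ as the sum of residues at the poles of $\widehat A_n$. These poles are at $\widehat X_{1b}=0$ for $j\le b\le n-1$ and at $\widehat X_{an}=0$ for $2\le a\le j-1$. Each residue factorizes into a product of lower amplitudes. On a pole $\widehat X_{1b}=0$ with $b>j$, the left factor $A[1,\ldots,b]$ still contains a hidden rectangle with respect to $j$, and the right factor is $A[b,\ldots,n,1]$. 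I would expand the left factor with the inductive hypothesis, applied with the same $j$ but at lower multiplicity. The analogous step applies to the poles $\widehat X_{an}=0$, where the right factor $A[a,\ldots,n]$ is expanded.

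The key structural point is how the pieces recombine. After the inner expansion, the inner outside-current $\mathcal J[k,\ldots,b,1,\ldots,a]$ is glued to the outer factor $A[b,\ldots,n,1]/X_{1b}$. Summing over $b$ from $k$ to $n$, the resulting combination should reassemble into the full current $\mathcal J[k,\ldots,n,1,\ldots,a]$. This step is itself a factorization-type identity, so it can be established by an independent BCFW/partial-fraction argument in the single variable $z$. Meanwhile the shifted numerators $\widehat R_{ak}$ evaluated at the pole positions combine with the factor $1/z_\alpha$. Because every $c_{uv}$ in the rectangle is invariant under the shift, $\widehat R_{ak}=R_{ak}$, and this is what prevents the numerators from mixing. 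The terms with $a=1$ or $k=n$ arise from the ``corner'' poles at $z=X_{1j}$ and $z=-X_{j-1,n}$. For these I use the bonus relation, exactly as in the derivation of eq.~\eqref{eq:warmup-corner-splitting}, turning $1/X_{1j}+1/X_{j-1,n}$-type combinations into $R$ numerators.

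An alternative I would keep as a cross-check is a pole-matching argument. Both sides of eq.~\eqref{eq:rooted} are rational functions with only simple poles at planar $X$'s and with compatible pole structure. I would therefore show that their residues agree on every $X_e=0$, using induction and factorization of the lower currents. I would also check that the difference vanishes at infinity under a generic shift. Residues on channels $X_{aj}$, $X_{j-1,k}$ or $X_{ka}$ reduce directly to the theorem at lower points. The residues on channels internal to the three sub-polygons follow from factorization of the currents. The main obstacle in either route is the residue on a channel that crosses the cut between regions, i.e.\ one not contained in any single sub-polygon of a given term. These residues must cancel in the sum over $(a,k)$, and the cancellation has to come from the linear relations among the $R$'s such as $R_{ak}-R_{a,k'}=X_{j-1,k}-X_{j-1,k'}-X_{ak}+X_{ak'}$. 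I would first verify this cancellation explicitly at $n=5,6$ to find the telescoping pattern in $a$ and $k$ separately, and then prove it in general by summing over one index at a time.
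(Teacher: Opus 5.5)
Your preliminary checks are fine: the $R_{ak}$ and $c_{uv}$ are related by a unitriangular finite difference, the attaching channels are physical except at $(1,n)$, and $n=4$ is eq.~\eqref{eq:warmup-four-cubic}. But neither route proves the identity, because the step that carries all the content is only asserted.

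In the BCFW route, the proposed reassembly does not hold as stated, for two reasons.
\begin{itemize}
\item Evaluate the inductive expansion of $A[1,\ldots,b]$ at $z=X_{1b}$. Its row-$1$ left current is $A[1,\ldots,j]/(X_{1j}-X_{1b})$, which depends on $b$ and has a non-physical pole. So for $a=1$ you must rebuild the product $\mathcal J[1,\ldots,j]\,\mathcal J[k,\ldots,n,1]$, not the outside current alone.
\item For $a\ge2$, the current $\mathcal J[k,\ldots,n,1,\ldots,a]$ also contains the shifted channels $X_{a'n}$ with $2\le a'\le a$. It therefore needs contributions from the $\widehat X_{a'n}$ poles as well, not only a sum over $b$.
\end{itemize}

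The corners are also mishandled. Row $a=1$ and column $k=n$ do not come from the corner poles alone; for example, cell $(1,k)$ also receives the residues at $z=X_{1b}$ for $b\ge k$. Conversely, the corner residues $A[1,\ldots,j]\,\widehat A[j,\ldots,n,1]$ and $A[j-1,\ldots,n]\,\widehat A[n,1,\ldots,j-1]$ do not collapse into a $1/X_{1j}+1/X_{j-1,n}$ combination. That collapse in eq.~\eqref{eq:warmup-corner-splitting} happened only because on that slice just two residues survive. Distributing the corner residues over the cells requires the skinny formula for the polygons $(1,j-1,j,\ldots,n)$ and $(1,\ldots,j,n)$. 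For $j=3$ (resp.\ $j=n-1$) that polygon is the whole $n$-gon, so your induction is circular exactly there.

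Your second route defers the same content: the cancellation of crossing contributions is left as a telescoping pattern to be discovered at $n=5,6$.

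The BCFW idea can be completed by applying Cauchy's theorem to the shifted right-hand side itself. Let $F(z)$ be the shifted right-hand side minus $\widehat A_n(z)$.
\begin{itemize}
\item Its only possible poles are at $z=X_{1b}$ and $z=-X_{an}$, and $F\to0$ at infinity. Every cell contains a current decaying like $z^{-1}$. For $a\ge2$, $k<n$ this holds because the triangle on the edge $(n,1)$ of $[k,\ldots,n,1,\ldots,a]$ forces a shifted diagonal.
\item At $z=X_{1b}$ with $b>j$, exactly the cells with $k\le b$ are singular. Their outside currents factorize into $\mathcal J[k,\ldots,b,1,\ldots,a]\,\widehat A[b,\ldots,n,1]$, with the unit current at $(1,b)$. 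Since $\widehat X_{1b}=0$ there, $R_{1b}$ reduces to the child's full-cell numerator. The residue is therefore $\widehat A[b,\ldots,n,1]$ times the inductive formula for $A[1,\ldots,b]$ with the same $j$.
\item The poles $z=-X_{an}$ with $a\le j-2$ work the same way. For $4\le j\le n-2$, the corners use the lower skinny formulas above.
\item In the skinny cases every cell contains two currents with shifted attaching channels. Together with $\widehat A_n=O(z^{-2})$, this gives $F=O(z^{-2})$, and the vanishing of its total residue fixes the one circular corner.
\end{itemize}
Then $F$ is pole-free and vanishes at infinity, so $F\equiv0$. This would be a genuinely different, inductive proof.

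The paper argues non-inductively through theorem~\ref{thm:rootedtree}. It expands each cell with eq.~\eqref{eq:treelocal}. It then pairs each negative crossing monomial with a unique positive one by absorbing the sibling $K$ of the contracted leaf into $L$ or $R$. Finally it identifies the surviving monomials bijectively with maximal tubings, which are triangulations for a path. That argument needs no large-$z$ input and covers branched trees at once.
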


Equation~\eqref{eq:rooted} makes the chosen zero an organizing
principle for a splitting of the full amplitude. Each independent
zero condition is paired with one product of complete lower
amplitudes, so the splitting is strict. 
Setting all $R_{ak}=0$ exposes the zero term by term.

\paragraph{Matrix form.}\label{sec:hosf-zero-organization}
The above formula also has a simple matrix representation. We indexed rows by $a=1,\ldots,j-2$ and columns
by $k=j+1,\ldots,n$. Collect the shared currents into vectors
\begin{equation}\label{eq:zero-factor-organization}
 \ell_a=\mathcal J^{\phi^3}[a,\ldots,j],\qquad
 r_k=\mathcal J^{\phi^3}[j-1,\ldots,k].
\end{equation}
The matrix coupling them contains the zero numerator and the
outside current:
\begin{equation}\label{eq:hosf-corner-entry}
 \mathsf K^{(j)}_{ak}
 =R_{ak}\,\mathcal J^{\phi^3}[k,\ldots,n,1,\ldots,a].
\end{equation}
The exact splitting formula is then simply
\begin{equation}
A_n=\boldsymbol\ell^{\mathsf T}\mathsf K^{(j)}\boldsymbol r.
\end{equation}
To display the entries explicitly, abbreviate
$\mathcal J_{uv}:=\mathcal J^{\phi^3}[u,\ldots,v]$, with the
labels taken in cyclic order and $\mathcal J_{n1}=1$. Then
\begin{equation}\label{eq:hosf-matrix}
 A_n=
 \begin{pmatrix}
  \mathcal J_{1j}&\cdots&\mathcal J_{j-2,j}
 \end{pmatrix}
 \begin{pmatrix}
  R_{1,j+1}\mathcal J_{j+1,1}
   &\cdots&R_{1n}\\
  \vdots&\ddots&\vdots\\
  R_{j-2,j+1}\mathcal J_{j+1,j-2}
   &\cdots&R_{j-2,n}\mathcal J_{n,j-2}
 \end{pmatrix}
 \begin{pmatrix}
  \mathcal J_{j-1,j+1}\\
  \vdots\\
  \mathcal J_{j-1,n}
 \end{pmatrix}.
\end{equation}
For a skinny rectangle, coincident first and last rows or
columns are included only once.
Equivalently, its individual terms form the rectangular matrix
\begin{equation}\label{eq:hosf-term-array}
 \mathsf H^{(j)}_{ak}=\ell_a\mathsf K^{(j)}_{ak}r_k,
 \qquad
 A_n=\sum_{a=1}^{j-2}\sum_{k=j+1}^{n}\mathsf H^{(j)}_{ak}.
\end{equation}
Every entry is one complete exact splitting summand. The current $\ell_a$
is common to row $a$, and $r_k$ to column $k$; the outside
current depends on both indices. This organization also survives
the NLSM projection and will be used to construct new zeros.

\paragraph{Proof and cancellations.}
Later, we will see that, Theorem~\ref{thm:rooted} is a path specialization of Theorem~\ref{thm:rootedtree} (see examples in
section~\ref{sec:treeexamples}). A combinatorial proof of Theorem~\ref{thm:rootedtree} is given in appendix~\ref{sec:treeproof}.

On the boundary-row slice~\eqref{eq:warmup-row-slice}, all terms
with $a\geq2$ vanish and $R_{1k}=\sum_{v=j}^{k-1}c_{1v}$, reducing
eq.~\eqref{eq:rooted} to eq.~\eqref{eq:warmup-jp-row}.

The exact splitting formula reproduces the amplitude by cancelling contributions
with incompatible channels. For a proper cell, the elementary identity is
\begin{equation}\label{eq:local}
 \frac{X_{aj}+X_{j-1,k}-X_{a,k}}
      {X_{aj}X_{j-1,k}X_{a,k}}
 =
 \frac1{X_{j-1,k}X_{a,k}}
 +\frac1{X_{aj}X_{a,k}}
 -\frac1{X_{aj}X_{j-1,k}}.
\end{equation}
For example, at five points,
\begin{align}\label{eq:crossingexample}
 A_5={}&
 \frac{X_{13}+X_{25}}{X_{13}X_{25}}
 A[1,2,3]A[2,3,4,5]\notag\\
 &+\frac{X_{13}+X_{24}-X_{14}}{X_{13}X_{24}X_{14}}
 A[1,2,3]A[2,3,4]A[4,5,1].
\end{align}
Using $A_3=1$ and $A[2,3,4,5]=1/X_{24}+1/X_{35}$,
the first term contains $+1/(X_{13}X_{24})$, which is not
a physical diagram because the two diagonals cross.
The second term contains its negative. These contributions
cancel, leaving exactly the five physical diagrams.
Thus individual exact splitting terms need not be sums of physical
diagrams, although their sum is the full amplitude.
Appendix~\ref{sec:treeproof} proves this cancellation at
arbitrary multiplicity.

\subsection{Six-point examples: skinny and non-skinny zeros}\label{sec:sixseeds}
At six points the nine independent planar channels are
\begin{equation}\label{eq:sixcoords}
 X_{13},\ X_{14},\ X_{15},\ X_{24},\ X_{25},\ X_{26},\ X_{35},\ X_{36},\ X_{46}.
\end{equation}
The full amplitude has fourteen cubic diagrams. The rooted identity collects them into three or four terms, according to the chosen rectangle.

The left skinny rectangle, denoted $H_L$, has equations
\begin{equation}
 H_L:\quad X_{13}+X_{26}=X_{13}+X_{24}-X_{14}=X_{13}+X_{25}-X_{15}=0.
\end{equation}
Its rooted identity is
\begin{equation}\label{eq:HL}
\begin{aligned}
A_6={}&\frac{X_{13}+X_{26}}{X_{13}X_{26}}A[2,3,4,5,6]
 \\
 &+\frac{X_{13}+X_{24}-X_{14}}{X_{13}X_{24}X_{14}}A[1,4,5,6]
 \\
 &+\frac{X_{13}+X_{25}-X_{15}}{X_{13}X_{25}X_{15}}A[2,3,4,5].
\end{aligned}
\end{equation}
Equation~\eqref{eq:HL} is the six-point $j=3$ case of the
boundary-row formula~\eqref{eq:warmup-jp-row}
\cite{JonesParanjape}. The three numerators are
$c_{13}+c_{14}+c_{15}$, $c_{13}$ and $c_{13}+c_{14}$, respectively.
The first term contains a complete five-point amplitude and the
other two contain complete four-point amplitudes.

The non-skinny rectangle has two rows and two columns. Its four independent equations can be written as
\begin{equation}\label{eq:HFzero}
 H_F:\quad X_{14}+X_{36}=X_{14}+X_{35}-X_{15}=X_{24}+X_{35}-X_{25}=X_{24}+X_{36}-X_{26}=0.
\end{equation}
Taking $j=4$ in eq.~\eqref{eq:rooted} gives
\begin{equation}\label{eq:HF}
\begin{aligned}
A_6={}&\frac{X_{14}+X_{36}}{X_{14}X_{36}}A[1,2,3,4]A[3,4,5,6]
 \\
 &+\frac{X_{14}+X_{35}-X_{15}}{X_{14}X_{35}X_{15}}A[1,2,3,4]
 \\
 &+\frac{X_{24}+X_{35}-X_{25}}{X_{24}X_{35}X_{25}}A[1,2,5,6]
 \\
 &+\frac{X_{24}+X_{36}-X_{26}}{X_{24}X_{36}X_{26}}A[3,4,5,6].
\end{aligned}
\end{equation}
The four displayed numerators are, in order,
$c_{14}+c_{15}+c_{24}+c_{25}$, $c_{14}+c_{24}$, $c_{24}$ and
$c_{24}+c_{25}$. They span the non-skinny rectangle
$c_{14}=c_{15}=c_{24}=c_{25}=0$.

On the boundary-row slice $c_{24}=c_{25}=0$, the last two terms
vanish and eq.~\eqref{eq:HF} reduces to the $n=6$, $j=4$ specialization of
eq.~\eqref{eq:warmup-jp-row}.
The last two terms, proportional to $c_{24}$ and $c_{24}+c_{25}$,
are absent on that slice and required at generic kinematics;
no lower amplitude is common to all four summands.

\section{The \texorpdfstring{$\delta$}{delta}-shift and exact splitting in the NLSM}
\label{sec:nlsm}

The $\delta$-shift carries the generic cubic splitting into the
NLSM, preserving its rectangular arrangement and explicit zero
numerators.

\subsection{One formula in terms of projected currents}
\label{subsec:nlsm-hosf}

In this section write $\mathcal J=\mathcal J^{\mathrm{NLSM}}$.
For an ordered polygon $S=(s_1,\ldots,s_m)$ with $m\geq3$
occurring in the exact splitting formula, keep its displayed ordering and
distinguish its closing parent channel $X_{s_1s_m}$. Define
\begin{equation}\label{eq:nlsm-projected-current}
 \boxed{\mathcal J[S]=
 \begin{cases}
 \displaystyle\frac{A^{\mathrm{NLSM}}[S]}{X_{s_1s_m}},
       &m\ \text{even},\\[7pt]
 (-1)^{s_1}A^{\mathrm{ext}}[S],&m\ \text{odd}.
 \end{cases}}
 \qquad \mathcal J[n,1]:=1.
\end{equation}
As in the cubic formula, the unit convention represents the
absence of an outside amplitude and attaching propagator at
the full cell $(a,k)=(1,n)$.

The endpoint labels are the original labels of the even-point
parent polygon. In the displayed exact splitting lists their parities
alternate, so the closing endpoints have opposite parity for
even $m$ and the same parity for odd $m$. In the odd case the
three scalar legs of $A^{\mathrm{ext}}[S]$ are
$s_1,s_{m-1},s_m$, in this scalar-trace order.

\begin{theorem}[Fully relaxed NLSM exact splitting formula]
\label{thm:nlsmroot}
For every even $n\geq4$ and $3\leq j\leq n-1$, the complete
NLSM amplitude obeys
\begin{equation}\label{eq:nlsm-current-hosf}
 A_n^{\mathrm{NLSM}}
 =\sum_{a=1}^{j-2}\sum_{k=j+1}^{n}R_{ak}\,
 \mathcal J[a,\ldots,j]\,
 \mathcal J[j-1,\ldots,k]\,
 \mathcal J[k,\ldots,n,1,\ldots,a].
\end{equation}
The numerators $R_{ak}$ are those of eq.~\eqref{eq:cumulative},
and the identity holds at generic planar kinematics.
\end{theorem}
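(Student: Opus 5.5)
The natural route is to apply the $\delta$-shift \eqref{eq:warmup-delta} to both sides of the cubic identity of Theorem~\ref{thm:rooted} and extract the coefficient of $\delta^{-(n-2)}$. Theorem~\ref{thm:rooted} holds at generic kinematics, so it remains true after substituting $X\to X^{(\delta)}$ for every $\delta$. By definition, the left-hand side times $\delta^{n-2}$ tends to $A_n^{\mathrm{NLSM}}$. Three facts then carry the argument. First, the numerators $R_{ak}$ are $\delta$-invariant: by \eqref{eq:cumulative} each is a sum of $c_{uv}$ inside the rectangle, and each such $c_{uv}$ is shift-invariant, as noted before \eqref{eq:warmup-nlsm-zero}. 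Equivalently, the parity shifts of $X_{aj}$, $X_{j-1,k}$ and $X_{ak}$ cancel. Second, each term is a product of three cubic currents on subpolygons whose vertex counts $m_1,m_2,m_3$ satisfy $m_1+m_2+m_3=n+4$. This holds because the three polygons $[a..j]$, $[j-1..k]$ and $[k..n,1..a]$ share the vertices $a,j-1,j,k$. It suffices to show that each shifted current scales as $\delta^{-(m-2)}$ times its projected current, up to a uniform sign. Then each term scales as $\delta^{-(n+4-6)}=\delta^{-(n-2)}$, exactly as needed, and the identity follows term by term in the limit.

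The core step is the scaling of a single current $\mathcal J^{\phi^3}[S]=A[S]/X_{s_1s_m}$ under the parent shift. The shift is defined through the parities of the parent labels. Restricted to an ordered subpolygon $S=(s_1,\dots,s_m)$, I will check that it coincides with the $\delta$-shift of the subpolygon: in the exact splitting lists consecutive labels alternate in parity except across the closing pair. For even $m$ the alternation makes the inherited $\eta$'s match the intrinsic ones, possibly up to an overall sign of $\delta$. Hence $A[S](X^{(\delta)})\sim\delta^{-(m-2)}A^{\mathrm{NLSM}}[S]$. The closing channel $X_{s_1s_m}$ has endpoints of opposite parity, so it is unshifted, which gives $\mathcal J\to\delta^{-(m-2)}A^{\mathrm{NLSM}}[S]/X_{s_1s_m}$. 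For odd $m$ the endpoints $s_1,s_m$ share a parity, so $X_{s_1s_m}^{(\delta)}=X_{s_1s_m}\pm\delta$ and contributes $\pm\delta^{-1}$. The subpolygon amplitude is then an odd-point shifted cubic amplitude, whose leading behaviour is the extended amplitude with scalars at $s_1,s_{m-1},s_m$, scaling as $\delta^{-(m-3)}$ \cite{NLSMShift,ExtendedNLSM}. The product again scales as $\delta^{-(m-2)}$, with the sign $\pm$ fixed by the parity of $s_1$. This sign is the source of the factor $(-1)^{s_1}$. The unit current $\mathcal J[n,1]=1$ corresponds to $m=2$ and is trivially consistent.

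I expect the main obstacle to be the bookkeeping of signs and of the $\eta$ convention. Two things must be settled. The first is that the subpolygon's inherited shift equals the intrinsic shift up to $\delta\to-\delta$, which affects the sign $(-1)^{m-2}$ (trivial for even $m$) and the placement of the scalars. The second is that the normalization \eqref{eq:nlsmfour} and $A_3^{\mathrm{ext}}=1$ make all sign factors combine to exactly $(-1)^{s_1}$ per odd current with no residual global sign. In each term the odd currents come in pairs or none, because $m_1+m_2+m_3$ is even. I would fix these signs by checking $n=4$ directly, where $A_4^{\mathrm{NLSM}}=-c_{13}$ and $R_{14}=c_{13}$ with three-point currents. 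I would then cross-check $n=6$, $j=3,4$ against \eqref{eq:HL} and \eqref{eq:HF}.
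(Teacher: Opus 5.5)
Your proposal is correct and follows the paper's own proof. You apply the $\delta$-shift to the cubic identity of Theorem~\ref{thm:rooted} and use shift-invariance of $R_{ak}$. You use the size count $m_1+m_2+m_3=n+4$, with the unit outside current at the full cell. And you use the current limit $\mathcal J[S]=\lim_{\delta\to\infty}\delta^{m-2}A_m^{\phi^3}[S](X^{(\delta)})/X^{(\delta)}_{s_1s_m}$, whose odd-$m$ closing channel produces $(-1)^{s_1}$.

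Your check that the inherited parent shift equals the intrinsic subpolygon shift up to $\delta\to-\delta$ is harmless, since the leading powers $m-2$ (even $m$) and $m-3$ (odd $m$) are both even. It spells out a step the paper only asserts. Because the projected currents are defined as exactly these limits, there is no residual sign to fix, so the $n=4$ and $n=6$ cross-checks are optional.
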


In the matrix form~\eqref{eq:hosf-matrix}, replace each cubic
current $\mathcal J^{\phi^3}[S]$ in $\boldsymbol\ell$,
$\boldsymbol r$ and $\mathsf K$ by its projected current
$\mathcal J[S]$. The unit outside current remains one, so
$\mathsf K^{(j)}_{1n}=R_{1n}$ automatically. Thus the NLSM
formula preserves the rows, columns and zero numerators of
the cubic exact splitting formula, while its lower factors become pion or mixed
amplitudes. Setting every $R_{ak}$ to zero exposes the same
rectangular zero term by term.
\subsection{Derivation from the \texorpdfstring{$\delta$}{delta}-shift}
\label{subsec:nlsm-derivation}

For the inherited polygons in the exact splitting formula, the lower-point limits
under the parent shift~\eqref{eq:warmup-delta} are
\begin{equation}
\begin{aligned}
 A_m^{\phi^3}[S](X^{(\delta)})
 &=\delta^{-(m-2)}A_m^{\mathrm{NLSM}}[S]
   +O(\delta^{-(m-1)}), &&m\ \text{even},\\
 A_m^{\phi^3}[S](X^{(\delta)})
 &=\delta^{-(m-3)}A_m^{\mathrm{ext}}[S]
   +O(\delta^{-(m-2)}), &&m\ \text{odd}.
\end{aligned}
\label{eq:mixedblock}
\end{equation}
These limits use the inherited endpoint parities and the scalar
assignment above \cite{NLSMShift}. They concern the displayed
exact splitting polygons, rather than arbitrary parity assignments on
lower polygons.

For even $m$, the closing channel is unshifted. For odd $m$,
$X_{s_1s_m}^{(\delta)}=X_{s_1s_m}+(-1)^{s_1}\delta$;
its inverse supplies the extra power $\delta^{-1}$ and the
sign in eq.~\eqref{eq:nlsm-projected-current}. Consequently
both cases have the same current limit:
\begin{equation}\label{eq:nlsm-current-limit}
 \boxed{\mathcal J[S]=\lim_{\delta\to\infty}
 \delta^{m-2}\,
 \frac{A_m^{\phi^3}[S](X^{(\delta)})}
      {X_{s_1s_m}^{(\delta)}}.}
\end{equation}
This explains the term \emph{projected cubic current}: the
complete amplitude and its attaching propagator are projected
together. The attaching channel is a parent channel, although
it is a boundary edge of the lower polygon.

\begin{proof}[Proof of theorem~\ref{thm:nlsmroot}]
Group each lower amplitude in the cubic exact splitting formula
with its attaching propagator. For a proper cell $(a,k)$, the
three resulting currents belong to the inherited polygons
$[a,\ldots,j]$, $[j-1,\ldots,k]$, and
$[k,\ldots,n,1,\ldots,a]$. Their sizes sum to $n+4$, so the
powers $m-2$ in eq.~\eqref{eq:nlsm-current-limit} sum to
$n-2$. At the corner $(1,n)$, the outside current is one;
the sizes of the two remaining polygons sum to $n+2$ and
give the same power. Each $R_{ak}$ is invariant under the
shift by eq.~\eqref{eq:warmup-rectangle}. Multiplying the
cubic identity by $\delta^{n-2}$ and taking the limit
cell by cell therefore gives eq.~\eqref{eq:nlsm-current-hosf}.
\end{proof}

\subsection{Six-point skinny and non-skinny examples}
\label{subsec:nlsm-six-examples}

\paragraph{Skinny zero.}
For $j=3$, the three relaxed variables are $c_{13},c_{14},c_{15}$.
Equation~\eqref{eq:nlsm-current-hosf} gives
\begin{align}
 A_6^{\mathrm{NLSM}}[1,\ldots,6]
 ={}&-(c_{13}+c_{14}+c_{15})A_5^{\mathrm{ext}}[2,3,4,5,6]
 \notag\\
 &-\frac{c_{13}}{X_{14}}A_4^{\mathrm{NLSM}}[4,5,6,1]
 +\frac{c_{13}+c_{14}}{X_{25}}A_4^{\mathrm{NLSM}}[2,3,4,5].
 \label{eq:nlsmskinny}
\end{align}
The three terms correspond, respectively, to $(a,k)=(1,6),(1,4),(1,5)$.
The complete five-point mixed amplitude is
\begin{align}
 A_5^{\mathrm{ext}}[2,3,4,5,6]
 &=1+\frac{A_4^{\mathrm{NLSM}}[2,3,4,5]}{X_{25}}
      +\frac{A_4^{\mathrm{NLSM}}[3,4,5,6]}{X_{36}}
 \notag\\
 &=1-\frac{X_{24}+X_{35}}{X_{25}}
      -\frac{X_{35}+X_{46}}{X_{36}}.
 \label{eq:mixedfive}
\end{align}
Its scalar legs are $2,5,6$; legs $3,4$ are pions.
Substitution and collection of identical four-point factors gives
\begin{align}
 A_6^{\mathrm{NLSM}}
 ={}&-(c_{13}+c_{14}+c_{15})
 -\frac{c_{13}}{X_{14}}A_4^{\mathrm{NLSM}}[4,5,6,1]
 \notag\\
 &-\frac{c_{15}}{X_{25}}A_4^{\mathrm{NLSM}}[2,3,4,5]
 -\frac{c_{13}+c_{14}+c_{15}}{X_{36}}
 A_4^{\mathrm{NLSM}}[3,4,5,6].
 \label{eq:nlsm-skinny-resolved}
\end{align}
The first term is a contact contribution; the unexpanded
eq.~\eqref{eq:nlsmskinny} retains complete lower amplitudes in all terms.
The relevant rectangle identities are
\begin{equation}
\begin{aligned}
 c_{13}&=X_{13}+X_{24}-X_{14},&
 c_{14}&=X_{14}+X_{25}-X_{15}-X_{24},\\
 c_{15}&=X_{15}+X_{26}-X_{25},&
 c_{13}+c_{14}+c_{15}&=X_{13}+X_{26}.
\end{aligned}
\label{eq:nlsm-skinny-rectangles}
\end{equation}
Both forms vanish term by term on the skinny zero, but remain exact
when all three $c$'s are restored.

\paragraph{Non-skinny zero.}
At six points the inequivalent non-skinny rectangle is $j=4$, with
$c_{14},c_{15},c_{24},c_{25}$ relaxed. Its four cells give
\begin{align}
 A_6^{\mathrm{NLSM}}
 ={}&
 \frac{c_{14}+c_{15}+c_{24}+c_{25}}{X_{14}X_{36}}\,
 A_4^{\mathrm{NLSM}}[1,2,3,4]A_4^{\mathrm{NLSM}}[3,4,5,6]
 \notag\\
 &+\frac{c_{14}+c_{24}}{X_{14}}A_4^{\mathrm{NLSM}}[1,2,3,4]
 -\frac{c_{24}}{X_{25}}A_4^{\mathrm{NLSM}}[5,6,1,2]
 \notag\\
 &+\frac{c_{24}+c_{25}}{X_{36}}A_4^{\mathrm{NLSM}}[3,4,5,6].
 \label{eq:nlsm-nonskinny}
\end{align}
In order, these are $(a,k)=(1,6),(1,5),(2,5),(2,6)$.
All odd lower factors in these proper cells are three-point mixed
amplitudes, equal to one. The numerators are
\begin{equation}
\begin{aligned}
 c_{14}+c_{15}+c_{24}+c_{25}&=X_{14}+X_{36},\\
 c_{14}+c_{24}&=X_{14}+X_{35}-X_{15},\\
 c_{24}&=X_{24}+X_{35}-X_{25},\\
 c_{24}+c_{25}&=X_{24}+X_{36}-X_{26}.
\end{aligned}
\label{eq:nlsm-fat-rectangles}
\end{equation}
There is no residual condition such as $c_{14}+c_{15}=0$.
The four relaxed variables are independent.

As a direct check, substituting the four-point amplitudes and the
rectangle relations into either six-point exact splitting formula gives
\begin{align}
 A_6^{\mathrm{NLSM}}
 ={}&
 \frac{(X_{13}+X_{24})(X_{15}+X_{46})}{X_{14}}
 +\frac{(X_{15}+X_{26})(X_{24}+X_{35})}{X_{25}}
 \notag\\
 &+\frac{(X_{13}+X_{26})(X_{35}+X_{46})}{X_{36}}
 -(X_{13}+X_{24}+X_{35}+X_{46}+X_{15}+X_{26}).
 \label{eq:nlsm-full-six}
\end{align}
This also follows by extracting the $\delta^{-4}$ coefficient of
the sum of the fourteen cubic six-point diagrams.
















\section{Exact splitting for cosmological tree wavefunctions}
\label{sec:wavefunctions}

The same organization by a chosen zero extends to the
kinematically mapped tree wavefunctions reviewed in
subsection~\ref{subsec:warmup-zeros}: a rooted blob zero labels
a splitting valid at generic kinematics.
Each term is associated with two connected subsets meeting at a
chosen root and contains a blob-zero numerator multiplying complete
lower tree functions. We give the combinatorial prescription here. Appendix~\ref{sec:treeproof}
proves it by cancelling incompatible channel products and identifying
the survivors with maximal tubings. Its path-graph specialization
supplies the proof of theorem~\ref{thm:rooted}. Applications to
zeros beyond the rooted blob family are deferred to
subsection~\ref{subsec:tree-new-zeros}.

\subsection{Combinatorial construction for an arbitrary tree}
\label{sec:treeconstruction}

Let $G$ be a tree with $N$ vertices and choose a rooted branch
partition $V=\{r\}\sqcup A\sqcup B$ as in
eq.~\eqref{eq:warmup-blob}. The construction applies to every
such tree and partition. For $N\geq2$, each maximal tubing
has $N-2$ propagators, as shown in appendix~\ref{sec:treeproof}.

The admissible subsets on each side are
\begin{align}\label{eq:rootedtreepieces}
 \mathcal R_r(A)
 &=\{\emptyset\ne I\subseteq A:
                  G[\{r\}\cup I]\text{ is connected}\},\notag\\
 \mathcal R_r(B)
 &=\{\emptyset\ne J\subseteq B:
                  G[\{r\}\cup J]\text{ is connected}\},\\
 L_I&=\{r\}\cup I,\qquad
 R_J=\{r\}\cup J,\qquad U_{IJ}=L_I\cup R_J.\notag
\end{align}
Each pair $(I,J)$ is a cell, with $L_I\cap R_J=\{r\}$.
The rectangle relation gives its numerator,
\begin{equation}\label{eq:blobnumerators}
 C_{I\mid J}:=\sum_{a\in I}\sum_{b\in J}c_{ab}
       =s_{L_I}+s_{R_J}-s_{U_{IJ}},
\end{equation}
using $s_r=0$. The unique full cell is $(I,J)=(A,B)$, where
$U_{AB}=V$, $s_V=0$ and $C_{A\mid B}=s_{L_A}+s_{R_B}$.

The lower functions are evaluated on inherited channels. In an
induced subtree $G[S]$, each proper channel keeps its parent
variable $s_Q$; the full set $S$ and singletons supply no
propagators. For the outside function, contract a connected set
$U$ to a vertex $[U]$, with contraction map $\pi_U:G\to G/U$,
and define
\begin{equation}\label{eq:treepullback}
 \psi_{G/U}^{\langle U\rangle}(s)
 =\sum_{\tau\in\operatorname{MT}(G/U)}
   \prod_{Q\in\tau}\frac1{s_{\pi_U^{-1}(Q)}},
 \qquad
 \pi_U^{-1}(Q)=
 \begin{cases}
 Q,&[U]\notin Q,\\
 U\cup(Q\setminus\{[U]\}),&[U]\in Q.
 \end{cases}
\end{equation}
Thus the quotient is a complete lower tree function with
denominators $s_{\pi_U^{-1}(Q)}$, rather than differences
$s_{\pi_U^{-1}(Q)}-s_U$. No masslessness condition is imposed
on the total momentum in $U$. A lower zero must hold at these
inherited channel values, including any kinematic relations
required by its derivation.

As for cubic amplitudes, combine each lower function with its
attaching propagator. For a connected proper set $S$ with
$|S|\geq2$, define the subtree and outside currents
\begin{equation}\label{eq:tree-currents}
 \mathcal J_G[S]:=\frac{\psi_{G[S]}(s)}{s_S},\qquad
 \mathcal J_G^{\mathrm{out}}[S]
 :=\frac{\psi_{G/S}^{\langle S\rangle}(s)}{s_S},\qquad
 \mathcal J_G^{\mathrm{out}}[V]:=1.
\end{equation}
The unit convention represents the absence of an outside
amplitude and attaching propagator at the full cell. The
subtree currents only occur for proper sets, so no current
introduces a propagator $1/s_V$.

\begin{theorem}[Rooted tree exact splitting]
\label{thm:rootedtree}
For any tree $G$ and rooted branch partition $(r;A,B)$,
the mapped tree function obeys
\begin{align}
 \psi_G(s)
 &=\sum_{\substack{I\in\mathcal R_r(A)\\J\in\mathcal R_r(B)}}
 (s_{L_I}+s_{R_J}-s_{U_{IJ}})\,
 \mathcal J_G[L_I]\,\mathcal J_G[R_J]\,
 \mathcal J_G^{\mathrm{out}}[U_{IJ}]
 \label{eq:rootedtree}\\
 &=\sum_{\substack{I\in\mathcal R_r(A)\\J\in\mathcal R_r(B)}}
 C_{I\mid J}\,
 \mathcal J_G[L_I]\,\mathcal J_G[R_J]\,
 \mathcal J_G^{\mathrm{out}}[U_{IJ}].
 \label{eq:treehosfs}
\end{align}
The identity holds at generic kinematics, with all blob-zero
conditions relaxed. The sum includes the full cell $(A,B)$.
\end{theorem}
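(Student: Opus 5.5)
We prove the identity combinatorially: expand every summand into products of inverse channels, cancel everything that is not a maximal tubing, and check that each maximal tubing $\tau$ of $G$ survives with coefficient exactly one. The first identity \eqref{eq:rootedtree} equals \eqref{eq:treehosfs} by the rectangle relation \eqref{eq:blobnumerators}, so only the first needs proof.

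\textbf{Expanding a cell.} For a proper cell $(I,J)$ with $U_{IJ}\neq V$, write the numerator over the three attaching propagators as in \eqref{eq:local}:
\begin{equation*}
 \frac{s_{L}+s_{R}-s_{U}}{s_L s_R s_U}
 =\frac1{s_R s_U}+\frac1{s_L s_U}-\frac1{s_L s_R}.
\end{equation*}
For the full cell the numerator is $s_{L_A}+s_{R_B}$, which gives only the first two terms. Each cell therefore contributes three signed families of monomials. Take the first family as an example. It consists of $1/(s_Rs_U)$ times maximal tubings of $G[L_I]$, $G[R_J]$ and $G/U$, where the last are pulled back via $\pi_U^{-1}$.

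\textbf{Classifying the monomials.} Each monomial is a set of tubes of $G$. The tubes inside $L_I$, together with $R_J$ and $U$ (and the tubes of the other pieces), are pairwise compatible, except that $L_I$ and $R_J$ overlap in $r$ without being nested. For the $1/(s_Rs_U)$ and $1/(s_Ls_U)$ families, the tube $L_I$ (resp.\ $R_J$) is omitted. A count using $N-2$ propagators per maximal tubing confirms that the monomial has exactly $N-2$ tubes. If the set is compatible, it is a maximal tubing of $G$.

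\textbf{Identifying each tubing with one cell.} Given a maximal tubing $\tau$ of $G$, consider the smallest tube $T\in\tau$ containing $r$ together with vertices of both $A$ and $B$, setting $T=V$ if there is none. Inside $T$, the maximal tubing induces a canonical pair of tubes. These are the largest tubes of $\tau$ strictly inside $T$ that contain $r$, and each meets only one side; at most one such tube is rooted, because nested tubes containing $r$ form a chain.

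From this data we read off $(I,J)$ with $U_{IJ}=T$ and a unique one of the expansion families (first or second). The tube $T\setminus\{r\}$ cannot occur, since $G[T\setminus\{r\}]$ is disconnected. The key point is that connectivity of $G[T]$ forces the maximal tubing of $T$ to contain a rooted tube of the form $L_I$ or $R_J$ just below $T$. The main steps are therefore: (i) show that this map $\tau\mapsto$ (cell, family) gives each tubing exactly once with sign $+1$; (ii) show that all remaining monomials cancel in pairs.

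\textbf{Cancellation (main obstacle).} The hard part is the non-tubing monomials containing both $L_I$ and $R_J$. These are the third family, with coefficient $-1$, and also the cases where the sub-tubings of $G[L_I]$ and $G[R_J]$ overlap at $r$. We plan to pair a term $-1/(s_{L_I}s_{R_J})\times(\ldots)$ from cell $(I,J)$ with positive terms from neighbouring cells $(I,J')$ or $(I',J)$, in which the outside function of the smaller union contains $U_{IJ}$ as a tube. Concretely, the monomial $\frac{1}{s_{L_I}s_{R_J}}\cdot(\text{tubing of }G/U_{IJ})$ should coincide with a first- or second-family monomial of the cell whose union is the next tube containing $U_{IJ}$. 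This is exactly the crossing cancellation illustrated in \eqref{eq:crossingexample}.

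We will organize this as a sign-reversing involution, defined by toggling whether the minimal rooted tube above $L_I\cup R_J$ is treated as $U$ or as part of the outside tubing. The delicate checks are that the involution is well defined when $L_I$ or $R_J$ already equals the whole side, and that it behaves correctly at the full cell, whose unit convention removes the third family. Finally, specializing $G$ to a path recovers Theorem~\ref{thm:rooted} via \eqref{eq:pathisamplitude}.
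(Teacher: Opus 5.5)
\textbf{There is a genuine gap.} Your architecture is the paper's: the three-term expansion of each cell, survivors labelled by the smallest mixed rooted tube $T$ and its two children, and cancellation of the negative family against positive terms of a larger cell whose explicit $U$ is the next tube above $U_{IJ}$. However, the cancellation, which carries the whole proof, is only announced. The classification it would rest on is also false.

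A first-family monomial of a cell $(L,R)$, with explicit channels $R,U$, is incompatible whenever the top split of the hierarchy on $L$ is $D\sqcup K$ with $r\in D\neq\{r\}$. The tube $D$ then meets the explicit $R$ only in $r$. Such a monomial does not contain $L$, and in general it has no overlap between the two sub-tubings: take $R=\{r,b\}$, whose sub-tubing is empty. So your paragraph ``Classifying the monomials'' misses it, and so does your description of the hard cases (``containing both $L_I$ and $R_J$'', ``sub-tubings overlap at $r$''). This class is exactly what the negative terms must cancel. Under your classification, every positive monomial would be a tubing, there would be more positive monomials than maximal tubings, and the negative family would have nothing to cancel against.

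What is missing is the explicit bijection between negative terms and these incompatible positive terms. Take a negative term at $(L,R)$ and use that the tubing of $G/U$ is a binary hierarchy.
\begin{itemize}
\item The leaf $[U]$ has a sibling $K$. It is connected and avoids $r$, so it lies on one side, say $A$, and it is joined to $L$ rather than $R$.
\item The partner is the first-family term of the cell $(L\cup K,R)$ whose hierarchy on $L\cup K$ has top split $L\sqcup K$, with explicit $U'=U\cup K$. If $U'=V$, the partner is the full-cell term.
\item The inverse reads $L$ and $K$ off the top split of the left hierarchy. It is defined precisely when $L\neq\{r\}$, i.e.\ precisely on the incompatible positive terms.
\end{itemize}
The remaining positive terms, with top split $\{r\}\sqcup K$, are then compatible. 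They are exactly the images of your map $\tau\mapsto$ (cell, family), once that map is stated correctly. The relevant pair is the two children of $T$: a rooted child $D\neq\{r\}$ and an unrooted child $K$. The set $\{r\}\cup K$ is connected because the edge from $K$ to $D$ must end at $r$. Your current wording, ``largest tubes strictly inside $T$ that contain $r$ \ldots\ at most one such tube is rooted'', is self-contradictory and should be replaced by this.

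Your ``delicate checks'' then disappear. If $L=L_A$, the sibling lies in $B$. Both sides are full only at the full cell, which has no negative term. Finally, your sentence about ``the outside function of the smaller union'' has the direction reversed. The negative term at $(I,J)$ never contains $s_{U_{IJ}}$, so its partner is the larger cell, as your following sentence correctly says.
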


The prescription is therefore to choose one connected rooted
subset on each side, form their union, and multiply the two
subtree currents and the outside current by $C_{I\mid J}$.
At the full cell the outside current is one, giving
$C_{A\mid B}\mathcal J_G[L_A]\mathcal J_G[R_B]$ automatically.
All $c_{ab}$ remain arbitrary in the identity; imposing the
blob-zero conditions~\eqref{eq:warmup-blob} makes every term vanish.
For a path, the three currents become precisely the cubic
currents in eq.~\eqref{eq:rooted}, including the unit outside factor.

The number of cells is
\begin{equation}\label{eq:blobtermcount}
 K_{G,r;A,B}=|\mathcal R_r(A)|\,|\mathcal R_r(B)|.
\end{equation}
This need not equal the rank $|A||B|$ of the blob zero:
different cells can have linearly dependent numerators.
Thus the tree formula need not be strict.

\subsection{Examples}
\label{sec:treeexamples}

We now suppress the argument $s$.

\paragraph{Chains and the ordered cubic formula.}
Choose $r=j-1$ on $P_{n-1}$. Its cells have
\begin{equation}\label{eq:treechaincells}
 I=\{a,\ldots,j-2\},\quad J=\{j,\ldots,k-1\},\qquad
 1\leq a\leq j-2,\quad j+1\leq k\leq n.
\end{equation}
Their three channels map to $X_{aj}$, $X_{j-1,k}$ and
$X_{a,k}$, and $C_{I\mid J}=R_{ak}$ of eq.~\eqref{eq:cumulative}.
For a proper cell, the lower factors become
\begin{equation}
 A[a,\ldots,j],\qquad A[j-1,\ldots,k],\qquad
 A[k,\ldots,n,1,\ldots,a].
\end{equation}
At the full cell the outside factor is absent, as encoded by
$\mathcal J^{\phi^3}[n,1]=1$. Thus theorem~\ref{thm:rootedtree}
proves the general cubic formula
\eqref{eq:rooted}, including the skinny and non-skinny six-point examples.
For a four-vertex chain, for instance,
\begin{equation}\label{eq:treeP4example}
 \psi_{1-2-3-4}
 =\frac{c_{13}+c_{14}}{s_{12}s_{234}}\psi_{2-3-4}
  +\frac{c_{13}}{s_{12}s_{23}s_{123}}.
\end{equation}
There is no non-skinny rooted zero on $P_4$, since deleting an internal
vertex leaves sides of sizes one and two. The first non-skinny example is
$P_5$, which is precisely the $A_6$ formula~\eqref{eq:HF} under
eq.~\eqref{eq:pathisamplitude}. 
\paragraph{The $222$ star.}
Let $G_{222}$ have centre $3$ and edges $13,23,34$, with auxiliary
label $5$. Take $r=3$, $A=\{1\}$ and $B=\{2,4\}$.
The three choices of $J$ are $\{2\}$, $\{4\}$ and $\{2,4\}$.
Equation~\eqref{eq:treehosfs} gives
\begin{equation}\label{eq:tree222hosf}
 \psi_{222}
 =\frac{c_{12}+c_{14}}{s_{13}s_{234}}\psi_{2-3-4}
  +\frac{c_{12}}{s_{13}s_{23}s_{123}}
  +\frac{c_{14}}{s_{13}s_{34}s_{134}},
 \qquad \psi_{2-3-4}=\frac1{s_{23}}+\frac1{s_{34}}.
\end{equation}
The omitted two-vertex factors equal one. The rectangle relations are
\begin{equation}\label{eq:222blob}
 c_{12}=s_{13}+s_{23}-s_{123},\qquad
 c_{14}=s_{13}+s_{34}-s_{134},\qquad
 c_{12}+c_{14}=s_{13}+s_{234}.
\end{equation}
The two proper cells generate the negative monomials
$-1/(s_{13}s_{23})$ and $-1/(s_{13}s_{34})$.
They cancel the identical positive monomials from the full cell.
The survivors are the six maximal tubings:
\begin{align}\label{eq:222function}
 \psi_{222}={}&\frac1{s_{13}s_{123}}+\frac1{s_{23}s_{123}}
 +\frac1{s_{13}s_{134}}+\frac1{s_{34}s_{134}}\notag\\
 &+\frac1{s_{23}s_{234}}+\frac1{s_{34}s_{234}}.
\end{align}
The zero $c_{12}=c_{14}=0$ is regular: taking
$s_{13}=a$, $s_{23}=b$ and $s_{34}=c$ gives
$s_{123}=a+b$, $s_{134}=a+c$ and $s_{234}=-a$,
so every channel is nonzero at generic $a,b,c$. Permuting its leaves gives the other two
branch-isolating blob zeros. With the auxiliary label, this is a
five-point kinematic object, although its channel structure differs
from that of a single ordered $A_5$.

\paragraph{The $223$ star and a cancellation between neighbouring cells.}
Now take edges $13,23,34,45$ and auxiliary label $6$.
Keep $r=3$, $A=\{1\}$ and $B=\{2,4,5\}$.
The allowed right channels are $23,34,234,345,2345$, giving
\begin{align}\label{eq:tree223hosf}
 \psi_{223}={}&
 \frac{c_{12}+c_{14}+c_{15}}{s_{13}s_{2345}}\psi_{2-3-4-5}
 +\frac{c_{12}}{s_{13}s_{23}s_{123}}
       \psi_{[123]-4-5}^{\langle123\rangle}\notag\\
 &+\frac{c_{12}+c_{14}}{s_{13}s_{234}s_{1234}}\psi_{2-3-4}
 +\frac{c_{14}}{s_{13}s_{34}s_{134}}
       \psi_{2-[134]-5}^{\langle134\rangle}\notag\\
 &+\frac{c_{14}+c_{15}}{s_{13}s_{345}s_{1345}}\psi_{3-4-5}.
\end{align}
For example,
\begin{equation}\label{eq:tree223quotients}
 \psi_{[123]-4-5}^{\langle123\rangle}
 =\frac1{s_{1234}}+\frac1{s_{45}},\qquad
 \psi_{2-[134]-5}^{\langle134\rangle}
 =\frac1{s_{1234}}+\frac1{s_{1345}}.
\end{equation}
All numerators are converted to $s$ variables by
eq.~\eqref{eq:blobnumerators}. The four-vertex chain factor, corresponding to five-point kinematics, is
\begin{equation}
 \psi_{2-3-4-5}
 =\frac1{s_{23}s_{45}}+\frac1{s_{23}s_{234}}
  +\frac1{s_{34}s_{234}}+\frac1{s_{34}s_{345}}
  +\frac1{s_{45}s_{345}}.
\end{equation}
To illustrate Part I of the proof in appendix~\ref{sec:treeproof}, choose $1/s_{1234}$ from
$\psi_{[123]-4-5}^{\langle123\rangle}$. The negative kernel term at
$(L,R)=(13,23)$ is
$-1/(s_{13}s_{23}s_{1234})$. The sibling of $[123]$ is $\{4\}$.
Enlarging the right channel to $234$ produces the positive kernel
$1/(s_{13}s_{1234})$ multiplied by $1/s_{23}$ from $\psi_{2-3-4}$.
The two contributions cancel. Choosing instead $1/s_{45}$ makes the
sibling $\{4,5\}$, so the partner lies in the full cell.
All eight negative occurrences cancel in this manner, leaving the
eighteen maximal tubings, once each.

Appendix~\ref{app:treecatalogue} gives an example with two
adjacent branching vertices.




\section{New zeros from exact splitting}
\label{sec:newzeros}

The exact splitting formulas hold with all conditions of the generating zero
relaxed. Their complete lower factors can therefore expose
different zeros of the full amplitude. If a specified lower
amplitude vanishes, every term containing it vanishes without
imposing that term's numerator
condition. We impose numerator conditions only on the remaining
terms.

We give explicit conditions for five families of new zeros.
Each family is accompanied by a diagram of the exact splitting terms it
removes and an example at its lowest multiplicity.  We retain only regular maximal zeros, as defined
in subsection~\ref{subsec:zero-splitting-conventions}.

\subsection{New zeros at arbitrary multiplicity}
\label{sec:branches}

Here $v$ denotes the column endpoint called $k$ in the exact splitting formula;
Family 1 uses $k$ independently to label its lower rectangle.
We use the term matrix $\mathsf H^{(j)}$ introduced in
eq.~\eqref{eq:hosf-term-array}: a zero of a shared lower
amplitude removes its row or column, while a zero of an outside
factor can remove an individual entry. We impose numerator
conditions on the remaining entries. In the diagrams, red
circles or ellipses enclose the terms removed by lower-amplitude
zeros. Dots continue the labelled row and column ranges;
coincident endpoint labels denote a single row or column.

\subsubsection{Family 1: Two-rectangle zeros}
\label{sec:loweramplitudezeros}
\label{sec:tworectanglezeros}
Denote this family by $\mathcal R_{n;m,k}$. For $5\leq m<n$,
impose the ordinary rectangular zero of
$A[1,\ldots,m]$ selected by $X_{1k}$, with
$3\leq k\leq m-1$. Add
\begin{equation}\label{eq:rowconditions}
 \boxed{X_{am}+X_{m-1,v}-X_{av}=0,\qquad
 2\leq a\leq m-2,\quad m+1\leq v\leq n.}
\end{equation}
Evaluate the ordinary zero inside $A[1,\ldots,m]$, where
$(1,m)$ is a boundary edge. The full-amplitude channel $X_{1m}$
remains unconstrained.

Use the exact splitting formula with $j=m$. The zero of $A[1,\ldots,m]$ removes
the red first row in figure~\ref{fig:hosf-row-zero};
eq.~\eqref{eq:rowconditions} removes all other rows.
The rank is
\begin{equation}\label{eq:rectcorerank}
 (k-2)(m-k)+(m-3)(n-m).
\end{equation}
The first example is at six points. The name refers to the two
rectangles occupied by these conditions in the kinematic mesh,
shown in panel~(b) of figure~\ref{fig:mesh-zero-families}.

\begin{figure}[H]
\centering
\begin{tikzpicture}[x=0.9cm,y=0.65cm,
 every node/.style={font=\small}]
 \node at (1.55,1.25) {$v=m+1,\ldots,n$};
 \node[anchor=east] at (-0.65,0.4) {$a=1$};
 \node[anchor=east] at (-0.65,-0.7) {$a=2$};
 \node[anchor=east] at (-0.65,-1.8) {$\vdots$};
 \node[anchor=east] at (-0.65,-2.9) {$a=m-2$};
 \draw (-0.45,0.95)--(-0.6,0.95)--(-0.6,-3.35)--(-0.45,-3.35);
 \draw (3.55,0.95)--(3.7,0.95)--(3.7,-3.35)--(3.55,-3.35);
 \foreach \y in {0.4,-0.7,-2.9} {
  \node at (0,\y) {$\bullet$};
  \node at (1.55,\y) {$\cdots$};
  \node at (3.1,\y) {$\bullet$};
 }
 \foreach \x in {0,1.55,3.1} \node at (\x,-1.8) {$\vdots$};
 \draw[red!75!black,thick] (1.55,0.4) ellipse (1.97 and 0.43);
 \draw[red!75!black,-{Stealth},thick] (4.6,0.4)--(3.55,0.4);
 \node[anchor=west,text=red!75!black,align=left] at (4.75,0.4)
  {Ordinary rectangular zero:\\[2pt]
   $A[1,\ldots,m]=0$};
 \node[anchor=west,align=left,text width=5.1cm] at (4.25,-1.95)
  {Every other entry vanishes when\\[2pt]
   $X_{am}+X_{m-1,v}-X_{av}=0$.};
\end{tikzpicture}
\caption{Family 1: two-rectangle zeros, using the exact splitting formula with
 $j=m$. The red ellipse encloses the first row, whose terms
 all contain $A[1,\ldots,m]$. Impose its ordinary rectangular
 zero selected by $X_{1k}$. In all other rows, set the displayed
 numerator to zero for
 $2\leq a\leq m-2$ and $m+1\leq v\leq n$.
 The dots represent any number of intervening entries; the
 array has only one column when $n=m+1$.}
\label{fig:hosf-row-zero}
\end{figure}

\paragraph{Six-point example.}
Take $n=6$, $m=5$ and $k=4$, and call this zero $Z_I$.
Its conditions are
\begin{equation}\label{eq:ZI}
 \begin{gathered}
 \underbrace{X_{14}+X_{35}=X_{24}+X_{35}-X_{25}=0}
 _{A[1,2,3,4,5]=0},\\[4pt]
 X_{25}+X_{46}-X_{26}=X_{35}+X_{46}-X_{36}=0.
 \end{gathered}
\end{equation}

\subsubsection{Family 2: Three-pair zeros}
\label{sec:beyond}\label{sec:atomic}
For $n\geq6$, impose three pair sums on the channels of
$A[1,\ldots,6]$, followed by the displayed conditions on the
remaining channels:
\begin{equation}\label{eq:pairfamilyzero}
 \boxed{\begin{gathered}
 X_{13}+X_{26}=X_{24}+X_{35}=X_{46}+X_{15}=0,\\
 X_{a6}+X_{5v}-X_{av}=0,\qquad
 a=2,3,4,\quad7\leq v\leq n.
 \end{gathered}}
\end{equation}
We denote this family by $\mathcal P_n$. Its rank is
$3n-15$ for $n\geq6$, with no added equations at $n=6$.
The added conditions form the mesh strip
\[
 c_{uw}=0,\qquad u=2,3,4,\quad6\leq w\leq n-1.
\]
For $n>6$, use the $j=6$ exact splitting formula shown in
figure~\ref{fig:hosf-three-pair-zero}. The three pair sums set
$A[1,\ldots,6]=0$, removing the first row. The second line
of eq.~\eqref{eq:pairfamilyzero} removes the other three rows.

\begin{figure}[H]
\centering
\begin{tikzpicture}[x=0.9cm,y=0.65cm,
 every node/.style={font=\small}]
 \node at (1.55,1.35) {$v=7,\ldots,n$};
 \node[anchor=east] at (-0.65,0.4) {$a=1$};
 \node[anchor=east] at (-0.65,-0.8) {$a=2$};
 \node[anchor=east] at (-0.65,-2.0) {$a=3$};
 \node[anchor=east] at (-0.65,-3.2) {$a=4$};
 \draw (-0.45,0.95)--(-0.6,0.95)--(-0.6,-3.65)--(-0.45,-3.65);
 \draw (3.55,0.95)--(3.7,0.95)--(3.7,-3.65)--(3.55,-3.65);
 \foreach \y in {0.4,-0.8,-2.0,-3.2} {
  \node at (0,\y) {$\bullet$};
  \node at (1.55,\y) {$\cdots$};
  \node at (3.1,\y) {$\bullet$};
 }
 \draw[red!75!black,thick] (1.55,0.4) ellipse (1.97 and 0.43);
 \draw[red!75!black,-{Stealth},thick] (4.6,0.4)--(3.55,0.4);
 \node[anchor=west,text=red!75!black,align=left] at (4.75,0.4)
  {$X_{13}+X_{26}=0$\\[2pt]
   $X_{24}+X_{35}=0$\\[2pt]
   $X_{46}+X_{15}=0$\\[3pt]
   $\Longrightarrow\ A[1,\ldots,6]=0$};
 \node[anchor=west,align=left,text width=5.1cm] at (4.25,-2.9)
  {In the other three rows:\\[2pt]
   $X_{a6}+X_{5v}-X_{av}=0$,\\[2pt]
   $a=2,3,4$,\quad $7\leq v\leq n$.};
\end{tikzpicture}
\caption{Family 2: three-pair zeros at $n>6$. The $j=6$
 exact splitting formula has four rows and $n-6$ columns. The three pair sums
 remove the circled first row through its common factor
 $A[1,\ldots,6]$. The displayed numerator conditions remove
 every other entry. At $n=7$ there is only one column. The
 six-point starting zero is shown separately below.}
\label{fig:hosf-three-pair-zero}
\end{figure}

\paragraph{Six-point example.}
At $n=6$, write $Z_P=\mathcal P_6$. The conditions reduce to
\begin{equation}\label{eq:ZP}
 X_{15}+X_{46}=X_{24}+X_{35}=X_{13}+X_{26}=0.
\end{equation}
In the skinny exact splitting formula~\eqref{eq:HL}, the last pair kills the
boundary numerator, while the first two pairs set the four-point
factors in the other two terms to zero.

\subsubsection{Family 3: Balanced-column zeros}
\label{sec:coverfamilies}

Denote this family by $\mathcal C_{n,j}$. Choose the column
$v=j+2$ of the exact splitting array. Its terms all contain
$A[j-1,j,j+1,j+2]$. Set this four-point amplitude to zero and
impose the numerators outside the column. For $3\leq j\leq n-3$,
the conditions are
\begin{equation}\label{eq:columnfamily}
 \boxed{\begin{gathered}
 X_{j-1,j+1}+X_{j,j+2}=0,\\
 X_{aj}+X_{j-1,v}-X_{av}=0,\quad
 1\leq a\leq j-2,\quad j+1\leq v\leq n,
 \quad v\neq j+2.
 \end{gathered}}
\end{equation}
The rank is $(j-2)(n-j-1)+1$, and the first member has $n=6$.
The selected column is outlined in figure~\ref{fig:hosf-column-zero}.
The name refers to the two adjacent mesh columns whose entries
cancel row by row; see panel~(d) of
figure~\ref{fig:mesh-zero-families}.

\begin{figure}[H]
\centering
\begin{tikzpicture}[x=0.85cm,y=0.65cm,
 every node/.style={font=\small}]
 \node at (0,1.05) {$j+1$};
 \node at (1.3,1.05) {$j+2$};
 \node at (2.6,1.05) {$\cdots$};
 \node at (3.9,1.05) {$n$};
 \node at (1.95,1.8) {column $v$};
 \node[anchor=east] at (-0.7,0) {$a=1$};
 \node[anchor=east] at (-0.7,-1.1) {$\vdots$};
 \node[anchor=east] at (-0.7,-2.2) {$a=j-2$};
 \draw (-0.45,0.5)--(-0.6,0.5)--(-0.6,-2.7)--(-0.45,-2.7);
 \draw (4.35,0.5)--(4.5,0.5)--(4.5,-2.7)--(4.35,-2.7);
 \foreach \y in {0,-2.2} {
  \foreach \x in {0,1.3,3.9} \node at (\x,\y) {$\bullet$};
  \node at (2.6,\y) {$\cdots$};
 }
 \foreach \x in {0,1.3,3.9} \node at (\x,-1.1) {$\vdots$};
 \draw[red!75!black,thick] (1.3,-1.1) ellipse (0.48 and 1.57);
 \draw[red!75!black,-{Stealth},thick]
  (4.9,0.65)--(1.3,0.65)--(1.3,0.43);
 \node[anchor=west,text=red!75!black,align=left] at (5.05,0.65)
  {$X_{j-1,j+1}+X_{j,j+2}=0$};
 \node[anchor=west,align=left,text width=5.3cm] at (5.05,-1.3)
  {Every other entry vanishes when\\[2pt]
   $X_{aj}+X_{j-1,v}-X_{av}=0$,\\[2pt]
   with $v\neq j+2$.};
\end{tikzpicture}
\caption{Family 3: balanced-column zeros. Every term in the
 red column contains $A[j-1,j,j+1,j+2]$, which vanishes on the
 displayed pair sum. The numerators are set to zero in every
 other column. At $j=3$, the array has only one row.}
\label{fig:hosf-column-zero}
\end{figure}

\paragraph{Six-point example.}
For $n=6$ and $j=3$, write $Z_H=\mathcal C_{6,3}$. The conditions are
\begin{equation}\label{eq:nlsm-ZH}
 X_{24}+X_{35}=0,\qquad
 X_{13}+X_{24}-X_{14}=0,\qquad X_{13}+X_{26}=0.
\end{equation}

\subsubsection{Family 4: Row--column zeros}
\label{sec:rowcolumnzeros}

Denote this family by $\mathcal X_{n,j}$. Choose both the row
$a=j-3$ and the column $v=j+2$.
They contain the four-point amplitudes $A[j-3,j-2,j-1,j]$
and $A[j-1,j,j+1,j+2]$, respectively. Setting both amplitudes
to zero leaves numerator conditions only outside their union:
\begin{equation}\label{eq:crossfamily}
 \boxed{\begin{gathered}
 X_{j-3,j-1}+X_{j-2,j}=0,\qquad
 X_{j-1,j+1}+X_{j,j+2}=0,\\
 X_{aj}+X_{j-1,v}-X_{av}=0,\quad
 1\leq a\leq j-2,\quad j+1\leq v\leq n,\\
 a\neq j-3,\qquad v\neq j+2.
 \end{gathered}}
\end{equation}
For $5\leq j\leq n-3$, their rank is
$(j-3)(n-j-1)+2$. The intersection of the selected row and
column already vanishes through either lower amplitude.
The first member has $n=8$. Figure~\ref{fig:hosf-row-column-zero}
shows the row and column removed by the two pair sums.

\begin{figure}[H]
\centering
\begin{tikzpicture}[x=0.85cm,y=0.65cm,
 every node/.style={font=\small}]
 \node at (0,1.05) {$j+1$};
 \node at (1.3,1.05) {$j+2$};
 \node at (2.6,1.05) {$\cdots$};
 \node at (3.9,1.05) {$n$};
 \node at (1.95,1.8) {column $v$};
 \node[anchor=east] at (-0.7,0) {$a=1$};
 \node[anchor=east] at (-0.7,-1.1) {$\vdots$};
 \node[anchor=east] at (-0.7,-2.2) {$a=j-3$};
 \node[anchor=east] at (-0.7,-3.3) {$a=j-2$};
 \draw (-0.45,0.5)--(-0.6,0.5)--(-0.6,-3.8)--(-0.45,-3.8);
 \draw (4.35,0.5)--(4.5,0.5)--(4.5,-3.8)--(4.35,-3.8);
 \foreach \y in {0,-2.2,-3.3} {
  \foreach \x in {0,1.3,3.9} \node at (\x,\y) {$\bullet$};
  \node at (2.6,\y) {$\cdots$};
 }
 \foreach \x in {0,1.3,3.9} \node at (\x,-1.1) {$\vdots$};
 \draw[red!75!black,thick] (1.3,-1.65) ellipse (0.48 and 2.12);
 \draw[red!75!black,thick] (1.95,-2.2) ellipse (2.36 and 0.43);
 \draw[red!75!black,-{Stealth},thick]
  (4.9,0.65)--(1.3,0.65)--(1.3,0.43);
 \node[anchor=west,text=red!75!black] at (5.05,0.65)
  {$X_{j-1,j+1}+X_{j,j+2}=0$};
 \draw[red!75!black,-{Stealth},thick] (4.9,-2.2)--(4.32,-2.2);
 \node[anchor=west,text=red!75!black] at (5.05,-2.2)
  {$X_{j-3,j-1}+X_{j-2,j}=0$};
 \node[anchor=north west,align=left] at (-0.6,-4.35)
  {Outside the red row and column:\\[2pt]
   $X_{aj}+X_{j-1,v}-X_{av}=0$,\quad
   $a\neq j-3$,\quad $v\neq j+2$.};
\end{tikzpicture}
\caption{Family 4: row--column zeros. The upper pair sum
 sets $A[j-1,j,j+1,j+2]=0$ and removes the red column. The lower
 pair sum sets $A[j-3,j-2,j-1,j]=0$ and removes the red row.
 Outside these outlines, impose
 $X_{aj}+X_{j-1,v}-X_{av}=0$.
 No additional condition is required at their intersection.}
\label{fig:hosf-row-column-zero}
\end{figure}

\paragraph{Eight-point example.}

At $n=8$, choose $j=5$. The row--column construction gives
six independent conditions:
\begin{equation}\label{eq:crossexample}
 \begin{gathered}
 X_{24}+X_{35}=X_{46}+X_{57}=0,\\
 X_{15}+X_{46}-X_{16}=X_{15}+X_{48}=0,\\
 X_{35}+X_{46}-X_{36}=X_{35}+X_{48}-X_{38}=0.
 \end{gathered}
\end{equation}

\subsubsection{Family 5: Opposite-end zeros}
\label{sec:oppositeendzeros}

Denote this family by $\mathcal D_n$. In the skinny exact splitting formula with
$j=3$, the term with $v=5$ contains
$A[2,3,4,5]$ as its middle factor, while the term with $v=n-2$
contains $A[n-2,n-1,n,1]$ as its last factor. Set these
two amplitudes to zero and impose the numerators of the other
terms. For $n\geq8$, this gives
\begin{equation}\label{eq:twoendfamily}
 \boxed{\begin{gathered}
 X_{24}+X_{35}=0,\qquad X_{n-2,n}+X_{1,n-1}=0,\\
 X_{13}+X_{2v}-X_{1v}=0,\quad
 4\leq v\leq n,\quad v\notin\{5,n-2\}.
 \end{gathered}}
\end{equation}
The rank is $n-3$. The two vanishing four-point amplitudes sit
at opposite ends of the polygon, and their separation grows
with $n$. Figure~\ref{fig:hosf-opposite-end-zero} marks the two selected terms.

\begin{figure}[H]
\centering
\begin{tikzpicture}[x=1.15cm,y=0.65cm,
 every node/.style={font=\small}]
 \node[anchor=east] at (-0.5,0.9) {$v=$};
 \node at (0,0.9) {$4$};
 \node at (1.15,0.9) {$5$};
 \node at (2.3,0.9) {$\cdots$};
 \node at (3.45,0.9) {$n-2$};
 \node at (4.6,0.9) {$n-1$};
 \node at (5.75,0.9) {$n$};
 \node[anchor=east] at (-0.5,0) {$a=1$};
 \draw (-0.23,0.45)--(-0.35,0.45)--(-0.35,-0.45)--(-0.23,-0.45);
 \draw (5.98,0.45)--(6.1,0.45)--(6.1,-0.45)--(5.98,-0.45);
 \foreach \x in {0,1.15,3.45,4.6,5.75} \node at (\x,0) {$\bullet$};
 \node at (2.3,0) {$\cdots$};
 \draw[red!75!black,thick] (1.15,0) ellipse (0.33 and 0.4);
 \draw[red!75!black,thick] (3.45,0) ellipse (0.33 and 0.4);
 \draw[red!75!black,-{Stealth},thick] (1.15,-1.0)--(1.15,-0.43);
 \draw[red!75!black,-{Stealth},thick] (4.2,-2.35)--(3.45,-2.35)--(3.45,-0.43);
 \node[anchor=north,text=red!75!black] at (1.15,-1.15)
  {$X_{24}+X_{35}=0$};
 \node[anchor=west,text=red!75!black] at (4.25,-2.35)
  {$X_{n-2,n}+X_{1,n-1}=0$};
 \node[anchor=north west,align=left] at (-0.35,-3.2)
  {At every uncircled entry:\\[2pt]
   $X_{13}+X_{2v}-X_{1v}=0$,\quad
   $v\notin\{5,n-2\}$.};
\end{tikzpicture}
\caption{Family 5: opposite-end zeros, using the one-row exact splitting formula
 with $j=3$. The term at $v=5$ vanishes through $A[2,3,4,5]$;
 the term at $v=n-2$ vanishes through $A[n-2,n-1,n,1]$.
 For all remaining values of $v$, impose
 $X_{13}+X_{2v}-X_{1v}=0$. These two entries are distinct for
 $n\geq8$.}
\label{fig:hosf-opposite-end-zero}
\end{figure}

At six points the analogous choice gives the three-pair zero.
At seven points the two selected terms coincide, so either
four-point zero removes the term and imposing both adds a
further restriction of either existing zero. The range
$n\geq8$ keeps the two selected terms distinct.

\paragraph{Eight-point example.}
At $n=8$, the conditions are
\begin{equation}\label{eq:opposite-end-eight}
 \begin{gathered}
 X_{24}+X_{35}=0,\qquad X_{68}+X_{17}=0,\\
 X_{13}+X_{24}-X_{14}=0,\qquad
 X_{13}+X_{27}-X_{17}=0,\qquad X_{13}+X_{28}=0.
 \end{gathered}
\end{equation}

In the stated ranges, all five constructions give regular maximal
zeros. The exact splitting formula makes their vanishing manifest, while the free-channel
parametrizations in appendix~\ref{cubic:sec:regularity} show that they
avoid propagator poles. For maximality, consider a small relaxation of
the defining equations. At the lowest multiplicities, the coefficients
of these relaxations are independent functions of the free channels.
Taking residues on a free four-point subpolygon reduces a larger member
to a smaller one; complementary choices of subpolygon retain all the
defining equations. Thus no nontrivial linear combination of the
conditions can be relaxed while keeping a regular zero. The families
are distinct from ordinary rectangular zeros,
as explained in appendix~\ref{cubic:sec:newness}.

\paragraph{New zeros on the kinematic mesh}
Figure~\ref{fig:mesh-zero-families} compares the five families
with an ordinary rectangular zero. The new patterns retain
regions where individual mesh variables vanish, but also allow
neighbouring diamonds to have nonzero values whose sums are
fixed. The channel equations above specify the conditions for each
family; the mesh displays their geometric pattern.
\begin{figure}[H]
\centering
\begingroup
\definecolor{meshshade}{HTML}{DCEBF7}
\definecolor{meshedge}{HTML}{65849F}
\definecolor{meshink}{HTML}{20303D}
\definecolor{meshlight}{HTML}{B5A28C}
\definecolor{meshextension}{HTML}{D7EFDF}
\definecolor{meshextensionedge}{HTML}{548367}
\definecolor{meshblock}{HTML}{FFE0AE}
\definecolor{meshblockedge}{HTML}{B47C34}
\definecolor{meshintersection}{HTML}{F1C785}
\resizebox{\textwidth}{!}{%
\begin{tikzpicture}[x=1pt,y=-1pt,every node/.style={inner sep=0pt,outer sep=0pt}]
\path[use as bounding box] (0,0) rectangle (450.000,412.500);
\path[fill=meshshade,draw=meshedge,line width=0.443pt] (5.795,10.909)--(10.568,6.136)--(15.341,10.909)--(10.568,15.682)--cycle;
\path[fill=meshextension,draw=meshextensionedge,line width=0.443pt] (15.341,10.909)--(20.114,6.136)--(24.886,10.909)--(20.114,15.682)--cycle;
\node[anchor=west,text=meshink,font=\fontsize{8.18}{9.82}\selectfont] at (31.023,10.909) {Blue / green: each $c=0$.};
\path[fill=meshblock,draw=meshblockedge,line width=0.750pt] (217.841,10.909)--(222.614,6.136)--(232.159,15.682)--(227.386,20.455)--cycle;
\draw[draw=meshlight,line width=0.273pt,dash pattern=on 1pt off 1pt] (222.614,15.682)--(227.386,10.909);
\node[anchor=west,text=meshink,font=\fontsize{8.18}{9.82}\selectfont] at (238.295,11.932) {Outline: sum equals the value shown.};
\node[anchor=center,text=meshink,font=\fontsize{8.86}{10.64}\selectfont\bfseries] at (75.000,35.455) {(a) Old rectangular zeros};
\node[anchor=center,text=meshink,font=\fontsize{7.16}{8.59}\selectfont] at (75.000,47.045) {$3\leq j\leq n-1$};
\path[fill=meshshade,draw=meshedge,line width=0.477pt] (70.057,153.580)--(30.511,114.034)--(79.943,64.602)--(119.489,104.148)--cycle;
\path[fill=meshshade,draw=meshedge,line width=0.409pt] (40.398,123.920)--(30.511,114.034)--(40.398,104.148)--(50.284,114.034)--cycle;
\path[fill=meshshade,draw=meshedge,line width=0.409pt] (79.943,84.375)--(70.057,74.489)--(79.943,64.602)--(89.830,74.489)--cycle;
\fill[meshedge] (57.699,96.733) circle (0.580pt);
\fill[meshedge] (60.170,94.261) circle (0.580pt);
\fill[meshedge] (62.642,91.790) circle (0.580pt);
\fill[meshedge] (97.244,91.790) circle (0.580pt);
\fill[meshedge] (94.773,89.318) circle (0.580pt);
\fill[meshedge] (92.301,86.847) circle (0.580pt);
\draw[draw=meshedge,line width=0.307pt] (25.057,108.580)--(74.489,59.148);
\draw[draw=meshedge,line width=0.307pt] (24.093,107.615)--(26.021,109.544);
\draw[draw=meshedge,line width=0.307pt] (73.524,58.183)--(75.453,60.112);
\node[anchor=center,text=meshink,font=\fontsize{6.82}{8.18}\selectfont,rotate=45.0,fill=white,inner sep=1pt] at (49.773,83.864) {$n-j$};
\draw[draw=meshedge,line width=0.307pt] (85.398,59.148)--(124.943,98.693);
\draw[draw=meshedge,line width=0.307pt] (86.362,58.183)--(84.433,60.112);
\draw[draw=meshedge,line width=0.307pt] (125.907,97.729)--(123.979,99.657);
\node[anchor=center,text=meshink,font=\fontsize{6.82}{8.18}\selectfont,rotate=-45.0,fill=white,inner sep=1pt] at (105.170,78.920) {$j-2$};
\node[anchor=center,text=meshink,font=\fontsize{6.82}{8.18}\selectfont] at (70.057,160.739) {$X_{1j}$};
\node[anchor=center,text=meshink,font=\fontsize{7.50}{9.00}\selectfont] at (75.000,185.795) {$c_{uw}=0$};
\node[anchor=center,text=meshink,font=\fontsize{7.50}{9.00}\selectfont] at (75.000,196.023) {$1\leq u\leq j-2$};
\node[anchor=center,text=meshink,font=\fontsize{7.50}{9.00}\selectfont] at (75.000,206.250) {$j\leq w\leq n-1$};
\node[anchor=center,text=meshink,font=\fontsize{8.86}{10.64}\selectfont\bfseries] at (225.000,35.455) {(b) 1. Two-rectangle};
\node[anchor=center,text=meshink,font=\fontsize{7.16}{8.59}\selectfont] at (225.000,47.045) {$5\leq m<n,\quad 3\leq k\leq m-1$};
\path[fill=meshshade,draw=meshedge,line width=0.477pt] (210.170,166.875)--(190.398,147.102)--(220.057,117.443)--(239.830,137.216)--cycle;
\path[fill=meshextension,draw=meshextensionedge,line width=0.477pt] (229.943,127.330)--(190.398,87.784)--(220.057,58.125)--(259.602,97.670)--cycle;
\path[fill=meshextension,draw=meshextensionedge,line width=0.409pt] (200.284,97.670)--(190.398,87.784)--(200.284,77.898)--(210.170,87.784)--cycle;
\path[fill=meshextension,draw=meshextensionedge,line width=0.409pt] (220.057,77.898)--(210.170,68.011)--(220.057,58.125)--(229.943,68.011)--cycle;
\fill[meshedge] (207.699,80.369) circle (0.580pt);
\fill[meshedge] (210.170,77.898) circle (0.580pt);
\fill[meshedge] (212.642,75.426) circle (0.580pt);
\fill[meshedge] (237.358,85.312) circle (0.580pt);
\fill[meshedge] (234.886,82.841) circle (0.580pt);
\fill[meshedge] (232.415,80.369) circle (0.580pt);
\path[fill=meshshade,draw=meshedge,line width=0.409pt] (200.284,156.989)--(190.398,147.102)--(200.284,137.216)--(210.170,147.102)--cycle;
\fill[meshedge] (207.699,139.688) circle (0.580pt);
\fill[meshedge] (210.170,137.216) circle (0.580pt);
\fill[meshedge] (212.642,134.744) circle (0.580pt);
\path[fill=meshblock,draw=meshblockedge,line width=0.852pt] (229.943,147.102)--(220.057,137.216)--(229.943,127.330)--(239.830,137.216)--cycle;
\node[anchor=center,text=meshink,font=\fontsize{6.14}{7.36}\selectfont] at (229.943,137.216) {$-X_{1m}$};
\draw[draw=meshedge,line width=0.307pt] (205.057,171.989)--(185.284,152.216);
\draw[draw=meshedge,line width=0.307pt] (204.093,172.953)--(206.021,171.024);
\draw[draw=meshedge,line width=0.307pt] (184.320,153.180)--(186.248,151.252);
\node[anchor=center,text=meshink,font=\fontsize{6.82}{8.18}\selectfont,rotate=-45.0,fill=white,inner sep=1pt] at (195.170,162.102) {$k-2$};
\draw[draw=meshedge,line width=0.307pt] (215.284,171.989)--(244.943,142.330);
\draw[draw=meshedge,line width=0.307pt] (214.320,171.024)--(216.248,172.953);
\draw[draw=meshedge,line width=0.307pt] (243.979,141.365)--(245.907,143.294);
\node[anchor=center,text=meshink,font=\fontsize{6.82}{8.18}\selectfont,rotate=45.0,fill=white,inner sep=1pt] at (230.114,157.159) {$m-k$};
\draw[draw=meshedge,line width=0.307pt] (185.284,82.670)--(214.943,53.011);
\draw[draw=meshedge,line width=0.307pt] (184.320,81.706)--(186.248,83.635);
\draw[draw=meshedge,line width=0.307pt] (213.979,52.047)--(215.907,53.976);
\node[anchor=center,text=meshink,font=\fontsize{6.82}{8.18}\selectfont,rotate=45.0,fill=white,inner sep=1pt] at (200.114,67.841) {$n-m$};
\draw[draw=meshedge,line width=0.307pt] (225.170,53.011)--(264.716,92.557);
\draw[draw=meshedge,line width=0.307pt] (226.135,52.047)--(224.206,53.976);
\draw[draw=meshedge,line width=0.307pt] (265.680,91.593)--(263.752,93.521);
\node[anchor=center,text=meshink,font=\fontsize{6.82}{8.18}\selectfont,rotate=-45.0,fill=white,inner sep=1pt] at (244.943,72.784) {$m-3$};
\node[anchor=center,text=meshink,font=\fontsize{7.50}{9.00}\selectfont] at (225.000,185.795) {$c_{1,m-1}=-X_{1m}$};
\node[anchor=center,text=meshink,font=\fontsize{7.50}{9.00}\selectfont] at (225.000,196.023) {First rectangle: $u=1,\ldots,k-2$};
\node[anchor=center,text=meshink,font=\fontsize{7.50}{9.00}\selectfont] at (225.000,206.250) {$w=k,\ldots,m-1$};
\node[anchor=center,text=meshink,font=\fontsize{8.86}{10.64}\selectfont\bfseries] at (375.000,35.455) {(c) 2. Three-pair};
\node[anchor=center,text=meshink,font=\fontsize{7.16}{8.59}\selectfont] at (375.000,47.045) {$n\geq6$};
\path[fill=meshextension,draw=meshextensionedge,line width=0.477pt] (369.716,132.443)--(338.011,100.739)--(380.284,58.466)--(411.989,90.170)--cycle;
\draw[draw=meshextensionedge,line width=0.341pt] (359.148,121.875)--(401.420,79.602);
\draw[draw=meshextensionedge,line width=0.341pt] (348.580,111.307)--(390.852,69.034);
\draw[draw=meshextensionedge,line width=0.341pt] (380.284,121.875)--(348.580,90.170);
\draw[draw=meshextensionedge,line width=0.341pt] (401.420,100.739)--(369.716,69.034);
\fill[meshedge] (382.926,108.665) circle (0.580pt);
\fill[meshedge] (385.568,106.023) circle (0.580pt);
\fill[meshedge] (388.210,103.381) circle (0.580pt);
\fill[meshedge] (372.358,98.097) circle (0.580pt);
\fill[meshedge] (375.000,95.455) circle (0.580pt);
\fill[meshedge] (377.642,92.812) circle (0.580pt);
\fill[meshedge] (361.790,87.528) circle (0.580pt);
\fill[meshedge] (364.432,84.886) circle (0.580pt);
\fill[meshedge] (367.074,82.244) circle (0.580pt);
\path[fill=meshblock,draw=meshblockedge,line width=0.000pt] (348.580,174.716)--(338.011,164.148)--(369.716,132.443)--(380.284,143.011)--cycle;
\path[fill=meshblock,draw=meshblockedge,line width=0.000pt] (369.716,153.580)--(338.011,121.875)--(348.580,111.307)--(380.284,143.011)--cycle;
\path[fill=meshintersection,draw=meshblockedge,line width=0.000pt] (369.716,153.580)--(359.148,143.011)--(369.716,132.443)--(380.284,143.011)--cycle;
\path[fill=none,draw=meshblockedge,line width=0.750pt] (348.580,174.716)--(338.011,164.148)--(369.716,132.443)--(380.284,143.011)--cycle;
\path[fill=none,draw=meshblockedge,line width=0.750pt] (369.716,153.580)--(338.011,121.875)--(348.580,111.307)--(380.284,143.011)--cycle;
\draw[draw=meshlight,line width=0.273pt,dash pattern=on 1pt off 1pt] (359.148,164.148)--(348.580,153.580);
\draw[draw=meshlight,line width=0.273pt,dash pattern=on 1pt off 1pt] (369.716,153.580)--(359.148,143.011);
\draw[draw=meshlight,line width=0.273pt,dash pattern=on 1pt off 1pt] (359.148,143.011)--(369.716,132.443);
\draw[draw=meshlight,line width=0.273pt,dash pattern=on 1pt off 1pt] (348.580,132.443)--(359.148,121.875);
\path[fill=meshblock,draw=meshblockedge,line width=0.750pt] (348.580,153.580)--(338.011,143.011)--(348.580,132.443)--(359.148,143.011)--cycle;
\node[anchor=center,text=meshink,font=\fontsize{6.48}{7.77}\selectfont,rotate=45] at (353.864,158.864) {$-X_{16}$};
\node[anchor=center,text=meshink,font=\fontsize{6.48}{7.77}\selectfont,rotate=-45] at (353.864,127.159) {$-X_{16}$};
\node[anchor=center,text=meshink,font=\fontsize{5.80}{6.95}\selectfont] at (348.580,143.011) {$-X_{25}$};
\draw[draw=meshedge,line width=0.307pt] (332.898,95.625)--(375.170,53.352);
\draw[draw=meshedge,line width=0.307pt] (331.933,94.661)--(333.862,96.589);
\draw[draw=meshedge,line width=0.307pt] (374.206,52.388)--(376.135,54.317);
\node[anchor=center,text=meshink,font=\fontsize{6.82}{8.18}\selectfont,rotate=45.0,fill=white,inner sep=1pt] at (354.034,74.489) {$n-6$};
\draw[draw=meshedge,line width=0.307pt] (385.398,53.352)--(417.102,85.057);
\draw[draw=meshedge,line width=0.307pt] (386.362,52.388)--(384.433,54.317);
\draw[draw=meshedge,line width=0.307pt] (418.067,84.093)--(416.138,86.021);
\node[anchor=center,text=meshink,font=\fontsize{6.82}{8.18}\selectfont,rotate=-45.0,fill=white,inner sep=1pt] at (401.250,69.205) {$3$};
\node[anchor=center,text=meshink,font=\fontsize{7.50}{9.00}\selectfont] at (375.000,185.795) {$c_{13}+c_{14}+c_{15}=-X_{16}$};
\node[anchor=center,text=meshink,font=\fontsize{7.50}{9.00}\selectfont] at (375.000,196.023) {$c_{15}+c_{25}+c_{35}=-X_{16}$};
\node[anchor=center,text=meshink,font=\fontsize{7.50}{9.00}\selectfont] at (375.000,206.250) {$c_{24}=-X_{25}$};
\node[anchor=center,text=meshink,font=\fontsize{8.86}{10.64}\selectfont\bfseries] at (75.000,221.591) {(d) 3. Balanced-column};
\node[anchor=center,text=meshink,font=\fontsize{7.16}{8.59}\selectfont] at (75.000,233.182) {$3\leq j\leq n-3$};
\path[fill=meshshade,draw=meshedge,line width=0.477pt] (70.057,335.966)--(30.511,296.420)--(79.943,246.989)--(119.489,286.534)--cycle;
\path[fill=meshblock,draw=meshblockedge,line width=0.750pt] (79.943,326.080)--(70.057,316.193)--(89.830,296.420)--(99.716,306.307)--cycle;
\draw[draw=meshlight,line width=0.273pt,dash pattern=on 1pt off 1pt] (89.830,316.193)--(79.943,306.307);
\node[anchor=center,text=meshink,font=\fontsize{7.50}{9.00}\selectfont] at (84.886,311.250) {$0$};
\path[fill=meshblock,draw=meshblockedge,line width=0.750pt] (70.057,316.193)--(60.170,306.307)--(79.943,286.534)--(89.830,296.420)--cycle;
\draw[draw=meshlight,line width=0.273pt,dash pattern=on 1pt off 1pt] (79.943,306.307)--(70.057,296.420);
\fill[meshedge] (77.472,303.835) circle (0.580pt);
\fill[meshedge] (75.000,301.364) circle (0.580pt);
\fill[meshedge] (72.528,298.892) circle (0.580pt);
\path[fill=meshblock,draw=meshblockedge,line width=0.750pt] (60.170,306.307)--(50.284,296.420)--(70.057,276.648)--(79.943,286.534)--cycle;
\draw[draw=meshlight,line width=0.273pt,dash pattern=on 1pt off 1pt] (70.057,296.420)--(60.170,286.534);
\node[anchor=center,text=meshink,font=\fontsize{7.50}{9.00}\selectfont] at (65.114,291.477) {$0$};
\path[fill=meshblock,draw=meshblockedge,line width=0.750pt] (50.284,296.420)--(40.398,286.534)--(60.170,266.761)--(70.057,276.648)--cycle;
\draw[draw=meshlight,line width=0.273pt,dash pattern=on 1pt off 1pt] (60.170,286.534)--(50.284,276.648);
\node[anchor=center,text=meshink,font=\fontsize{7.50}{9.00}\selectfont] at (55.227,281.591) {$0$};
\path[fill=meshshade,draw=meshedge,line width=0.409pt] (40.398,306.307)--(30.511,296.420)--(40.398,286.534)--(50.284,296.420)--cycle;
\path[fill=meshshade,draw=meshedge,line width=0.409pt] (79.943,266.761)--(70.057,256.875)--(79.943,246.989)--(89.830,256.875)--cycle;
\fill[meshedge] (107.131,284.062) circle (0.580pt);
\fill[meshedge] (104.659,281.591) circle (0.580pt);
\fill[meshedge] (102.187,279.119) circle (0.580pt);
\draw[draw=meshedge,line width=0.307pt] (25.057,290.966)--(74.489,241.534);
\draw[draw=meshedge,line width=0.307pt] (24.093,290.002)--(26.021,291.930);
\draw[draw=meshedge,line width=0.307pt] (73.524,240.570)--(75.453,242.498);
\node[anchor=center,text=meshink,font=\fontsize{6.82}{8.18}\selectfont,rotate=45.0,fill=white,inner sep=1pt] at (49.773,266.250) {$n-j$};
\draw[draw=meshedge,line width=0.307pt] (85.398,241.534)--(124.943,281.080);
\draw[draw=meshedge,line width=0.307pt] (86.362,240.570)--(84.433,242.498);
\draw[draw=meshedge,line width=0.307pt] (125.907,280.115)--(123.979,282.044);
\node[anchor=center,text=meshink,font=\fontsize{6.82}{8.18}\selectfont,rotate=-45.0,fill=white,inner sep=1pt] at (105.170,261.307) {$j-2$};
\node[anchor=center,text=meshink,font=\fontsize{7.50}{9.00}\selectfont] at (75.000,353.864) {Paired columns: $j+1,j+2$};
\node[anchor=center,text=meshink,font=\fontsize{7.50}{9.00}\selectfont] at (75.000,364.091) {$X_{j-1,j+1}+X_{j,j+2}=0$};
\node[anchor=center,text=meshink,font=\fontsize{8.86}{10.64}\selectfont\bfseries] at (225.000,221.591) {(e) 4. Row-column};
\node[anchor=center,text=meshink,font=\fontsize{7.16}{8.59}\selectfont] at (225.000,233.182) {$5\leq j\leq n-3$};
\path[fill=meshshade,draw=meshedge,line width=0.477pt] (225.000,342.273)--(177.273,294.545)--(225.000,246.818)--(272.727,294.545)--cycle;
\path[fill=meshblock,draw=meshblockedge,line width=0.750pt] (234.545,332.727)--(225.000,323.182)--(244.091,304.091)--(253.636,313.636)--cycle;
\draw[draw=meshlight,line width=0.273pt,dash pattern=on 1pt off 1pt] (244.091,323.182)--(234.545,313.636);
\node[anchor=center,text=meshink,font=\fontsize{7.50}{9.00}\selectfont] at (239.318,318.409) {$0$};
\path[fill=meshblock,draw=meshblockedge,line width=0.750pt] (225.000,323.182)--(215.455,313.636)--(234.545,294.545)--(244.091,304.091)--cycle;
\draw[draw=meshlight,line width=0.273pt,dash pattern=on 1pt off 1pt] (234.545,313.636)--(225.000,304.091);
\fill[meshedge] (232.159,311.250) circle (0.580pt);
\fill[meshedge] (229.773,308.864) circle (0.580pt);
\fill[meshedge] (227.386,306.477) circle (0.580pt);
\path[fill=meshblock,draw=meshblockedge,line width=0.750pt] (196.364,294.545)--(186.818,285.000)--(205.909,265.909)--(215.455,275.455)--cycle;
\draw[draw=meshlight,line width=0.273pt,dash pattern=on 1pt off 1pt] (205.909,285.000)--(196.364,275.455);
\node[anchor=center,text=meshink,font=\fontsize{7.50}{9.00}\selectfont] at (201.136,280.227) {$0$};
\path[fill=meshblock,draw=meshblockedge,line width=0.750pt] (205.909,323.182)--(186.818,304.091)--(196.364,294.545)--(215.455,313.636)--cycle;
\draw[draw=meshlight,line width=0.273pt,dash pattern=on 1pt off 1pt] (196.364,313.636)--(205.909,304.091);
\node[anchor=center,text=meshink,font=\fontsize{7.50}{9.00}\selectfont] at (201.136,308.864) {$0$};
\path[fill=meshblock,draw=meshblockedge,line width=0.750pt] (234.545,294.545)--(215.455,275.455)--(225.000,265.909)--(244.091,285.000)--cycle;
\draw[draw=meshlight,line width=0.273pt,dash pattern=on 1pt off 1pt] (225.000,285.000)--(234.545,275.455);
\fill[meshedge] (227.386,282.614) circle (0.580pt);
\fill[meshedge] (229.773,280.227) circle (0.580pt);
\fill[meshedge] (232.159,277.841) circle (0.580pt);
\path[fill=meshblock,draw=meshblockedge,line width=0.750pt] (244.091,285.000)--(225.000,265.909)--(234.545,256.364)--(253.636,275.455)--cycle;
\draw[draw=meshlight,line width=0.273pt,dash pattern=on 1pt off 1pt] (234.545,275.455)--(244.091,265.909);
\node[anchor=center,text=meshink,font=\fontsize{7.50}{9.00}\selectfont] at (239.318,270.682) {$0$};
\path[fill=meshintersection,draw=meshblockedge,line width=0.750pt] (215.455,313.636)--(196.364,294.545)--(215.455,275.455)--(234.545,294.545)--cycle;
\draw[draw=meshlight,line width=0.273pt,dash pattern=on 1pt off 1pt] (205.909,304.091)--(225.000,285.000);
\draw[draw=meshlight,line width=0.273pt,dash pattern=on 1pt off 1pt] (225.000,304.091)--(205.909,285.000);
\node[anchor=center,text=meshink,font=\fontsize{7.50}{9.00}\selectfont] at (215.455,294.545) {$0$};
\draw[draw=meshedge,line width=0.307pt] (171.818,289.091)--(219.545,241.364);
\draw[draw=meshedge,line width=0.307pt] (170.854,288.127)--(172.782,290.055);
\draw[draw=meshedge,line width=0.307pt] (218.581,240.399)--(220.510,242.328);
\node[anchor=center,text=meshink,font=\fontsize{6.82}{8.18}\selectfont,rotate=45.0,fill=white,inner sep=1pt] at (195.682,265.227) {$n-j$};
\draw[draw=meshedge,line width=0.307pt] (230.455,241.364)--(278.182,289.091);
\draw[draw=meshedge,line width=0.307pt] (231.419,240.399)--(229.490,242.328);
\draw[draw=meshedge,line width=0.307pt] (279.146,288.127)--(277.218,290.055);
\node[anchor=center,text=meshink,font=\fontsize{6.82}{8.18}\selectfont,rotate=-45.0,fill=white,inner sep=1pt] at (254.318,265.227) {$j-2$};
\node[anchor=center,text=meshink,font=\fontsize{7.50}{9.00}\selectfont] at (225.000,353.864) {Rows $j-4,j-3$; columns $j+1,j+2$};
\node[anchor=center,text=meshink,font=\fontsize{7.50}{9.00}\selectfont] at (225.000,364.091) {$X_{j-3,j-1}+X_{j-2,j}=0$};
\node[anchor=center,text=meshink,font=\fontsize{7.50}{9.00}\selectfont] at (225.000,374.318) {$X_{j-1,j+1}+X_{j,j+2}=0$};
\node[anchor=center,text=meshink,font=\fontsize{8.86}{10.64}\selectfont\bfseries] at (375.000,221.591) {(f) 5. Opposite-end};
\node[anchor=center,text=meshink,font=\fontsize{7.16}{8.59}\selectfont] at (375.000,233.182) {$n\geq8$};
\path[fill=meshshade,draw=meshedge,line width=0.477pt] (348.409,312.955)--(339.545,304.091)--(401.591,242.045)--(410.455,250.909)--cycle;
\path[fill=meshshade,draw=meshedge,line width=0.409pt] (348.409,312.955)--(339.545,304.091)--(348.409,295.227)--(357.273,304.091)--cycle;
\path[fill=meshshade,draw=meshedge,line width=0.409pt] (375.000,286.364)--(366.136,277.500)--(375.000,268.636)--(383.864,277.500)--cycle;
\path[fill=meshshade,draw=meshedge,line width=0.409pt] (401.591,259.773)--(392.727,250.909)--(401.591,242.045)--(410.455,250.909)--cycle;
\path[fill=meshblock,draw=meshblockedge,line width=0.750pt] (357.273,304.091)--(348.409,295.227)--(366.136,277.500)--(375.000,286.364)--cycle;
\draw[draw=meshlight,line width=0.273pt,dash pattern=on 1pt off 1pt] (366.136,295.227)--(357.273,286.364);
\node[anchor=center,text=meshink,font=\fontsize{7.50}{9.00}\selectfont] at (361.705,290.795) {$0$};
\path[fill=meshblock,draw=meshblockedge,line width=0.750pt] (383.864,277.500)--(375.000,268.636)--(392.727,250.909)--(401.591,259.773)--cycle;
\draw[draw=meshlight,line width=0.273pt,dash pattern=on 1pt off 1pt] (392.727,268.636)--(383.864,259.773);
\node[anchor=center,text=meshink,font=\fontsize{7.50}{9.00}\selectfont] at (388.295,264.205) {$0$};
\fill[meshedge] (372.784,279.716) circle (0.580pt);
\fill[meshedge] (375.000,277.500) circle (0.580pt);
\fill[meshedge] (377.216,275.284) circle (0.580pt);
\node[anchor=east,text=meshink,font=\fontsize{6.82}{8.18}\selectfont] at (337.841,271.705) {$n\geq9$};
\draw[draw=meshedge,line width=0.307pt] (334.432,298.977)--(396.477,236.932);
\draw[draw=meshedge,line width=0.307pt] (333.468,298.013)--(335.396,299.942);
\draw[draw=meshedge,line width=0.307pt] (395.513,235.968)--(397.442,237.896);
\node[anchor=center,text=meshink,font=\fontsize{6.82}{8.18}\selectfont,rotate=45.0,fill=white,inner sep=1pt] at (365.455,267.955) {$n-3$};
\path[fill=meshshade,draw=meshedge,line width=0.477pt] (373.636,349.773)--(367.841,343.977)--(396.818,315.000)--(402.614,320.795)--cycle;
\path[fill=meshshade,draw=meshedge,line width=0.409pt] (373.636,349.773)--(367.841,343.977)--(373.636,338.182)--(379.432,343.977)--cycle;
\path[fill=meshshade,draw=meshedge,line width=0.409pt] (396.818,326.591)--(391.023,320.795)--(396.818,315.000)--(402.614,320.795)--cycle;
\path[fill=meshblock,draw=meshblockedge,line width=0.750pt] (379.432,343.977)--(373.636,338.182)--(391.023,320.795)--(396.818,326.591)--cycle;
\draw[draw=meshlight,line width=0.273pt,dash pattern=on 1pt off 1pt] (385.227,338.182)--(379.432,332.386);
\draw[draw=meshlight,line width=0.273pt,dash pattern=on 1pt off 1pt] (391.023,332.386)--(385.227,326.591);
\node[anchor=center,text=meshink,font=\fontsize{6.48}{7.77}\selectfont] at (385.227,332.386) {$0$};
\node[anchor=east,text=meshink,font=\fontsize{6.82}{8.18}\selectfont] at (357.955,332.386) {$n=8$};
\node[anchor=center,text=meshink,font=\fontsize{7.50}{9.00}\selectfont] at (375.000,364.773) {$X_{24}+X_{35}=0$};
\node[anchor=center,text=meshink,font=\fontsize{7.50}{9.00}\selectfont] at (375.000,375.000) {$X_{n-2,n}+X_{1,n-1}=0$};
\node[anchor=center,text=meshink,font=\fontsize{7.84}{9.41}\selectfont] at (225.000,395.455) {Side lengths count diamonds; dotted gaps continue the indicated pattern.};
\node[anchor=center,text=meshink,font=\fontsize{7.84}{9.41}\selectfont] at (225.000,406.364) {The extra channel equations beneath the panels are part of the zero conditions.};
\end{tikzpicture}
}
\endgroup
\caption{Mesh patterns for ordinary rectangular zeros and the
five new families. Blue and green diamonds have $c_{uw}=0$;
green distinguishes the added rectangle in panels (b) and (c).
Within each orange outline, the sum of the enclosed mesh
variables equals the displayed value. The individual variables
inside that outline need not vanish.
Side lengths count diamonds, and dotted gaps indicate that
the pattern continues. The additional channel equations beneath
the panels are part of the zero conditions.
In (c), the two orange strips overlap in one diamond and obey
separate sum conditions; the green rectangle is absent at
$n=6$, where $X_{16}=0$.
In (e), only the total of the four central diamonds is set to
zero. In (f), the two separate pairs at $n\geq9$ become a single
three-diamond block at $n=8$, as shown in the inset.}
\label{fig:mesh-zero-families}
\end{figure}

\paragraph{Further choices of lower zeros.}
The exact splitting formula is not limited to these five patterns. A column can
be removed using a zero of $A[j-1,\ldots,v]$ with any allowed
number of legs, and a row can use a zero of $A[a,\ldots,j]$.
Several rows and columns can be removed together. One then
sets the numerators of all remaining entries to zero, as in
the figures above. The five families displayed here have
all-multiplicity proofs that the resulting conditions avoid
poles and are not restrictions of larger zeros. The same
properties must be checked for additional choices.

\subsection{New zeros in the NLSM}
\label{subsec:nlsm-new-zeros}

The linear conditions of Families 2--5 also make the NLSM
amplitude vanish at every allowed even multiplicity. Family 1
has an all-multiplicity NLSM counterpart as well, but only for
the even values of its parameter $m$ specified below. In each
case we impose the same equations on the channels $X_{ij}$ of
the full amplitude $A_n^{\mathrm{NLSM}}$.

For the parameters $m,k$ of Family 1, the surviving choices are
listed in table~\ref{tab:nlsm-family-survival}; all listed
multiplicities are even.

\begin{table}[H]
\centering
\renewcommand{\arraystretch}{1.25}
\begin{tabular}{@{}lll@{}}
\toprule
Family & Multiplicity & Allowed construction parameters\\
\midrule
1: Two-rectangle & $n=8,10,\ldots$
 & $m=6,8,\ldots,n-2$,\quad $3\leq k\leq m-1$\\
2: Three-pair & $n=6,8,\ldots$ & No further choice\\
3: Balanced-column & $n=6,8,\ldots$ & $3\leq j\leq n-3$\\
4: Row--column & $n=8,10,\ldots$ & $5\leq j\leq n-3$\\
5: Opposite-end & $n=8,10,\ldots$ & No further choice\\
\bottomrule
\end{tabular}
\caption{The members of Families 1--5 whose cubic zero
 conditions also give NLSM zeros. Every choice in the stated
 ranges gives vanishing without forcing a propagator to zero.
 There is no additional parity restriction on $k$ or $j$.}
\label{tab:nlsm-family-survival}
\end{table}

\paragraph{Which members of Family 1 survive.}
At eight points, choose $m=6$ and any of $k=3,4,5$ in the
Family 1 prescription. These conditions give zeros of
$A_8^{\mathrm{NLSM}}$. The cubic choices $m=5$ and $m=7$
generically do not. At ten points, both $m=6$ and $m=8$ work.
More generally, any fixed even $m\geq6$ gives NLSM zeros at
\emph{every} even $n>m$, for every $k$ in the range shown.

For odd $m$, the same cubic conditions instead leave a
product of lower amplitudes that is generically nonzero;
its general form is given in eq.~\eqref{eq:nlsmoddcore}.
For example, the six-point two-rectangle zero
\eqref{eq:ZI}, which uses $(m,k)=(5,4)$, gives
\begin{equation}\label{eq:nlsmliftedstrict}
 A_6^{\mathrm{NLSM}}\big|_{\text{conditions }\eqref{eq:ZI}}
 =-\frac{(X_{13}+X_{24})(X_{15}+X_{46})}{X_{35}}.
\end{equation}
The cubic conditions do not set either numerator factor to
zero. Making this expression vanish would require an
additional condition, so the original cubic zero does not
survive as an NLSM zero.

\paragraph{The surviving families.}
For every multiplicity and parameter choice in
table~\ref{tab:nlsm-family-survival}, the corresponding
conditions define a regular maximal linear zero of the NLSM
amplitude, distinct from every ordinary rectangular zero.
Their ranks are the same as in cubic theory. In particular,
Family 2 obeys
\begin{equation}\label{eq:nlsmpairfamily}
 A_n^{\mathrm{NLSM}}\big|_{\mathcal P_n}=0,
 \qquad \operatorname{rank}\mathcal P_n=3n-15.
\end{equation}
To see maximality, relax the defining equations and examine the
first-order NLSM amplitude. At the lowest multiplicities these
variations have independent coefficients. Four-point factorization
then passes this independence to higher multiplicity: for Family 2 it
separates the three new pair conditions at each step, and for the other
families complementary adjacent-pair deletions retain every condition.
Hence no nontrivial linear combination of the conditions can be
relaxed while preserving the zero.
The survival conditions and the odd-$m$ restriction of
Family 1 are derived in appendix~\ref{app:nlsm-survival}.

\subsection{New non-blob zeros in cosmological wavefunctions}
\label{subsec:tree-new-zeros}

Choose a rooted branch partition $(r;A,B)$ of a branched tree $T$
and use the cells of eq.~\eqref{eq:rootedtree}, writing $L=L_I$
and $R=R_J$. Setting the complete lower function
$\psi_{T[\{r\}\cup A]}$ to zero removes every term with
$L=\{r\}\cup A$. The channel $s_{\{r\}\cup A}$ remains a
physical channel of the parent function.

Evaluate the lower zero on the inherited channels, then impose
the numerators of the remaining cells:
\begin{equation}\label{eq:wflowerreplacement}
 \boxed{\begin{gathered}
 \text{zero conditions for }\psi_{T[\{r\}\cup A]}
 \text{ at its inherited channel values},\\
 s_L+s_R-s_{L\cup R}=0,
 \qquad L\neq\{r\}\cup A.
 \end{gathered}}
\end{equation}
The remaining cells have $L\neq\{r\}\cup A$ and $L\cup R\subsetneq V$.
The conditions must be compatible with the full Mandelstam relations.
Regularity, maximality and containment in a blob zero are checked
separately.

We have not yet established an all-multiplicity family of
regular maximal non-blob zeros on branched trees. The examples
below give new zeros for three small trees. We allow linear
relations among their channels that make exact splitting terms vanish
separately or cancel in groups. Extending these examples to
arbitrary tree size remains an open question.

\subsubsection{A new non-blob zero of the $222$ star}

Take the $222$ star with centre $3$ and leaves $1,2,4$.
It has the zero
\begin{equation}\label{eq:newzeros-222}
 \boxed{\begin{gathered}
 s_{13}+s_{23}=0,\\
 s_{134}+s_{13}+s_{34}=0,\qquad
 s_{234}+s_{23}+s_{34}=0.
 \end{gathered}}
\end{equation}
These three independent conditions define a regular maximal
zero outside the blob family. In particular, they do not imply
any of the blob conditions obtained by separating one leaf
from the other two. This is a new non-blob zero already at
five-point kinematics. It follows by collecting terms of the
exact splitting formula and allowing a cancellation between two groups, rather
than making each original term vanish separately.
Its three-channel sums go beyond the four forms used for the
finite amplitude search in subsection~\ref{sec:zeroalphabet}.
Indeed, put $a=s_{13}$ and $c=s_{34}$. The conditions and mapped
kinematics give $s_{23}=-a$, $s_{123}=3c$,
$s_{134}=-a-c$ and $s_{234}=a-c$. Grouping the six terms
of eq.~\eqref{eq:222function} by their three-vertex channels then gives
$0-1/(ac)+1/(ac)=0$.

\subsubsection{A new non-blob zero of the $2222$ star}

Take the $2222$ star with centre $3$ and leaves $1,2,4,5$.
A zero obtained from its exact splitting formula is
\begin{equation}\label{eq:newzeros-2222}
 \boxed{\begin{gathered}
 s_{13}+s_{34}=0,\qquad s_{23}+s_{35}=0,\\
 s_{1234}+s_{1235}=0,\qquad
 s_{1235}+s_{1345}=0.
 \end{gathered}}
\end{equation}
The four conditions pair the two-vertex and four-vertex
channels. They define a regular maximal non-blob zero at
six-point kinematics. Here the conditions make the exact splitting formula
vanish term by term through its numerators and lower factors.
Use the partition $r=3$, $A=\{1,4\}$ and $B=\{2,5\}$.
Cells using all of either side contain a vanishing three-vertex
tree function. In each remaining cell the contracted tree has
three vertices, and its two channel inverses cancel by the
four-vertex pair sums; mapped kinematics also gives
$s_{1345}+s_{2345}=0$.

\subsubsection{A new non-blob zero of the $223$ tree}

Take the $223$ tree with edges $13$, $23$, $34$ and $45$.
Its exact splitting formula gives the following zero:
\begin{equation}\label{eq:223newzero}
 \boxed{\begin{gathered}
 s_{234}+s_{345}=0,\qquad s_{1234}+s_{45}=0,\\
 s_{23}+s_{1345}=0,\qquad
 s_{134}+s_{45}-s_{1345}=0.
 \end{gathered}}
\end{equation}
These four independent conditions define a regular maximal
non-blob zero at six-point kinematics.
With root $4$ and sides $\{1,2,3\}$ and $\{5\}$, the cells
with left channels $134$ and $1234$ vanish through their numerators.
The other two cells, with left channels $34$ and $234$, cancel:
writing $b=s_{23}$, $c=s_{34}$, $d=s_{45}$ and $g=s_{234}$,
their restrictions are $(c+d+g)(b+c)/(bc^2dg)$ and
$-(c+d+g)(b+c)/(bc^2dg)$, respectively. Here the mapped relations imply
$s_{2345}=-c$. For all three examples, generic points of the stated
loci keep every physical channel nonzero. The maximal-tubing sums also
have independent first-order responses to relaxing their three, four
and four defining conditions, respectively; hence these zeros are
maximal. None is contained in a blob zero: for every possible blob
partition at least one required cross condition remains free. The
partitions are listed in appendix~\ref{app:tree-zero-nonblob}.

\subsection{An alphabet for zero conditions}
\label{sec:zeroalphabet}
\label{subsec:alphabet-conventions}

The zero constructions above motivate a finite search using a
specified set of linear relations, which we call an \emph{alphabet
of zero conditions}. We focus on the ${\rm Tr}(\phi^3)$ amplitude for this search, but it is straightforward to generalize the search to NLSM amplitudes and the tree wavefunctions. We use the following four forms on a fixed path. Identify the channel $X_{u,v+1}$ with the interval $[u,v]$ in the fixed path $V=\{1,\ldots,n-1\}$, and write
\begin{equation}
 S_{[u,v]}=X_{u,v+1},\qquad
 S_\varnothing=S_{\{u\}}=S_V=0.
\end{equation}
For two distinct physical intervals $A$ and $B$, with
$2\leq |A|,|B|\leq n-2$, define
\begin{equation}\label{eq:alphabet}
 \begin{array}{ll}
 P(A,B)=S_A+S_B,&
 U(A,B)=S_A+S_B-S_{A\cup B},\\[2pt]
 I(A,B)=S_A+S_B-S_{A\cap B},&
 Q(A,B)=S_A+S_B-S_{A\cup B}-S_{A\cap B}.
 \end{array}
\end{equation}
The zero equations set the displayed forms to zero. Every $S$ is an
actual planar channel $X_{ij}$, or a boundary channel set to zero.  A union or intersection is admitted only when it is a connected interval or one of these boundary cases.  We remove identities and single-channel equations, and identify conditions which differ only by an overall nonzero constant.  Pair, union, intersection and union-plus-intersection conditions form the $P/U/I/Q$ alphabet $\mathcal L_n$.  Appendix~\ref{app:alphabet} proves closure under inheritance
when the local cut is compatible with the fixed parent path.

A zero is describable in this alphabet only when the letters that vanish on it span its complete equation space,
\begin{equation}\label{eq:alphabetspan}
 \Span(\mathcal L_n\cap W_Z)=W_Z.
\end{equation}
The path $V$ remains fixed. Cyclic rotations identify equivalent
zero loci; a class is retained only if at least one presentation
satisfies eq.~\eqref{eq:alphabetspan} using these four forms.

The search imposes independent subsets of the chosen relations
and checks the full amplitude. Every retained result is verified
by exact rational algebra, has no identically vanishing physical
channel, and passes the maximality test against arbitrary
linear equations with no constant term. For the color-ordered ${\rm Tr}(\phi^3)$ amplitude, cyclic rotations are identified;
reflections are retained when they are not cyclically equivalent. The counts
are shown in table~\ref{tab:zerocomparison}. 
\begin{table}[htbp]
\centering
\begin{tabular}{@{}cccc@{}}
\toprule
$n$ & Rank & Old hidden zeros & New zeros \\
\midrule
$6$ & $1$--$2$ & $0$ & $0$ \\
    & $3$       & $1$ & $2$ \\
    & $4$       & $1$ & $9$ \\
    & $5$       & $0$ & $5$ \\
    & $6$       & $0$ & $2$ \\
    & $7$       & $0$ & $0$ \\
    & $8$       & $0$ & $0$ \\
\addlinespace
$7$ & $1$--$3$ & $0$ & $0$ \\
    & $4$       & $1$ & $2$ \\
    & $5$       & $0$ & $14$ \\
    & $6$       & $1$ & $97$ \\
\bottomrule
\end{tabular}
\caption{Regular maximal linear zeros defined using the
fixed-path $P/U/I/Q$ alphabet, with cyclic rotations identified.
The enumeration is exhaustive through rank eight at six points
and through rank six at seven points.}
\label{tab:zerocomparison}
\end{table}
It remains to determine which of these zeros can be obtained
recursively from ordinary zeros and how the finite catalogue
extends to families at arbitrary multiplicity.

\section{Continuous families of zeros}
\label{sec:continuouszeros}

We now explore the possibility of zero loci whose conditions vary
continuously. The five-point
$\operatorname{Tr}(\phi^3)$ amplitude provides an example, which the
exact splitting formula then extends to six points.

\paragraph{A continuous family at five points.}
The five-point amplitude is
\begin{equation}
 A_5=
 \frac{1}{X_{13}X_{14}}
 +\frac{1}{X_{13}X_{35}}
 +\frac{1}{X_{14}X_{24}}
 +\frac{1}{X_{24}X_{25}}
 +\frac{1}{X_{25}X_{35}}.
\end{equation}
Let $\lambda$ be a dimensionless parameter, and consider the
following three linear conditions:
\begin{equation}\label{eq:continuous-five-zero}
 \boxed{
 \begin{gathered}
 X_{24}=-\lambda X_{13},\qquad X_{25}=X_{14},\\
 (2-\lambda)X_{35}=\lambda(X_{13}+X_{14}).
 \end{gathered}}
\end{equation}

For every fixed $\lambda\notin\{0,1,2\}$, these equations
define a regular maximal rank-three zero. Different values of
$\lambda$ give
different zeros, since they fix different ratios
$X_{24}/X_{13}$. For example, $\lambda=-1$ gives
\begin{equation}
 X_{24}=X_{13},\qquad X_{25}=X_{14},\qquad
 3X_{35}+X_{13}+X_{14}=0.
\end{equation}
Maximality can be checked directly from the five-point numerator:
its first-order responses to relaxing the three equations are
independent functions of the free channels $X_{13}$ and $X_{14}$
when $\lambda\notin\{0,1,2\}$. Thus no nontrivial linear combination
of the defining equations can be relaxed while preserving the zero.

\paragraph{Connections to ordinary hidden zeros.}
The exceptional parameter values explain how this family
relates to the ordinary zeros. At $\lambda=0$, the equations
force $X_{24}=X_{35}=0$, so the entire locus lies on poles.
The values $\lambda=1$ and $\lambda=2$ instead give
restrictions of ordinary hidden zeros.

At $\lambda=1$, the conditions become
\begin{equation}
 X_{13}+X_{24}=0,\qquad
 X_{35}=X_{13}+X_{25},\qquad X_{25}=X_{14}.
\end{equation}
The first two equations already define an ordinary
five-point zero. The third is an additional restriction
and does not define a new maximal zero.

At $\lambda=2$, the conditions become
\begin{equation}
 X_{24}=-2X_{13},\qquad X_{14}=X_{25}=-X_{13},
\end{equation}
with $X_{35}$ unrestricted. They lie inside the ordinary
zero
\begin{equation}
 X_{13}+X_{25}=0,\qquad
 X_{13}+X_{24}-X_{14}=0.
\end{equation}
The continuous family therefore approaches restrictions
of two different ordinary zeros. 

\paragraph{A continuous family at six points.}
The five-point family was found by a direct cancellation between
diagrams. It provides another application of the exact splitting formula: a new
type of lower-point zero, with continuously variable coefficients,
can be carried to higher multiplicity by the same construction
used for the discrete families. At six points, keep its three
conditions on the channels of $A[1,2,3,4,5]$ and add two numerator
conditions from the $j=5$ exact splitting formula:
\begin{equation}\label{eq:continuous-six-zero}
 \boxed{\begin{gathered}
 X_{24}=-\lambda X_{13},\qquad X_{25}=X_{14},\\
 (2-\lambda)X_{35}=\lambda(X_{13}+X_{14}),\\
 X_{26}=X_{25}+X_{46},\qquad
 X_{36}=X_{35}+X_{46}.
 \end{gathered}}
\end{equation}
The last two equations remove the terms in rows $a=2,3$
of that exact splitting formula. The remaining term is a product of complete
lower amplitudes:
\begin{equation}\label{eq:continuous-six-product}
 \left.A_6\right|_{\substack{X_{26}=X_{25}+X_{46}\\
                            X_{36}=X_{35}+X_{46}}}
 =A[1,2,3,4,5]A[1,4,5,6].
\end{equation}
The first three equations in
eq.~\eqref{eq:continuous-six-zero} put the five-point factor
on its continuous zero, so the full six-point amplitude
vanishes. The channel $X_{15}$ remains free: it is a boundary
edge of the five-point factor and does not enter its zero
conditions. In the mesh, the two added equations are simply
$c_{25}=c_{35}=0$.

For every fixed $\lambda\notin\{0,1,2\}$, these five
independent conditions define a regular maximal rank-five
zero. The channels $X_{13},X_{14},X_{15},X_{46}$ can be chosen
freely, and all propagators remain nonzero at generic points.
Varying $\lambda$ changes the ratio $X_{24}/X_{13}$ and hence
the zero itself. The free $X_{15}$ pole separates the three
five-point variations from the two new conditions; the pole at
$X_{46}=-X_{35}$ singles out one of these, and the other then remains.
The five responses
are therefore independent, so this lifted zero is also maximal.

The exceptional values relate this family to zeros encountered
earlier. At $\lambda=0$, the conditions force
$X_{24}=X_{35}=0$, placing the locus on propagator poles.
At $\lambda=1$, the rank-five locus lies inside an ordinary
rank-three skinny zero. At $\lambda=2$, it lies inside the
rank-four two-rectangle zero of Family~1 with $(m,k)=(5,3)$.
Thus the generic members are maximal, while these two special
members are further restrictions of larger zeros.

\section{Conclusions}
\label{sec: conclusion}

We have promoted hidden zeros to explicit splitting
representations at generic kinematics. For every rectangular
zero of the cubic theory, the full amplitude is a sum of
complete lower-amplitude products, with one linear prefactor
for each independent zero condition and only physical
propagator poles. This factorized organization remains valid
when all conditions are relaxed simultaneously. Its extensions
to the NLSM and mapped tree wavefunctions preserve the role
of the chosen zero in organizing the lower factors.

The lower factors also turn known zeros into a construction
of further zeros. In $\operatorname{Tr}(\phi^3)$ theory,
this yields five explicit families at arbitrary multiplicity.
We have determined which conditions also give zeros of NLSM
amplitudes and applied the same reasoning to branched tree
functions entering cosmological wavefunctions, where new
non-blob zeros already appear at five-point kinematics.
The finite searches complement these constructions with
further examples whose general patterns remain to be
understood. Extending these examples to arbitrary multiplicity
and establishing maximality of their extensions are natural
next steps.

An open question is which zeros can be reached from the
ordinary ones by successive splitting constructions. Within
a fixed alphabet, can every maximal zero defined by that
alphabet be obtained in this way? Can a splitting adapted to
a new zero reveal further zeros that the original exact splitting formulas do
not expose? 

We have also uncovered continuous families of zeros.
The five-point example shows that the coefficients of the defining linear
conditions can vary continuously while preserving vanishing.
Its extension to six points demonstrates that the exact splitting formula can also
carry these continuous families to higher multiplicity. We leave the origin and more general structure of these zeros for future work.

A separate direction is to extend exact splitting formulas to spinning particles.
Universal and related tree-amplitude expansions express gravity,
Einstein--Yang--Mills and Yang--Mills amplitudes in terms of
same-multiplicity amplitudes with some particles replaced by
lower-spin states
\cite{StiebergerTaylorEYM,FuDuHuangFengEYM,TengFengEYM,FengLiZhouEYM,DongHeHouUniversal,HuZhouYMExpansion}.
These expansions change particle content while keeping multiplicity
fixed; our exact splitting formulas lower multiplicity and make a chosen zero explicit
in each term. It would be interesting to test whether the two
organizations can be combined to give fully relaxed exact splitting formulas for
Yang--Mills, Einstein--Yang--Mills and gravity, with
polarization-dependent prefactors and possibly complete
mixed-theory lower-point amplitudes.

\vspace{0.5cm}
\paragraph{Acknowledgments} YL was supported in part by a Technion fellowship. LR is supported by the National Natural Science Foundation of China General Program No. 12475070 and the Beijing Natural Science Foundation International Scientist Project No. IS24014. We used GPT-5.6 and GPT-6 (OpenAI, 2026) as research and coding assistants for exploratory calculations, figure preparation, and manuscript editing; the authors are responsible for the scientific content.
\appendix
\begingroup
\small
\setlength{\abovedisplayskip}{4pt}
\setlength{\belowdisplayskip}{4pt}
\setlength{\abovedisplayshortskip}{2pt}
\setlength{\belowdisplayshortskip}{2pt}
\setlength{\textfloatsep}{10pt}
\setlength{\floatsep}{8pt}
\section{Proofs and supplementary formulas}
\label{app:proofs}

\subsection{Tree exact splitting formula: proof and example}
\label{app:hosfdetails}
\subsubsection{Cancellation and the maximal-tubing correspondence}
\label{sec:treeproof}

To prove eq.~\eqref{eq:rootedtree}, expand each complete lower
function into channel monomials. Given
$\tau\in\operatorname{MT}(G)$, adjoin all singletons and the full set:
\begin{equation}\label{eq:treehierarchy}
 \overline\tau=\tau\cup\{\{v\}:v\in V\}\cup\{V\}.
\end{equation}
For each nonsingleton set $S$ in this family, its maximal proper subsets
will be called its children. They are disjoint and cover $S$. There are
exactly two: if there were three or more, contract them inside $G[S]$.
Two of the resulting vertices are adjacent, and the union of their
preimages could be inserted as a new compatible proper channel. This
would contradict maximality. Thus a maximal tubing is a binary hierarchy
of connected sets. Conversely, every such hierarchy gives a maximal
tubing upon deleting the singleton and full sets. A hierarchy with $N$
singleton leaves has $N-1$ nonsingleton sets, so it has $N-2$ propagators.
The added singleton and full sets carry no propagators.

For a proper cell, abbreviate $L=L_I$, $R=R_J$, and $U=L\cup R$.
The elementary identity is
\begin{equation}\label{eq:treelocal}
 \frac{s_L+s_R-s_U}{s_Ls_Rs_U}
 =\frac1{s_Rs_U}+\frac1{s_Ls_U}-\frac1{s_Ls_R}.
\end{equation}
Choose independently one maximal tubing inside $L$, one inside $R$,
and one in $G/U$, denoted by $\tau_L,\tau_R,\tau_Q$.
Their common propagator product is
\begin{equation}\label{eq:treecommonweight}
 F=\prod_{S\in\tau_L}\frac1{s_S}
   \prod_{S\in\tau_R}\frac1{s_S}
   \prod_{Q\in\tau_Q}\frac1{s_{\pi_U^{-1}(Q)}}.
\end{equation}
This choice contributes $+F/(s_Rs_U)$, $+F/(s_Ls_U)$ and
$-F/(s_Ls_R)$. For the full cell there are only the two positive
contributions $F/s_R$ and $F/s_L$. We keep track of each contribution
together with its cell and its three chosen tubings, even if several
contributions have the same denominator.

\begin{proof}
\emph{Part I: all incompatible contributions cancel.}

Every negative contribution contains the overlapping channels $L,R$,
with $L\cap R=\{r\}$ and neither containing the other.
We pair it with one positive contribution having exactly the same
propagator labels.

In the binary hierarchy of $G/U$, the contracted leaf $[U]$ has a
unique parent. Let $K$ be the other child of that parent, pulled back to
the original vertices, and set
\begin{equation}\label{eq:treeenlargement}
 U'=U\cup K.
\end{equation}
The set $K$ is connected and avoids $r$, so it lies entirely in one
component of $G-r$. Suppose first that $K\subseteq A$. Then
$L'=L\cup K$ is connected, and $(L',R)$ is another admissible cell.
To construct its lower hierarchy inside $L'$, take $L$ and $K$ as its
two children and retain their descendants. Keep the hierarchy inside
$R$, and contract the old parent $U'$ in the outside hierarchy.
This gives the positive contribution with explicit channels $R,U'$.
If $U'=V$, use the positive full-cell contribution with explicit
channel $R$ instead.

The two contributions have opposite signs and identical denominators.
The channel $L$ moves from the explicit kernel into the lower function
on $L'$. The channel $U'$, when proper, moves from the outside hierarchy
into the explicit kernel. The subtree on $K$ and every other channel
are unchanged. When $U'=V$, neither contribution contains its propagator.
For $K\subseteq B$, the same construction enlarges $R$ and gives the
positive contribution with explicit channels $L,U'$.

The pairing is reversible. Consider a positive contribution with
explicit channels $R,U'$ and inspect the two children of the full set
$L'$ in its lower hierarchy. Denote the child containing $r$ by $L$
and the other child by $K$. If $L\ne\{r\}$, set $U=L\cup R$,
remove $K$ from the left factor, and replace the outside leaf $[U']$
by a parent with children $[U]$ and $K$. Its full preimage is
$U'=U\cup K$. Retaining the descendants of $K$ reconstructs the unique
negative contribution at $(L,R)$. The argument includes $U'=V$, for
which the outside hierarchy initially consists of a single leaf.

A positive contribution of this type is incompatible precisely when
$L\ne\{r\}$: in that case $L$ occurs inside the left factor and
overlaps the explicit channel $R$. If instead $L=\{r\}$, every channel
of that factor avoids $r$ and is disjoint from $R$ and its descendants.
The quotient channels either contain the full cell union or are
disjoint from it, so no other incompatibility can occur. The positive
contributions with explicit left channel are treated symmetrically.
Consequently the construction pairs every negative contribution with
exactly one incompatible positive contribution, exhausting both sets. The one-to-one correspondence between positive and negative incompatible contributions is illustrated in fig.~\ref{fig:part I}.
\begin{figure}[H]
\centering
\begin{tikzpicture}[x=0.94cm,y=0.94cm,
  explicit/.style={black,line width=1.1pt,rounded corners=5pt},
  lower/.style={gray,line width=.7pt,rounded corners=5pt},
  every node/.style={font=\small}]
\begin{scope}
  \node[anchor=south] at (2,1.58) {$-F/(s_Ls_R)$};
  \draw[lower] (-.35,.95)--(2.52,.95)--(2.52,-1.63)
    --(1.53,-1.63)--(1.53,-.49)--(-.35,-.49)--cycle;
  \draw[lower] (2.75,-.39) rectangle (4.25,.43);
  \draw[explicit] (.65,-.39) rectangle (2.33,.43);
  \draw[explicit] (1.69,-1.52) rectangle (2.29,.27);
  \node[gray] at (0,1.20) {$U'$};
  \node at (1.20,.67) {$L$};
  \node[right,inner sep=1pt] at (2.51,-.83) {$R$};
  \node[gray] at (3.5,.67) {$T$};
\end{scope}
\begin{scope}[xshift=5.75cm]
  \node[anchor=south] at (2,1.58) {$+F'/(s_Rs_{U'})$};
  \draw[explicit] (-.35,.95)--(2.52,.95)--(2.52,-1.63)
    --(1.53,-1.63)--(1.53,-.49)--(-.35,-.49)--cycle;
  \draw[lower] (2.75,-.39) rectangle (4.25,.43);
  \draw[lower] (.65,-.39) rectangle (2.33,.43);
  \draw[explicit] (1.69,-1.52) rectangle (2.29,.27);
  \node at (0,1.20) {$U'$};
  \node[gray] at (1.20,.67) {$L$};
  \node[right,inner sep=1pt] at (2.51,-.83) {$R$};
  \node[gray] at (3.5,.67) {$T$};
\end{scope}
\foreach \offset in {0,5.75}{
  \begin{scope}[xshift=\offset cm]
    \draw[gray!65] (0,0)--(4,0) (2,0)--(2,-1.2);
    \foreach \x/\lab in {0/1,1/2,2/3,3/4,4/5}{
      \fill (\x,0) circle (1.7pt);
      \node[below,inner sep=1pt,font=\scriptsize] at (\x,0) {$\lab$};
    }
    \fill (2,-1.2) circle (1.7pt);
    \node[anchor=east,inner sep=1pt,font=\scriptsize]
      at (1.92,-1.2) {$6$};
  \end{scope}
}
\node at (5.0,-.30) {$\longleftrightarrow$};
\end{tikzpicture}

\smallskip
{\footnotesize
\begin{tabular}{c}
$r=3,\quad L=\{2,3\},\quad R=\{3,6\},\quad K=\{1\}$\\
$U'=\{1,2,3,6\},\quad T=\{4,5\}$\\[2pt]
$F=(s_{U'}s_T)^{-1},\qquad F'=(s_Ls_T)^{-1}$\\[2pt]
Both occurrences have denominator $s_Ls_Rs_{U'}s_T$ and opposite signs.
\end{tabular}\par}
\smallskip
{\scriptsize
\tikz[baseline=-.5ex]{\draw[black,line width=1.1pt] (0,0)--(.4,0);}
 explicit kernel channels \qquad
\tikz[baseline=-.5ex]{\draw[gray,line width=.7pt] (0,0)--(.4,0);}
 channels in lower factors\par}
\caption{Pairing that cancels the incompatible contributions.}
\label{fig:part I}
\end{figure}

\smallskip
\noindent
\emph{Part II: compatible survivors correspond bijectively to maximal tubings.}

An unpaired positive contribution with explicit $R,U$ has top split
$L=\{r\}\sqcup K$ in its left hierarchy. Its lower left channels
are therefore disjoint from all channels in the right hierarchy.
Together with the quotient channels and the explicit channels $R,U$,
they form a compatible family. For a proper cell its cardinality is
\begin{equation}\label{eq:treesurvivorcount}
 (|L|-2)+(|R|-2)+(N-|U|-1)+2=N-2,
\end{equation}
using $|L|+|R|=|U|+1$. For the full cell the count is
$(|L|-2)+(|R|-2)+1=N-2$. Thus every survivor is a maximal tubing.

Conversely, take any maximal tubing and its augmented binary hierarchy.
Among the sets containing $r$, let $U$ be the smallest containing
vertices from both $A$ and $B$; $U=V$ is allowed. Let $D$ be its child
containing $r$, and $K$ its other child. The connected set $K$ lies
entirely on one side, say $K\subseteq A$. Minimality of $U$ then gives
$\emptyset\ne D\setminus\{r\}\subseteq B$. Set
\begin{equation}\label{eq:treerecovery}
 L=\{r\}\cup K,\qquad R=D.
\end{equation}
The edge joining $K$ to $D$ must end at $r$, since no component of
$G-r$ intersects both $A$ and $B$. Hence $L$ is connected.
Build its hierarchy with children $\{r\}$ and $K$, retaining all
descendants of $K$. Retain the hierarchy on $D$ as the right factor,
and contract $U$ to obtain the outside hierarchy. This is exactly one
unpaired positive contribution, with explicit $R,U$ when $U$ is
proper and explicit $R$ when $U=V$. If $K\subseteq B$, exchange left
and right. Fig.~\ref{fig:part II} illustrates two scenarios where a maximal tubing is mapped to a positive contribution.
\begin{figure}[H]
\centering
\begin{tikzpicture}[x=0.94cm,y=0.94cm,
  explicit/.style={black,line width=1.1pt,rounded corners=5pt},
  lower/.style={gray,line width=.7pt,rounded corners=5pt},
  every node/.style={font=\small}]
\begin{scope}
  \node[anchor=south] at (2,1.58)
    {$K\subseteq A:\quad +1/(s_Rs_U)$};
  \draw[explicit] (-.35,.95)--(2.52,.95)--(2.52,-1.63)
    --(1.53,-1.63)--(1.53,-.49)--(-.35,-.49)--cycle;
  \draw[explicit] (1.69,-1.52) rectangle (2.29,.27);
  \draw[lower] (-.23,-.39) rectangle (1.33,.43);
  \draw[lower] (2.75,-.39) rectangle (4.25,.43);
  \node at (0,1.20) {$U$};
  \node[gray] at (.55,.67) {$K$};
  \node[right,inner sep=1pt] at (2.51,-.83) {$R$};
  \node[gray] at (3.5,.67) {$T$};
\end{scope}
\begin{scope}[xshift=5.75cm]
  \node[anchor=south] at (2,1.58)
    {$K\subseteq B:\quad +1/(s_Ls_U)$};
  \draw[lower] (-.42,-.76) rectangle (4.42,1.05);
  \draw[explicit] (.53,-.57) rectangle (4.27,.74);
  \draw[explicit] (.72,-.38) rectangle (2.33,.40);
  \draw[lower] (2.73,-.38) rectangle (4.12,.40);
  \node[gray] at (0,1.28) {$T$};
  \node at (2.52,.91) {$U$};
  \node at (1.20,.56) {$L$};
  \node[gray] at (3.45,.56) {$K$};
\end{scope}
\foreach \offset in {0,5.75}{
  \begin{scope}[xshift=\offset cm]
    \draw[gray!65] (0,0)--(4,0) (2,0)--(2,-1.2);
    \foreach \x/\lab in {0/1,1/2,2/3,3/4,4/5}{
      \fill (\x,0) circle (1.7pt);
      \node[below,inner sep=1pt,font=\scriptsize] at (\x,0) {$\lab$};
    }
    \fill (2,-1.2) circle (1.7pt);
    \node[anchor=east,inner sep=1pt,font=\scriptsize]
      at (1.92,-1.2) {$6$};
  \end{scope}
}
\end{tikzpicture}

\smallskip
{\scriptsize
\begin{tabular}{@{}c@{\hspace{1.2cm}}c@{}}
$U=\{1,2,3,6\},\ R=\{3,6\}$ &
$U=\{2,3,4,5\},\ L=\{2,3\}$\\
$K=\{1,2\},\ L=\{1,2,3\},\ T=\{4,5\}$ &
$K=\{4,5\},\ R=\{3,4,5\},\ T=\{1,2,3,4,5\}$
\end{tabular}\par}
{\footnotesize $r=3,\quad A=\{1,2\},\quad B=\{4,5,6\}$\par}
\smallskip
{\scriptsize
\tikz[baseline=-.5ex]{\draw[black,line width=1.1pt] (0,0)--(.4,0);}
 explicit kernel channels \qquad
\tikz[baseline=-.5ex]{\draw[gray,line width=.7pt] (0,0)--(.4,0);}
 channels in lower factors\par}
\caption{Maximal tubings that are mapped to positive cells.}
\label{fig:part II}
\end{figure}

The smallest mixed set $U$ and its children are unique, so this
recovery is a bijection. After Part I, every surviving
monomial occurs once with coefficient $+1$, and their sum is
eq.~\eqref{eq:warmup-tree-sum}. This proves eq.~\eqref{eq:rootedtree};
the rectangle relation then gives eq.~\eqref{eq:treehosfs}.
\end{proof}

The independent choices of the three lower tubings give the
complete lower tree functions in each exact splitting term.

\subsubsection{Two adjacent branching vertices}
\label{app:treecatalogue}
Consider the tree
\begin{equation}\label{eq:treetwobranchgraph}
 E(G)=\{12,23,34,25,36\},\qquad V=\{1,\ldots,6\},\qquad q=7.
\end{equation}
Choose $r=2$, $A=\{1\}$ and $B=\{3,4,5,6\}$.
The rooted right channels are
\[
 \mathcal S=\{23,25,234,235,236,2345,2346,2356,23456\}.
\]
Writing $U_R=\{1\}\cup R$ and
$C_R=\sum_{b\in R\setminus\{2\}}c_{1b}$, the exact splitting formula is
\begin{equation}\label{eq:treeendpointformula}
 \psi_G=\sum_{R\in\mathcal S}C_R\,
 \mathcal J_G[12]\,\mathcal J_G[R]\,
 \mathcal J_G^{\mathrm{out}}[U_R].
\end{equation}
For example, the $R=23$ term is
\begin{equation}
 \frac{c_{13}}{s_{12}s_{23}s_{123}}
 \psi_{[123]-(4,5,6)}^{\langle123\rangle}.
\end{equation}
Its outside factor is a branched tree function:
\begin{align}\label{eq:treebranchedquotient}
 \psi_{[123]-(4,5,6)}^{\langle123\rangle}
 ={}&\frac1{s_{1234}s_{12345}}+\frac1{s_{1235}s_{12345}}
    +\frac1{s_{1234}s_{12346}}\notag\\
 &+\frac1{s_{1236}s_{12346}}+\frac1{s_{1235}s_{12356}}
    +\frac1{s_{1236}s_{12356}}.
\end{align}

\subsection{Splittings adapted to the new zeros}
\label{app:newsplittings}
The term counts below refer to the displayed constructions;
minimum lengths are not considered here.

\paragraph{Inheritance of the linear conditions.}
\label{app:alphabet}\label{lem:inherit}
For an inherited polygon $P=(p_1<\cdots<p_m)$ cut at
$(p_m,p_1)$, replace each physical label $p_u$, $1\leq u\leq m-1$,
by its parent block $[p_u,p_{u+1}-1]$. The last label $p_m$ is
the auxiliary endpoint and is not replaced by a block. This preserves
interval unions and intersections. A subtractive channel that becomes a boundary
edge drops out, so a nonzero $P/U/I/Q$ condition remains a
condition of the same alphabet on the parent path.

\subsubsection{Composition and two-rectangle zeros}
The first row of the cubic exact splitting formula has the common factor
$A[1,\ldots,j]$. Summing its remaining factors by the skinny
exact splitting formula gives
\begin{align}\label{eq:rowgrouped}
 A_n={}&A[1,\ldots,j]A[1,j-1,j,\ldots,n]\notag\\
 &+\sum_{a=2}^{j-2}\sum_{k=j+1}^{n}
 \frac{X_{aj}+X_{j-1,k}-X_{ak}}{X_{aj}X_{j-1,k}X_{ak}}\,
 A[a,\ldots,j]A[j-1,\ldots,k]
 A[k,\ldots,n,1,\ldots,a].
\end{align}
The outside amplitude retains the physical channel $X_{1j}$,
although it is a boundary edge of the first factor.

Let $Y$ be a regular rank-$r$ zero of $A[1,\ldots,j]$,
with $4\leq j<n$, and suppose it has a $K$-term splitting. Substituting that identity into the first product
in eq.~\eqref{eq:rowgrouped} gives
\begin{equation}\label{eq:rowcounts}
 \widehat r=r+(j-3)(n-j),\qquad
 \widehat K=K+(j-3)(n-j).
\end{equation}
The added equations are
\begin{equation}\label{eq:composition-conditions}
 X_{ak}=X_{aj}+X_{j-1,k},\qquad
 2\leq a\leq j-2,\quad j+1\leq k\leq n.
\end{equation}
Together with the child equations, they define a regular zero
$\widehat Y$ of $A_n$. They solve distinct channels outside the
child polygon, while the variables $X_{j-1,k}$ remain free. The inherited conditions remain in
the parent alphabet by the preceding argument. In particular,
a strict child splitting gives a strict parent splitting.
Applying this construction to the ordinary rectangle inside
$A[1,\ldots,m]$ proves the strict splitting for the
two-rectangle family, with the rank in
eq.~\eqref{eq:rectcorerank}.

At six points, use the two-term exact splitting formula of $A[1,2,3,4,5]$
inside the first product. This gives
\begin{equation}\label{eq:two-rectangle-six-splitting}
\begin{aligned}
A_6={}&\frac{X_{14}+X_{35}}{X_{14}X_{35}}
 A[1,2,3,4]A[1,4,5,6]\\
 &+\frac{X_{24}+X_{35}-X_{25}}{X_{24}X_{35}X_{25}}
 A[1,4,5,6]\\
 &+\frac{X_{25}+X_{46}-X_{26}}{X_{25}X_{46}X_{26}}
 A[2,3,4,5]\\
 &+\frac{X_{35}+X_{46}-X_{36}}{X_{35}X_{46}X_{36}}
 A[1,2,3,6].
\end{aligned}
\end{equation}
Its four numerators are the four equations of the six-point
two-rectangle zero. Grouping the first row before substituting
the child identity keeps every numerator linear.

\subsubsection{Three-pair zeros}
\paragraph{Cubic amplitudes.}
At six points, a splitting adapted to the three pair sums is
\begin{equation}\label{eq:three-pair-six-splitting}
\begin{aligned}
 A_6={}&\frac{X_{13}+X_{26}}{X_{13}X_{26}}A[2,3,4,5,6]
 +\frac{X_{24}+X_{35}}{X_{24}X_{35}}A[1,2,3,5,6]\\
 &+\frac{X_{46}+X_{15}}{X_{46}X_{15}}A[1,2,3,4,6]\\
 &-\frac{X_{13}+X_{26}}{X_{13}X_{26}}
 A[1,4,5,6]A[2,3,4,6]\\
 &-\frac{X_{13}+X_{26}}{X_{13}X_{26}}
 A[2,3,4,5]A[2,3,5,6].
\end{aligned}
\end{equation}
To derive it, expand the first three terms into their thirty
reciprocal monomials. After subtracting the fourteen physical
diagrams, the sixteen remaining monomials factor as
\begin{equation}\label{eq:three-pair-surplus}
 \frac{X_{13}+X_{26}}{X_{13}X_{26}}
 \bigl(A[1,4,5,6]A[2,3,4,6]
       +A[2,3,4,5]A[2,3,5,6]\bigr).
\end{equation}
Subtracting these products yields eq.~\eqref{eq:three-pair-six-splitting}.

For $n\geq7$, put $A_{\rm out}=A[1,5,6,\ldots,n]$.
A further regrouping of the six-point amplitude, followed
by eq.~\eqref{eq:rowgrouped} with $j=6$, gives
\begin{align}\label{eq:pairfamilyformula}
 A_n={}&\frac{X_{13}+X_{26}}{X_{13}X_{26}}
 A_{\rm out}A[1,3,4,5,6]
 +\frac{X_{24}+X_{35}}{X_{24}X_{35}}
 A_{\rm out}A[1,2,4,5,6]\notag\\
 &-\frac{X_{46}+X_{15}}{X_{46}X_{15}X_{26}}
 A_{\rm out}A[1,3,4,5]
 -\frac{X_{46}+X_{15}}{X_{35}X_{46}X_{15}X_{14}}
 A_{\rm out}\notag\\
 &+\sum_{a=2}^{4}\sum_{v=7}^{n}
 \frac{X_{a6}+X_{5v}-X_{av}}{X_{a6}X_{5v}X_{av}}\,
 A[a,\ldots,6]A[5,\ldots,v]
 A[v,\ldots,n,1,\ldots,a].
\end{align}
The four terms preceding the sum are $A_{\rm out}$ times
a six-point identity, verified by expanding its complete
lower amplitudes. The formula has $3n-14$ terms, each
containing a complete amplitude with at least four legs.

\paragraph{NLSM amplitudes.}
The six-point NLSM amplitude can likewise be grouped by
the three pair sums:
\begin{align}\label{eq:nlsmpairsplit}
 A_6^{\mathrm{NLSM}}={}&
 -\frac{X_{13}+X_{26}}{X_{36}}
 A^{\mathrm{NLSM}}[3,4,5,6]
 -(X_{13}+X_{26})A^{\mathrm{ext}}[1,5,6]\notag\\
 &-\frac{X_{24}+X_{35}}{X_{25}}
 A^{\mathrm{NLSM}}[1,2,5,6]
 -(X_{24}+X_{35})A^{\mathrm{ext}}[1,5,6]\notag\\
 &-\frac{X_{15}+X_{46}}{X_{14}}
 A^{\mathrm{NLSM}}[1,2,3,4]
 -(X_{15}+X_{46})A^{\mathrm{ext}}[1,5,6].
\end{align}
Here $A^{\mathrm{ext}}[1,5,6]=1$.
For even $n\geq8$, define
\begin{equation}
 \mathcal C^{(6)}_{av}=
 \mathcal J^{\mathrm{NLSM}}[a,\ldots,6]\,
 \mathcal J^{\mathrm{NLSM}}[5,\ldots,v]\,
 \mathcal J^{\mathrm{NLSM}}[v,\ldots,n,1,\ldots,a],
 \quad 2\leq a\leq4,\quad7\leq v\leq n,
\end{equation}
using the projected currents in
eq.~\eqref{eq:nlsm-projected-current}. Projecting the grouped
identity then yields
\begin{equation}\label{eq:nlsmpairfamilyformula}
 A_n^{\mathrm{NLSM}}=
 A^{\mathrm{ext}}[1,5,6,\ldots,n]\,
 A^{\mathrm{NLSM}}[1,\ldots,6]
 +\sum_{a=2}^{4}\sum_{v=7}^{n}
 (X_{a6}+X_{5v}-X_{av})\,\mathcal C^{(6)}_{av}.
\end{equation}
The mixed outside amplitude has scalar legs $1,5,n$.
Substituting eq.~\eqref{eq:nlsmpairsplit} into its six-point
factor gives $3n-12$ terms whose numerators are the equations
of the three-pair zero.

\subsubsection{Balanced-column, row--column and opposite-end zeros}
\label{app:coverformulas}
In each selected exact splitting cell, write its vanishing four-point
factor as $A_4=Q/(uv)$, where $Q=u+v$. Expanding the cell's
cubic kernel gives
\begin{equation}\label{eq:coveratomicexpansion}
 \frac{x+y-z}{xyz}\frac{Q}{uv}\prod_t A[P_t]
 =\left(\frac{Q}{yzuv}+\frac{Q}{xzuv}
              -\frac{Q}{xyuv}\right)\prod_t A[P_t].
\end{equation}
At the intersection of a selected row and column, expand
one of the two four-point factors. Leave the other lower
factors and all unselected cells unchanged. This gives
generic identities whose numerators span the equations of
the balanced-column, row--column and opposite-end zeros.
With $p=j-2$ and $q=n-j$, the numbers of selected cells
are $p$, $p+q-1$ and $2$, respectively. Thus the resulting
formulas have at most
\begin{equation}\label{eq:coverlengths}
 p(q+2),\qquad pq+2(p+q-1),\qquad n+1
\end{equation}
terms, respectively.

\subsection{Additional checks for new zeros}
\subsubsection{Survival under the NLSM projection}
\label{app:nlsm-survival}
Families 2--5 and even-$m$ Family 1 are
preserved by the parent $\delta$-shift:
their cubic amplitudes vanish for every
$\delta$, so the leading NLSM coefficient
also vanishes.

For odd $m$, the child rectangle
$Z_{m;k}\subset A^{\phi^3}[1,\ldots,m]$
has proper-cell numerators
\begin{equation}
 \begin{gathered}
 \rho_{av}=X_{ak}+X_{k-1,v}-X_{av},\\
 1\leq a\leq k-2,\quad k+1\leq v\leq m,\quad (a,v)\ne(1,m),
 \end{gathered}
 \label{eq:nlsm-child-numerators}
\end{equation}
which remain invariant. Its full-cell
numerator $X_{1k}+X_{k-1,m}$ instead
shifts by $-\delta$, because the child
boundary $X_{1m}$ is absent while the
parent channel shifts. Only that cell
survives, giving the exact cubic identity
\begin{equation}\label{eq:nlsm-odd-child-shift}
 A_m^{\phi^3}(X^{(\delta)})\big|_{Z_{m;k}}
 =-\delta\,
 \frac{A^{\phi^3}[1,\ldots,k](X^{(\delta)})}
      {X_{1k}^{(\delta)}}
 \frac{A^{\phi^3}[k-1,\ldots,m](X^{(\delta)})}
      {X_{k-1,m}^{(\delta)}}.
\end{equation}
One lower polygon is even and the other odd.
Projecting their complete amplitudes and
attaching channels yields
\begin{equation}\label{eq:nlsm-odd-child-product}
 A_m^{\mathrm{ext}}[1,\ldots,m]\big|_{Z_{m;k}}
 =\begin{cases}
 \displaystyle
 \frac{A^{\mathrm{NLSM}}[1,\ldots,k]
       A^{\mathrm{ext}}[k-1,\ldots,m]}{X_{1k}},
 &k\ \text{even},\\[7pt]
 \displaystyle
 \frac{A^{\mathrm{ext}}[1,\ldots,k]
       A^{\mathrm{NLSM}}[k-1,\ldots,m]}{X_{k-1,m}},
 &k\ \text{odd}.
 \end{cases}
\end{equation}
The outside polygon gives the pion factor,
so the parent result is
\begin{equation}
 \boxed{
 A_n^{\mathrm{NLSM}}\big|_{\mathcal R_{n;m,k}}
 =A^{\mathrm{NLSM}}[1,m-1,m,\ldots,n]
 \begin{cases}
 \displaystyle
 \frac{A^{\mathrm{NLSM}}[1,\ldots,k]\,
       A^{\mathrm{ext}}[k-1,\ldots,m]}{X_{1k}},
       &k\ \text{even},\\[7pt]
 \displaystyle
 \frac{A^{\mathrm{ext}}[1,\ldots,k]\,
       A^{\mathrm{NLSM}}[k-1,\ldots,m]}{X_{k-1,m}},
       &k\ \text{odd}.
 \end{cases}}
 \label{eq:nlsmoddcore}
\end{equation}
Here $n$ is even, $m$ is odd, and each
mixed amplitude has scalar legs at its
first and last two displayed positions.
The lower internal channels are free,
so the product is generically nonzero.
At $(n,m,k)=(6,5,4)$ it gives
eq.~\eqref{eq:nlsmliftedstrict}.
\subsubsection{Exclusion of blob partitions for the branched trees}
\label{app:tree-zero-nonblob}
Write $t_{ab}=s_{ab}$ for pairs of vertices in the original tree and use
$s_S=\sum_{\{a,b\}\subset S}t_{ab}$. We take the $t_{ab}$ as independent
coordinates after eliminating
\[
 t_{12}=-\sum_{(a,b)\ne(1,2)}t_{ab}.
\]
The coefficient matrix $M_G$ below is that of the corresponding
new-zero equations in the main text.
To include every blob zero of ref.~\cite{CosmologicalWavefunctions}, augment the original
tree by joining its auxiliary vertex $q$ to every leaf. A blob partition
is a partition of the components left by a two-vertex cut into nonempty
sets $A,B$, with equations $c_{ab}=0$ for $a\in A$, $b\in B$.
Here complementary choices of $A$ describe the same partition.
Let $M_G$ be the coefficient matrix of the new zero and $B$ that of these
cross conditions, in the independent coordinates above. Conditions
involving $q$ are evaluated using
$t_{aq}=-\sum_{b\in V\setminus\{a\}}t_{ab}$ and $c_{ab}=-t_{ab}$.

For $222$ and $2222$, the only two-vertex cut is $\{3,q\}$.
For $222$, its three partitions each give
$\operatorname{rank}\binom{M_G}{B}=5>3$.
For $2222$, the four partitions separating one leaf from three give rank
$7$, and the three partitions separating two leaves from two give rank
$8$, each larger than $4$.
For $223$, the complete list is
\[
\begin{array}{c|c|c|c}
\text{cut}&A&B&\operatorname{rank}\binom{M_{223}}{B}\\ \hline
\{3,6\}&\{1\}&\{2,4,5\}&7\\
\{3,6\}&\{2\}&\{1,4,5\}&5\\
\{3,6\}&\{1,2\}&\{4,5\}&7\\
\{4,6\}&\{5\}&\{1,2,3\}&5\\
\{3,5\}&\{4\}&\{1,2,6\}&7
\end{array}
\]
The last partition has nonroot distinguished labels and imposes
$c_{14}=c_{24}=c_{46}=0$. Thus there are five blob partitions for $223$,
including four rooted ones. Every stacked rank exceeds $\operatorname{rank}M_G$,
so the equations of no blob zero are implied by the new locus. None of the
three new zeros is contained in any blob zero.

\subsubsection{Rigidity of ordinary five-point zeros}
Write $(a,b,c,d,e)=(X_{13},X_{14},X_{24},X_{25},X_{35})$.
For the ordinary zero $d=-a$, $b=a+c$, a nearby
rank-two linear locus has first-order changes
$b=a+c+\epsilon B$ and $d=-a+\epsilon D$, with
$B,D$ arbitrary linear forms in $(a,c,e)$.
The zero condition at order $\epsilon$ is
\begin{equation}\label{eq:ordinary-five-rigidity}
 ae\,B+(a+c)(c+e)\,D=0.
\end{equation}
The two quadratic factors are coprime, so the
linear form $D$ must vanish; then $B=0$. Hence the
ordinary five-point zero is isolated in the space of
rank-two linear zeros. Cyclicity covers the others.

\clearpage

\section{Supplementary checks for the five cubic zero families}
\label{app:cubic-regularity}

Here we check regularity and distinguish the five new families
from ordinary rectangular zeros of the ordered
$\operatorname{Tr}(\phi^3)$ amplitude.

\subsection{Conventions and family equations}

We work on the complex vector space $\K_n$ whose independent coordinates are
the physical planar channels
\begin{equation}
 X_{ab},\qquad 1\leq a<b\leq n,\quad b-a\geq2,
 \quad (a,b)\ne(1,n).
 \label{cubic:eq:channels}
\end{equation}
Its dimension is $d_n=n(n-3)/2$. Channel endpoints are unordered,
$X_{ab}=X_{ba}$. Boundary channels are zero, and no dimension-dependent
Gram relations are imposed. With $A_3=1$, the cubic
amplitude is the sum over triangulations of the ordered polygon,
\begin{equation}
 A_n=\sum_{T\in\operatorname{Tri}(1,\ldots,n)}
               \prod_{e\in T}\frac1{X_e}.
 \label{cubic:eq:triangulations}
\end{equation}
For an ordered subpolygon, $A[S]$ uses the parent channel on every internal
diagonal. A boundary edge of $S$ contributes no propagator; its parent
channel is not set to zero.

A linear subspace $Z\subset\K_n$ is a \emph{regular linear zero} if
$A_n|_Z=0$ as a rational identity and no physical channel vanishes
identically on $Z$. Thus $Z$ has a nonempty open subset on which every
propagator is finite. Let $W_Z\subset\K_n^*$ be its space of defining
linear equations and let $r(Z)=\dim W_Z$.

For later use define the exact splitting numerators
\begin{equation}
 R^{(j)}_{av}=X_{aj}+X_{j-1,v}-X_{av},\qquad
 1\leq a\leq j-2,\quad j+1\leq v\leq n.
 \label{cubic:eq:rho}
\end{equation}
In particular $R^{(j)}_{1n}=X_{1j}+X_{j-1,n}$, since $X_{1n}=0$.
For clarity, we restate the channel equations of Families 1--5 in
Section~\ref{sec:branches}. The notation below agrees with the main text.

\paragraph{Two-rectangle zeros $\Rz_{n;m,k}$.}
Let $5\leq m<n$ and $3\leq k\leq m-1$. Define
\begin{align}
 E_{ac}&=X_{ak}+X_{k-1,c}-\bar X_{ac}=0,
 &&1\leq a\leq k-2,\quad k+1\leq c\leq m,\label{cubic:eq:Rchild}\\
 S_{av}&=X_{am}+X_{m-1,v}-X_{av}=0,
 &&2\leq a\leq m-2,\quad m+1\leq v\leq n.\label{cubic:eq:Router}
\end{align}
Here $\bar X_{1m}=0$ and $\bar X_{ac}=X_{ac}$ otherwise. The bar is
essential: \eqref{cubic:eq:Rchild} describes an ordinary rectangular zero of
$A[1,\ldots,m]$, in which $(1,m)$ is a boundary edge. The physical channel
$X_{1m}$ of $A_n$ remains free.

\paragraph{Three-pair zeros $\Pz_n$.}
For $n\geq6$, impose
\begin{align}
 X_{13}+X_{26}=X_{24}+X_{35}=X_{46}+X_{15}&=0,\label{cubic:eq:Pseed}\\
 X_{a6}+X_{5v}-X_{av}&=0,
 &&a=2,3,4,\quad 7\leq v\leq n.\label{cubic:eq:Pextension}
\end{align}
The second line is empty at $n=6$.

\paragraph{Balanced-column zeros $\Cz_{n,j}$.}
For $3\leq j\leq n-3$, impose
\begin{equation}
 X_{j-1,j+1}+X_{j,j+2}=0,\qquad
 R^{(j)}_{av}=0\quad(v\ne j+2),
 \label{cubic:eq:Cfamily}
\end{equation}
with the row and column ranges in \eqref{cubic:eq:rho}.

\paragraph{Row--column zeros $\Xz_{n,j}$.}
For $5\leq j\leq n-3$, impose
\begin{align}
 X_{j-3,j-1}+X_{j-2,j}&=0,\qquad
 X_{j-1,j+1}+X_{j,j+2}=0,\nonumber\\
 R^{(j)}_{av}&=0\quad(a\ne j-3,\ v\ne j+2).
 \label{cubic:eq:Xfamily}
\end{align}

\paragraph{Opposite-end zeros $\Dz_n$.}
For $n\geq8$, impose
\begin{align}
 X_{24}+X_{35}&=0,\qquad X_{n-2,n}+X_{1,n-1}=0,\nonumber\\
 Q_v:=X_{13}+X_{2v}-X_{1v}&=0,
 \quad v\in\{4,\ldots,n\}\setminus\{5,n-2\}.
 \label{cubic:eq:Dfamily}
\end{align}

The ranks of the five families are
\begin{equation}
\begin{array}{c|c}
\text{Family}&r(Z)\\ \hline
\Rz_{n;m,k}&(k-2)(m-k)+(m-3)(n-m)\\
\Pz_n&3n-15\\
\Cz_{n,j}&(j-2)(n-j-1)+1\\
\Xz_{n,j}&(j-3)(n-j-1)+2\\
\Dz_n&n-3.
\end{array}
\label{cubic:eq:ranks}
\end{equation}

\subsection{Vanishing, independence of the equations and regularity}
\label{cubic:sec:regularity}

We recall why these subspaces are regular zeros.
The generic cubic exact splitting formula, eq.~\eqref{eq:rooted}, is
\begin{equation}
 A_n=\sum_{a=1}^{j-2}\sum_{v=j+1}^{n}
 R^{(j)}_{av}\,
 J[a,\ldots,j]\,J[j-1,\ldots,v]\,
 J[v,\ldots,n,1,\ldots,a],
 \label{cubic:eq:HOSF}
\end{equation}
where $J[s_1,\ldots,s_t]=A[s_1,\ldots,s_t]/X_{s_1s_t}$ for
$t\geq3$ and the full-cell outside factor is $J[n,1]=1$.
All lower amplitudes use the inherited-channel convention above.

For $\Rz_{n;m,k}$, take $j=m$. Equations \eqref{cubic:eq:Rchild} put the
first row's common factor $A[1,\ldots,m]$ on its ordinary rectangular
zero; \eqref{cubic:eq:Router} kills every other row. At six points the
$j=3$ formula proves \eqref{cubic:eq:Pseed}: the full-cell numerator is
$X_{13}+X_{26}$, and the other two terms contain the four-point factors
with numerators $X_{15}+X_{46}$ and $X_{24}+X_{35}$. For $n>6$, use
$j=6$: the first row contains this vanishing six-point amplitude, and
\eqref{cubic:eq:Pextension} kills the remaining rows. For $\Cz$ and $\Xz$,
the pair equations set the selected four-point column or row factors
to zero; every unselected term has $R^{(j)}_{av}=0$. Finally, for
$\Dz$ use $j=3$. Its two pair equations kill the terms at $v=5$ and
$v=n-2$, and the remaining terms have $Q_v=0$.
These are rational identities, not conclusions drawn from numerical
sampling.

The displayed equations are independent, and their solutions contain
regular points. We give the details because the residue proof will also
use the free coordinates.

\paragraph{Two-rectangle and three-pair families.}
For the child rectangle in \eqref{cubic:eq:Rchild}, set
$x_a=X_{ak}$ and $y_c=X_{k-1,c}$. Its full-cell equation sets
$y_m=-x_1$. Every other child equation solves a distinct cross-channel,
\begin{equation}
 X_{ac}=x_a+y_c,\qquad (a,c)\ne(1,m).
 \label{cubic:eq:child-param}
\end{equation}
In the last column this is $X_{am}=x_a-x_1$ for $a>1$.
Choose the remaining $x_a,y_c$ and the channels internal to the two
child polygons freely. No physical child channel is the zero linear
form. The extension equations then solve distinct new channels
$X_{av}=X_{am}+X_{m-1,v}$, with every $X_{m-1,v}$ free.
Thus they add exactly $(m-3)(n-m)$ independent equations and force no
parent channel to vanish. This proves the first rank in \eqref{cubic:eq:ranks}
and regularity.

For $\Pz_6$, solve \eqref{cubic:eq:Pseed} for
$X_{26}=-X_{13}$, $X_{35}=-X_{24}$ and $X_{46}=-X_{15}$.
All other six-point channels are free. For $n>6$, choose every
$X_{5v}$ freely and solve \eqref{cubic:eq:Pextension} for the distinct
$X_{av}$. This gives rank $3+3(n-6)$ and regularity.

\paragraph{Balanced-column and row--column families.}
Set $x_a=X_{aj}$ and $y_v=X_{j-1,v}$. In both families the full cell
is constrained, so $y_n=-x_1$. Every other constrained cell solves
$X_{av}=x_a+y_v$. The right pair equation solves
$X_{j,j+2}=-y_{j+1}$; the left pair equation, when present, solves
$X_{j-3,j-1}=-x_{j-2}$. These are different channels from the solved
cross-channels. The unconstrained cells and all remaining channels are
free. This proves the two ranks in \eqref{cubic:eq:ranks}. All solved
physical channels are nonzero linear forms in the free coordinates,
including $x_a-x_1$ in the last column with $a>1$.

\paragraph{Opposite-end family.}
Put $t=X_{13}$ and $f_v=X_{1v}$, with $f_n=0$. The $Q_v$ equations
solve the distinct channels $X_{2v}=f_v-t$. The pair equations solve
\begin{equation}
 X_{35}=t-f_4,\qquad X_{n-2,n}=-f_{n-1}.
 \label{cubic:eq:D-param}
\end{equation}
All remaining channels are free. These are $n-3$ independent
equations, and again none makes a physical channel the zero linear
form. In each family, a generic choice of the free coordinates avoids
the finite union of channel hyperplanes, proving regularity.

\subsection{Distinction from ordinary rectangular zeros}
\label{cubic:sec:newness}
A contraction is \emph{forced} on $Z$ if it vanishes identically on $Z$.
We denote the set of such nonadjacent cyclic contractions by
$\operatorname{Forced}(Z)$, counting each unordered pair once.

\paragraph{Two-rectangle and three-pair families.}
An ordinary rank-$r$ rectangle sets $r$ distinct nonadjacent contractions
$c_{ij}$ to zero. This remains true after any cyclic relabeling. We show
that a two-rectangle zero forces exactly $r-1$ such contractions to vanish,
whereas a three-pair zero forces exactly $r-3$. They therefore cannot equal
an ordinary rectangle.

To count all cyclic contractions, use the independent coordinates
\[
 F_\ell=X_{1\ell}\quad(3\leq\ell\leq n-1),\qquad
 c_{ij}\quad(1\leq i<j\leq n-1,\ j-i\geq2),
 \qquad F_2=F_n=0.
\]
These are a linear coordinate system on the abstract planar kinematic
space: repeated use of the rectangle relation gives the inverse
\begin{equation}\label{eq:novelty-fan-inverse}
 X_{ij}=F_j-F_{i+1}
       +\sum_{a=1}^{i-1}\sum_{b=i+1}^{j-1}c_{ab}.
\end{equation}
The nonadjacent contractions missing from this coordinate list are exactly
$c_{i,n}$ for $2\leq i\leq n-2$. Their expressions are
\begin{equation}\label{eq:novelty-wrap}
 c_{i,n}=F_{i+2}-F_i
       +\sum_{a=1}^{i-1}c_{a,i+1}
       -\sum_{b=i+2}^{n-1}c_{i,b}.
\end{equation}
Thus checking this list also checks every cyclic image of every ordinary
rectangle.

For $\mathcal R_{n;m,k}$, finite differences of its two sets of cumulative
conditions give exactly
\begin{align}
 c_{uv}&=0,
 &&1\leq u\leq k-2,\quad k\leq v\leq m-1,
      \quad(u,v)\ne(1,m-1),\nonumber\\
 c_{1,m-1}&=-F_m,\nonumber\\
 c_{uv}&=0,
 &&2\leq u\leq m-2,\quad m\leq v\leq n-1.
 \label{eq:novelty-R-mesh}
\end{align}
The exceptional equation is due to the child boundary convention
$\bar X_{1m}=0$: its child plaquette equals the parent plaquette plus
$X_{1m}$. Conversely, summing these plaquette equations recovers every
original cumulative condition. The two displayed zero blocks are disjoint
and contain
\[
 (k-2)(m-k)-1+(m-3)(n-m)=r(\mathcal R_{n;m,k})-1
\]
entries. Every unlisted coordinate $c_{ij}$ and every fan coordinate is
free. The exceptional $c_{1,m-1}=-F_m$ is not identically zero, since
$m<n$.
In \eqref{eq:novelty-wrap}, the only constrained plaquette that contributes
a fan term is $c_{1,m-1}$. Consequently the fan part of $c_{i,n}$ is
\[
 F_{i+2}-F_i-\delta_{i,m-2}F_m.
\]
For $i=m-2$ this is $-F_{m-2}$; for every other allowed $i$ it is
$F_{i+2}-F_i$. Each is a nonzero linear form in the independent fan
coordinates, including the endpoints $F_2=F_n=0$. Hence no wrapping
contraction vanishes identically, proving the exact count.

For $\mathcal P_n$, its extension conditions are equivalent to
\begin{equation}\label{eq:novelty-P-strip}
 c_{uv}=0,\qquad u=2,3,4,\quad 6\leq v\leq n-1.
\end{equation}
The three seed equations, expressed in the same coordinates, solve
\begin{equation}\label{eq:novelty-P-pivots}
 c_{15}=-F_6-c_{13}-c_{14},\qquad
 c_{24}=F_3-F_5-c_{13}-c_{14},\qquad
 c_{35}=c_{13}+c_{14}-c_{25}.
\end{equation}
Here $F_6=0$ when $n=6$. All fan coordinates and all other nonwrapping
plaquettes outside \eqref{eq:novelty-P-strip} are free. In particular
$c_{13},c_{14},c_{25}$ are independent, so none of the three expressions
in \eqref{eq:novelty-P-pivots} is identically zero. Thus the strip gives
exactly $3(n-6)=r(\mathcal P_n)-3$ vanishing nonwrapping contractions.
For the wrapping contractions, \eqref{eq:novelty-wrap} reduces to
\begin{align}
 c_{2,n}&=F_4-F_3+F_5+2c_{13}+c_{14}-c_{25},\nonumber\\
 c_{3,n}&=-2c_{13}-c_{14}+c_{25},\nonumber\\
 c_{4,n}&=-F_4.
 \label{eq:novelty-P-wrap-small}
\end{align}
All three are nonzero linear forms in free coordinates. If
$5\leq i\leq n-2$, no pivot plaquette in
\eqref{eq:novelty-P-pivots} occurs in \eqref{eq:novelty-wrap}; its fan
part is therefore $F_{i+2}-F_i$, which is again nonzero. This proves the
claimed all-cyclic count and the distinction from every ordinary
rectangle.

\paragraph{Balanced-column, row--column and opposite-end families.}
We now count the forced contractions for the three families in
\eqref{cubic:eq:Cfamily}--\eqref{cubic:eq:Dfamily}. As above, each
unordered nonadjacent cyclic pair is counted once; in particular the
count includes $c_{u,n}$ for $2\leq u\leq n-2$.

\emph{The generating rectangle of the cover families.}
Put $p=j-2$, $t=j+2$, and, for the row--column family only, $s=j-3$.
Use
\[
 x_a=X_{aj}\quad(1\leq a\leq p),\qquad
 y_v=X_{j-1,v}\quad(j+1\leq v\leq n),\qquad y_n=-x_1.
\]
Extend $x_{j-1}=0$, $y_j=0$, and write
\[
 \rho_{av}=x_a+y_v-X_{av},\qquad
 \rho_{a,j}=\rho_{j-1,v}=0.
\]
In the balanced-column family, the free residuals are precisely
$\rho_{a,t}$, $1\leq a\leq p$; all other $\rho$ vanish.
In the row--column family, the free residuals are those in row $s$
or column $t$, with their intersection counted once. All other
$\rho$ vanish. These residuals may replace the corresponding free
cross-channels as independent coordinates. The pair equations merely set
\[
 X_{j,j+2}=-y_{j+1},\qquad
 X_{s,j-1}=-x_{s+1}\quad\text{in the row--column family};
\]
neither imposes a relation among the free residuals.

For $1\leq u\leq p$, $j\leq v\leq n-1$,
\[
 c_{uv}=-\rho_{uv}-\rho_{u+1,v+1}
             +\rho_{u,v+1}+\rho_{u+1,v}.
\]
For the balanced-column family this vanishes exactly when
$v\notin\{j+1,j+2\}$. At the two released columns its values are
$\rho_{u,t}-\rho_{u+1,t}$ and its negative; these are nonzero
linear forms. For the row--column family it vanishes exactly when
\[
 u\notin\{j-4,j-3\},\qquad v\notin\{j+1,j+2\}.
\]
Indeed, every other plaquette contains at least one independent free
residual, each with coefficient $+1$ or $-1$ and occurring only once.

\emph{No further cover-family contraction vanishes.}
The following argument checks the full cyclic set, not just the displayed
rectangle. Partition the labels as
$A=\{1,\ldots,j-2\}$, $\{j-1\}$,
$B=\{j,\ldots,n-1\}$, $\{n\}$.
For a nonadjacent pair within $A$, the free channel $X_{uv}$ occurs
once with coefficient $+1$ in $c_{uv}$. For a pair within $B$, the
free channel $X_{u+1,v+1}$ does so. Neither coordinate is eliminated
by either pair equation.

For $u\in A$ paired with $j-1$, put $Z_u=X_{u,j-1}$.
Here $u\leq j-3$ and
\[
 c_{u,j-1}=Z_u-Z_{u+1}+x_{u+1}-x_u.
\]
The coordinate $Z_u$ is free unless this is the row--column family
and $u=s$; in that exceptional case the expression is $-x_s$,
because $Z_s=-x_{s+1}$ and $Z_{s+1}=0$.
For $j-1$ paired with $v\in B$, necessarily $v\geq j+1$, and
\[
 c_{j-1,v}=y_v-y_{v+1}+X_{j,v+1}-X_{jv}.
\]
The channel $X_{j,v+1}$ is free unless $v=j+1$; in that case the
expression is $-y_{j+2}$. The pair of distinguished labels gives
\[
 c_{j-1,n}=-X_{1,j-1}-X_{jn},
\]
which contains the free channel $X_{jn}$.

It remains to check the cyclic entries containing $n$. Put
$F_a=X_{1a}$. For $u\in A\setminus\{1\}$,
\[
 c_{u,n}=x_u-x_{u+1}+F_{u+1}-F_u
              -\rho_{u,n}+\rho_{u+1,n}.
\]
In the balanced-column family the residual terms vanish, and the
independent coordinate $x_u$ has coefficient $+1$. In the row--column
family the same argument applies except when $u=s-1$ or $s$; then
the independent residual $\rho_{s,n}$ occurs with coefficient $+1$
or $-1$. Finally, for $u\in B\setminus\{n-1\}$, the expression
\[
 c_{u,n}=X_{un}-X_{u+1,n}+F_{u+1}-F_u
\]
contains $X_{un}$ with coefficient $+1$. This channel is free and
does not occur in the reduced fan expressions $F_u,F_{u+1}$.
These cases exhaust every nonadjacent cyclic pair outside $A\times B$.
Thus the numbers of forced entries are exactly
\[
 \#\operatorname{Forced}(\mathcal C_{n,j})=(j-2)(n-j-2),\qquad
 \#\operatorname{Forced}(\mathcal X_{n,j})=(j-4)(n-j-2).
\]

\emph{Opposite-end family.}
Put $f_v=X_{1v}$, $f_n=0$, $t=X_{13}$, and
$Q_v=t+X_{2v}-f_v$, with $Q_3=Q_n=0$.
The only nonzero free $Q$'s are $\alpha=Q_5$ and
$\beta=Q_{n-2}$. They are independent because they replace the two
unconstrained channels $X_{25},X_{2,n-2}$. The pair equations set
\[
 X_{35}=t-f_4,\qquad X_{n-2,n}=-f_{n-1}.
\]
For the skinny generating rectangle,
\[
 c_{1v}=Q_{v+1}-Q_v,\qquad 3\leq v\leq n-1.
\]
At $n\geq9$, its four released entries have values
$\alpha,-\alpha,\beta,-\beta$, at indices
$v=4,5,n-3,n-2$, respectively. At $n=8$, the three released entries
are $\alpha,\beta-\alpha,-\beta$, at $v=4,5,6$.
All are nonzero linear forms.

For completeness, no cyclic entry outside that row vanishes. For
$3\leq u<v\leq n-1$, the free channel $X_{uv}$ is a witness,
except at $(u,v)=(3,5)$, where $X_{46}$ is a free witness.
For $c_{2v}$ with $4\leq v\leq n-1$, the free channel
$X_{3,v+1}$ is a witness except at $v=4$, where
$c_{24}=-X_{25}$. The remaining distinguished pair has
$c_{2n}=-X_{3n}$. For $3\leq u\leq n-3$, the cyclic entry
$c_{u,n}$ contains the free channel $X_{un}$ with coefficient $+1$;
the last one is
\[
 c_{n-2,n}=-f_{n-2},
\]
also nonzero. These witnesses remain independent after both pair
equations, since the eliminated channels are only $X_{35}$ and
$X_{n-2,n}$, whose replacement expressions use fan channels.
Therefore the complete forced set is exactly
\[
 c_{1v}=0,\qquad 3\leq v\leq n-1,\qquad
 v\notin\{4,5,n-3,n-2\},
\]
of cardinality $2$ at $n=8$ and $n-7$ for $n\geq9$.

\emph{Comparison with ordinary rectangles.}
An ordinary rectangular zero of rank $r$ forces $r$ distinct
nonadjacent cyclic entries $c_{uv}$ to vanish. In the three families,
using the ranks in \eqref{cubic:eq:ranks}, the difference between the rank and the complete forced-entry count is
\[
 j-1\quad(\mathcal C),\qquad n-3\quad(\mathcal X),\qquad
 3\ (n=8),\ 4\ (n\geq9)\quad(\mathcal D).
\]
Each is positive. Therefore none of these zeros equals an ordinary
rectangle of the same rank, and equality with a rectangle of another
rank is impossible by dimension.

\endgroup

\clearpage
\providecommand{\href}[2]{#2}\begingroup\raggedright\endgroup
\end{document}